\def\dref#1{(\ref{#1})}
\def\eq{\displaystyle\stackrel\triangle=}
\def\xra{\xrightarrow}
\newcommand{\diag}{\ensuremath{\mathrm{diag}}}
\DeclareMathOperator*{\argmin}{arg\,min}
\newenvironment{proof}[1][Proof]{\noindent\textbf{#1.} }{\hfill \rule{0.5em}{0.5em}}
\begin{document}

\begin{frontmatter}
\runtitle{}

\title{On Asymptotic Properties of Hyperparameter Estimators
    for Kernel-based Regularization Methods
\thanksref{footnoteinfo}}

\thanks[footnoteinfo]{This work was supported by the Thousand Youth Talents Plan funded by
the central government of China, the Shenzhen Project Ji-20160207,
the President's grant under contract No. PF. 01.000249 and the
Start-up grant under contract No. 2014.0003.23 funded by the Chinese
University of Hong Kong, Shenzhen, as well as by a research grant
for junior researchers under contract No. 2014-5894, funded by
Swedish Research Council. }

\author[LIN]{Biqiang Mu}\ead{biqiang.mu@liu.se},
\author[CUHK]{Tianshi Chen}\ead{tschen@cuhk.edu.cn},
and \author[LIN]{Lennart Ljung}\ead{ljung@isy.liu.se}

\address[LIN]
{Division of Automatic Control, Department of Electrical Engineering, Link\"oping University, Link\"oping SE-58183, Sweden}
\address[CUHK]{School of Science and Engineering, The Chinese University of Hong Kong, Shenzhen, China}

\begin{keyword}
    Linear system identification, Gaussian process regression,
    Kernel-based regularization, Empirical Bayes, Stein's unbiased risk
    estimators, Oracle estimators, Asymptotic analysis
\end{keyword}

\begin{abstract}
    The kernel-based regularization method has two core issues: kernel
    design and hyperparameter estimation. In this paper, we focus on the second issue and study the
    properties of several hyperparameter estimators including the
    empirical Bayes (EB) estimator, two Stein's unbiased risk
    estimators (SURE) and their corresponding Oracle counterparts, with an emphasis on the asymptotic properties of these hyperparameter estimators. To this goal, we first derive and
    then rewrite the first order optimality conditions of these hyperparameter estimators, leading to several insights on these
    hyperparameter estimators. Then we show that as the number of data
    goes to infinity, the two SUREs converge to the best hyperparameter
    minimizing the corresponding mean square error, respectively, while
    the more widely used EB estimator converges to another best
    hyperparameter minimizing the expectation of the EB estimation criterion. This indicates that
    the two SUREs are asymptotically optimal but the EB estimator is
    not. Surprisingly, the convergence rate of two SUREs is slower than
    that of the EB estimator, and moreover, unlike the two SUREs, the EB
    estimator is independent of the convergence rate of $\Phi^T\Phi/N$ to
    its limit, where $\Phi$ is the regression matrix and $N$ is the
    number of data. A Monte Carlo simulation is provided to demonstrate
    the theoretical results.
\end{abstract}

\end{frontmatter}

\section{Introduction}

The kernel-based regularization methods (KRM) from machine learning
and statistics were first introduced to the system identification
community in \cite{Pillonetto2010} and then further developed in
\cite{Pillonetto2011,Chen2012,Chen2014p}. These methods attract
increasing attention in the community and have become a complement
to the classical maximum likelihood/prediction error methods
(ML/PEM) \citep{Pillonetto2015,Chen2012,Ljung2015}. In particular,
KRM may have better average accuracy and robustness than ML/PEM when
the data is short and/or has low signal-to-noise ratio (SNR).

There are two core issues for KRM: kernel design and hyperparameter
estimation. The former is regarding how to parameterize the kernel matrix
with a parameter vector, called hyperparameter, to embed the prior
knowledge of the system to be identified, and the latter is
regarding how to estimate the hyperparameter based on the data such
that the resulting model estimator achieves a good bias-variance
tradeoff or equivalently, suitably balances the adherence to the
data and the model complexity.

The kernel design plays a similar role as the model structure design
for ML/PEM and determines the underlying model structure for KRM. In
the past few years, many efforts have been spent on this issue and
several kernels have been invented to embed various types of prior
knowledge, e.g.,
\cite{Pillonetto2010,Pillonetto2011,Chen2012,Chen2014p,Dinuzzo2015,Chen2016,Carli2017,Marconato2016,Zorzi2017,Pillonetto2016}.
In particular, two systematic kernel design methods (one is from a
machine learning perspective and the other one is from a system
theory perspective) were developed in \cite{Chen2017} by embedding
the corresponding type of prior knowledge.

The hyperparameter estimation plays a similar role as the model
order selection in ML/PEM and its essence is to determine a suitable
model complexity based on the data. As mentioned in the survey of
KRM \cite{Pillonetto2014}, many methods can be used for
hyperparameter estimation, such as the cross-validation (CV),
empirical Bayes (EB), $C_p$ statistics and Stein's unbiased risk
estimator (SURE) and etc. In contrast with the numerous results on
kernel design, there are however few results on hyperparameter
estimation except
\cite{Chen2014p,Aravkin2012,Aravkin2012c2,Aravkin2014,Pillonetto2015}.
In \cite{Aravkin2012,Aravkin2012c2,Aravkin2014}, two types of
diagonal kernel matrices are considered. When $\Phi^T\Phi/N$ is an
identity matrix, where $\Phi$ is the regression matrix and $N$ is
the number of data, the optimal hyperparameter estimate of the EB
estimator has explicit form and is shown to be consistent in terms
of the mean square error (MSE). When $\Phi^T\Phi/N$ is not an
identity matrix, the EB estimator is shown to asymptotically
minimize a weighted MSE. In \cite{Chen2014p}, the EB with linear
multiple kernel is shown to be a difference of convex programming
problem and moreover, the optimal hyperparameter estimate is sparse.
In \cite{Pillonetto2015}, an unbiased estimator of MSE was
introduced and used as a measure to evaluate the performance of the
EB estimator and two SUREs: one for impulse response reconstruction
and the other one for output prediction, and the robustness issue by
introducing the so-called excess degree of freedom was considered.

In this paper, we study the properties of the EB estimator and two
SUREs in \cite{Pillonetto2015} with an emphasis on the asymptotic
properties of these hyperparameter estimators. In particular, we are
interested in the following questions: When the number of data goes
to infinity, \begin{itemize}
\item[1)] what will be the best kernel matrix, or equivalently, the best value of the
hyperparameter?

\item[2)] which estimator (method) shall be chosen such
that the hyperparameter estimate tends to this best value in the
given sense?

\item[3)] what will be the convergence rate
of that the hyperparameter estimate tends to this best value? and
what factors does this rate depend on?

%
%

\end{itemize}

In order to answer these questions, we employ the regularized least
squares method for FIR model estimation in \cite{Chen2012}. As a
motivation, we first show that the regularized least squares
estimate can have smaller MSE than the least squares estimate for
any data length, if the kernel matrix is chosen carefully. We then
derive the first order optimality conditions of these hyperparameter
estimators and their corresponding Oracle counterparts (relying on
the true impulse response, see Section \ref{sec:hyper_estimator} for
details). These first order optimality conditions are then rewritten
in a way to better expose their relations, leading to several
 insights on these hyperparameter estimators. For instance,
one insight is that for the Oracle estimators, for any data length,
and without structure constraints on the kernel matrix, the optimal
kernel matrices are same as the one in \cite{Chen2012} and equal to
the outer product of the vector of the true impulse response and its
transpose. Moreover, explicit solutions of the optimal
hyperparameter estimate for two special cases are derived
accordingly. Then we turn to the asymptotic analysis of these
hyperparameter estimators. Regardless of the parameterization of the
kernel matrix, we first show that the two SUREs actually converge to
the best hyperparameter minimizing the corresponding MSE,
respectively, as the number of data goes to infinity, while the more
widely used EB estimator converges to the best hyperparameter
minimizing the expectation of the EB estimation criterion. In
general, these best hyperparameters are different from each other
except for some special cases. This means that the two SUREs are
asymptotically optimal but the EB estimator is not. We then show
that the convergence rate of two SUREs is slower that of the EB
estimator, and moreover, unlike the two SUREs, the EB estimator is
independent of the convergence rate of $\Phi^T\Phi/N$ to its limit.

The remaining parts of the paper is organized as follows. In Section
\ref{sec2}, we recap the regularized least squares method for FIR
model estimation and introduce two types of MSE. In Section
\ref{sec3}, we introduce a couple of widely used parameterizations
of kernel matrix and six hyperparameter estimators, including the EB
estimator, two SUREs, and their corresponding Oracle counterparts.
In Section \ref{sec4}, we derive the first order optimal conditions
of these hyperparameter estimators and put them in a form that
clearly shows their relation, leading to several insights. In
Section \ref{sec5}, we give the asymptotic analysis of these
hyperparameter estimators, including the asymptotic convergence and
the corresponding convergence rate. In Section \ref{sec6}, we
illustrate our theoretical results with a Monte Carlo simulation.
Finally, we conclude this paper in Section \ref{sec7}. All proofs of
the theoretical results (propositions, corollaries and theorems) are
postponed to the Appendix.

\section{Regularized Least Squares Approach for FIR Model Estimation}
\label{sec2}
Consider a single-input single-output linear discrete-time invariant, stable and causal system
\begin{align}
y(t) = G_0(q^{-1})u(t) + v(t),\ t=1,\dots, N\label{sys}
\end{align}
where $t$ is the time index, $y(t),u(t),v(t)$ are the output, input and disturbance of the system at time  $t$, respectively,   $G_0(q^{-1})$ is the transfer function of the system and $q^{-1}$ is the backshift operator: $q^{-1} u(t)=u(t-1)$. Assume that the input $u(t)$ is known (deterministic) and the input-output data are collected at time instants $t=1,\cdots,N$, and moreover, the disturbance $v(t)$ is a zero mean white noise with variance $\sigma^2>0$. The problem is to estimate a model for $G_0(q^{-1})$  as well as possible based on the the available data $\{u(t-1),y(t)\}_{t=1}^N$.

The transfer function $G_0(q^{-1})$ can be written as
\begin{align}\label{eq:truesys}G_0(q^{-1}) =\sum_{k=1}^\infty g_k^0 q^{-k},\end{align}where $g_k^0,k=1,\cdots,\infty$ form the impulse response of the system. Since the impulse response of a stable linear system decays exponentially, it is possible to truncate the infinite impulse response at a sufficiently high order, leading to the finite impulse response (FIR) model:
\begin{align}\label{eq:FIR}
G(q^{-1}) =\sum_{k=1}^ng_k q^{-k},~~\theta=[g_1,\cdots,g_n]^T \in \mathbb{R}^n.
\end{align}
With the FIR model (\ref{eq:FIR}), system \eqref{sys} is now written as
\begin{align}
\nonumber
y(t)= \phi^T(t) \theta + v(t),\  t=1,\dots, N
\end{align}
where $\phi(t)=[u(t-1),\cdots,u(t-n)]^T \in \mathbb{R}^n$,
and its matrix-vector form is
\begin{align}
&\hspace{2em}Y=\Phi \theta + V, ~\mbox{where}\label{firls}\\
\nonumber
&Y=[y(1)~y(n+2)\cdots~y(N)]^T\\
\nonumber
&\Phi=[\phi(1)~\phi(n+2)~\cdots~\phi(N)]^T\\
\nonumber
&V=[v(1)~v(n+2)~\cdots~v(N)]^T.
\end{align}
The well-known least squares (LS) estimator
\begin{subequations}
    \begin{align}
    \widehat{\theta}^{\rm LS}&=\argmin_{\theta \in \mathbb{R}^n}\|Y-\Phi\theta\|^2\label{lsa}\\
    &= (\Phi^T\Phi)^{-1}\Phi^T Y,\label{ls}
    \end{align}
\end{subequations} where $\|\cdot\|$ is the Euclidean norm,
is unbiased but may have large variance and mean square error (MSE) (e.g., when the input is low-pass filtered white noise). The large variance can be mitigated if  some bias is allowed and traded for smaller variance and smaller MSE.

One possible way to achieve this goal is to add a regularization
term $\sigma^2\theta^TP^{-1}\theta$ in the LS criterion \dref{lsa},
leading to the regularized least squares (RLS) estimator:
\begin{subequations}\label{eq:rls}
    \begin{align}
    \widehat{\theta}^{\rm R}=&\argmin_{\theta \in \mathbb{R}^n}\|Y-\Phi\theta\|^2 + \sigma^2\theta^TP^{-1}\theta\\
    =&P\Phi^T(\Phi P \Phi^T + \sigma^2 I_N)^{-1}Y \label{rls}
    \end{align}
\end{subequations}
where $P$ is positive semidefinite and is called the kernel matrix
($\sigma^2P^{-1}$ is often called the regularization matrix), and
$I_N$ is the $N$-dimensional identity matrix.

\begin{rem}\label{rmk1}
 As well known, the RLS estimator (\ref{rls}) has a Bayesian interpretation. Specifically,  assume that $\theta$ and $v(t)$ are independent and Gaussian distributed with
    \begin{align}\label{assumption}
    \theta \sim \mathscr{N}(0,P), \quad v(t) \sim \mathscr{N}(0,\sigma^2),
    \end{align} where $P$ is the prior covariance matrix. Then  $\theta$ and $Y$ are jointly Gaussian distributed and moreover, the posterior distribution of $\theta$ given $Y$ is
    \begin{align*}
    &\theta | Y  \sim \mathscr{N}(\widehat{\theta}^{\rm R},\widehat{P}^{\rm R})\\
    &\widehat{\theta}^{\rm R} = P\Phi^T(\Phi P \Phi^T + \sigma^2 I_N)^{-1}Y \\
    &\widehat{P}^{\rm R} = P - P\Phi^T(\Phi P \Phi^T + \sigma^2 I_N)^{-1}\Phi P.
    \end{align*}
 \end{rem}

Two types of MSE could be used to evaluate the performance of the
RLS estimator (\ref{rls}). The first one is the MSE related to the
impulse response reconstruction, see e.g.,
\cite{Chen2012,Pillonetto2015},
\begin{align}
{\rm MSEg}(P) = E(\|\widehat{\theta}^{\rm R}(P)-\theta_0\|^2),
\label{mseg}
\end{align}
where $E(\cdot)$ is the mathematical expectation and $\theta_0=
[g_1^0,\cdots,g_n^0]^T$ with $g_i^0$, $i=1,\dots,n$, defined in
(\ref{eq:truesys}). The second one is the MSE related to output
prediction, see e.g., \cite{Pillonetto2015},
\begin{align}
{\rm MSEy}(P)=E\left[\sum_{t=1}^N
\big(\phi^T(t)\theta_0+v^*(t)-\widehat{y}(t)\big)^2\right]\label{msey},
\end{align}
where $\widehat{y}(t)=\phi^T(t)\widehat{\theta}^{\rm R}(P)$ and $v^*(t)$ is an independent copy of the noise $v(t)$.
Interestinly, the two MSEs (\ref{mseg}) and (\ref{msey}) are related
with each other through \begin{align} {\rm MSEy}(P) \!=\! {\rm
Tr}\big(E(\widehat{\theta}^{\rm R} \!-\!\theta_0 ) (
\widehat{\theta}^{\rm R}\! -\!\theta_0)^T\Phi^T\Phi\big) \!+
\!N\sigma^2, \!\label{rgy}
\end{align} where $\rm
Tr(\cdot)$ is the trace of a square matrix. Moreover, they have
explicit expressions,
which are given in the following proposition.
\begin{prop}
    \label{prop1}
    For a given kernel matrix $P$, the two MSEs (\ref{mseg}) and (\ref{msey}) take the following form
\begingroup
\allowdisplaybreaks
\begin{align}
\nonumber
&{\rm MSEg}(P)=\|P\Phi^TQ^{-1}\Phi \theta_0 - \theta_0\|^2\\
&\hspace{6em}+\sigma^2{\rm Tr}(P\Phi^T Q^{-1} Q^{-T}\Phi P^T)
\label{msege}\\
\nonumber
&{\rm MSEy}(P)=\|\Phi P\Phi^TQ^{-1}\Phi \theta_0 - \Phi\theta_0\|^2 +N\sigma^2 \\
&\hspace{6em}+\!\sigma^2{\rm Tr}(\Phi P\Phi^T Q^{-1} Q^{-T}\Phi P^T\Phi^T  )\label{mseye}\\
& Q =  \Phi P \Phi^T + \sigma^2 I_N.\label{eq:Q}
\end{align}
\endgroup
\end{prop}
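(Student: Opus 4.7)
The plan is to work directly from the closed-form $\widehat{\theta}^{\rm R}(P) = P\Phi^T Q^{-1} Y$ with $Q = \Phi P \Phi^T + \sigma^2 I_N$, substitute the data-generating model $Y = \Phi \theta_0 + V$, and then perform a standard bias-variance decomposition.

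First, I would split $\widehat{\theta}^{\rm R} - \theta_0$ into a deterministic bias term and a stochastic term. Writing
\begin{align*}
\widehat{\theta}^{\rm R} - \theta_0 = \underbrace{(P\Phi^T Q^{-1}\Phi - I)\theta_0}_{\text{bias}} + \underbrace{P\Phi^T Q^{-1} V}_{\text{zero-mean noise}},
\end{align*}
and using $E(V)=0$, $E(VV^T)=\sigma^2 I_N$, gives
\begin{align*}
{\rm MSEg}(P) &= \|(P\Phi^T Q^{-1}\Phi - I)\theta_0\|^2 + E\|P\Phi^T Q^{-1} V\|^2\\
&= \|P\Phi^T Q^{-1}\Phi \theta_0 - \theta_0\|^2 + \sigma^2{\rm Tr}(P\Phi^T Q^{-1}Q^{-T}\Phi P^T),
\end{align*}
which is exactly \eqref{msege}. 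The trace identity $E\|Ax\|^2 = {\rm Tr}(A\,\mathrm{Cov}(x)A^T)$ applied to $x=V$, $A=P\Phi^T Q^{-1}$ is the only nontrivial manipulation.

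For \eqref{mseye} I would first establish the bridge formula \eqref{rgy}. Since $v^*(t)$ is independent of $V$ (hence of $\widehat{\theta}^{\rm R}$) and zero-mean with variance $\sigma^2$, expanding the squared residual gives
\begin{align*}
E\bigl[(\phi^T(t)(\theta_0 - \widehat{\theta}^{\rm R}) + v^*(t))^2\bigr] = E\bigl[\phi^T(t)(\widehat{\theta}^{\rm R} - \theta_0)(\widehat{\theta}^{\rm R} - \theta_0)^T\phi(t)\bigr] + \sigma^2.
\end{align*}
Summing over $t=1,\dots,N$ and using cyclicity of the trace together with $\sum_t \phi(t)\phi^T(t) = \Phi^T\Phi$ yields \eqref{rgy}. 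Substituting the bias-variance decomposition of ${\rm MSEg}(P)$ derived above into \eqref{rgy} and then using the cyclic property of the trace again to bring the $\Phi^T\Phi$ factor back inside the norm and the trace gives
\begin{align*}
{\rm MSEy}(P) &= (\Phi\theta_0 - \Phi P\Phi^T Q^{-1}\Phi\theta_0)^T(\Phi\theta_0 - \Phi P\Phi^T Q^{-1}\Phi\theta_0)\\
&\quad + \sigma^2{\rm Tr}(\Phi P\Phi^T Q^{-1}Q^{-T}\Phi P^T \Phi^T) + N\sigma^2,
\end{align*}
which is \eqref{mseye}.

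None of the steps are genuinely difficult; the only place requiring care is the bridge formula \eqref{rgy}, specifically isolating the independent contribution of $v^*(t)$ and then recognizing $\sum_t\phi(t)\phi^T(t)=\Phi^T\Phi$ so that the trace identity yields a compact expression. Everything else is linear algebra: substituting $Y=\Phi\theta_0+V$, exploiting $E(V)=0$ and $E(VV^T)=\sigma^2 I_N$, and applying ${\rm Tr}(AB)={\rm Tr}(BA)$.
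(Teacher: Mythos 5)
Your proof is correct: the bias--variance split of $\widehat{\theta}^{\rm R}-\theta_0$ after substituting $Y=\Phi\theta_0+V$, the identity $E\|AV\|^2=\sigma^2\mathrm{Tr}(AA^T)$, and the derivation of the bridge formula \eqref{rgy} via independence of $v^*(t)$ and $\sum_t\phi(t)\phi^T(t)=\Phi^T\Phi$ all check out and yield exactly \eqref{msege} and \eqref{mseye}. The paper omits the proof of Proposition \ref{prop1} as straightforward, and your argument is the standard computation the authors evidently had in mind.
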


\subsection{RLS estimator can outperform LS estimator}
It is interesting to investigate whether the RLS estimator
(\ref{rls})  with a suitable choice of the kernel matrix $P$ can
have smaller MSEs (\ref{mseg}) and (\ref{msey}) than the LS
estimator (\ref{ls}). The answer is affirmative for MSEg
(\ref{mseg}) and for the ridge regression case, where
$P^{-1}=(\beta/\sigma^2) I_n$ with $\beta>0$,
\cite{Hoerl1970,Theobald1974}. In what follows, we further show that
this property also holds for more general $P$ for MSEg (\ref{mseg})
and MSEy (\ref{msey}).
\begin{prop}
    \label{th1}
Consider the RLS estimator \dref{rls} and the LS estimator
(\ref{ls}). Suppose that $P^{-1}=\beta A/\sigma^2$, where $\beta>0$
and $A$ is positive  semidefinite. Then for a given $A$, there exits
$\beta>0$ such that \dref{rls} has a smaller MSEg (\ref{mseg}) and
MSEy (\ref{msey}) than (\ref{ls}). Moreover, if $A$ is positive
definite, then \dref{rls} has a smaller MSEg (\ref{mseg}) and MSEy
(\ref{msey}) than (\ref{ls}) whenever
$0<\beta<2\sigma^2/(\theta_0^T\!A\theta_0)$.
\end{prop}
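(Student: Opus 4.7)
The plan is to work with the full MSE matrix
\[
\Sigma(\beta)\equiv E[(\widehat\theta^{\rm R}-\theta_0)(\widehat\theta^{\rm R}-\theta_0)^T]
\]
and establish the stronger matrix inequality $\Sigma^{\rm LS}-\Sigma(\beta)\succ 0$ in the positive-definite order. Such a single inequality yields both ${\rm MSEg}^{\rm LS}>{\rm MSEg}(\beta)$ (by taking trace) and ${\rm MSEy}^{\rm LS}>{\rm MSEy}(\beta)$ (by taking trace against $\Phi^T\Phi$, through \dref{rgy}), so the two claimed comparisons come out of one linear-algebra statement. First I would rewrite \dref{rls} via the matrix inversion lemma as $\widehat\theta^{\rm R}=(H+\beta A)^{-1}\Phi^T Y$ with $H=\Phi^T\Phi$, so that the bias is $-\beta W^{-1}A\theta_0$ and the variance is $\sigma^2 W^{-1}HW^{-1}$, where $W\equiv H+\beta A$. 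Sandwiching $W^{-1}$ out on both sides of $\Sigma^{\rm LS}-\Sigma(\beta)$ and using $WH^{-1}W=H+2\beta A+\beta^2AH^{-1}A$ collapses the inner bracket to
\[
B(\beta)\equiv 2\beta\sigma^2A+\beta^2\sigma^2AH^{-1}A-\beta^2A\theta_0\theta_0^TA,
\]
so the whole question reduces to positivity of $B(\beta)$.

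For the sharp range when $A\succ 0$, I would factor $B(\beta)=A^{1/2}[\,\cdot\,]A^{1/2}$, discard the PSD term $\beta^2\sigma^2A^{1/2}H^{-1}A^{1/2}$, and reduce to the rank-one inequality $2\beta\sigma^2I\succ\beta^2(A^{1/2}\theta_0)(A^{1/2}\theta_0)^T$; since the right-hand side has operator norm $\beta^2\theta_0^TA\theta_0$, this is exactly $0<\beta<2\sigma^2/(\theta_0^TA\theta_0)$. Invertibility of $A^{1/2}$ and of $W$ then lifts $B(\beta)\succ 0$ to $\Sigma^{\rm LS}-\Sigma(\beta)\succ 0$ and delivers both strict MSE inequalities. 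For the pure existence claim (only $A\succeq 0$ and $A\ne 0$), I would Taylor-expand in $\beta$ at $0$:
\[
{\rm MSEg}^{\rm LS}-{\rm MSEg}(\beta)={\rm Tr}(B(\beta)W^{-2})=2\beta\sigma^2{\rm Tr}(AH^{-2})+O(\beta^2),
\]
and analogously ${\rm MSEy}^{\rm LS}-{\rm MSEy}(\beta)=2\beta\sigma^2{\rm Tr}(AH^{-1})+O(\beta^2)$; both leading coefficients are strictly positive because $H^{-1},H^{-2}\succ 0$ and $A$ is a nonzero PSD matrix, so any sufficiently small $\beta>0$ works.

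The main obstacle is the clean bracket collapse $WH^{-1}W=H+2\beta A+\beta^2AH^{-1}A$. It is what allows the squared-bias and variance contributions to be compared against $\Sigma^{\rm LS}$ simultaneously, and what reduces the sharp-threshold question to a one-parameter rank-one perturbation problem; everything else is then routine algebra.
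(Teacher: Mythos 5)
Your proposal is correct and follows essentially the same route as the paper's proof: the key step in both is the factorization $\Sigma^{\rm LS}-\Sigma(\beta)=W^{-1}\bigl(2\beta\sigma^2A+\beta^2\sigma^2AH^{-1}A-\beta^2A\theta_0\theta_0^TA\bigr)W^{-1}$ with $W=\Phi^T\Phi+\beta A$, followed by the same rank-one eigenvalue argument giving the threshold $2\sigma^2/(\theta_0^TA\theta_0)$, and your first-order Taylor expansion for the existence claim is just the paper's one-sided derivative computation at $\beta=0^+$ in different clothing. The only (cosmetic) difference is that you derive both claims from the single matrix identity and write out the MSEy case explicitly, where the paper treats the existence claim via separate Bias/Var derivatives and dispatches MSEy with ``similar via \dref{rgy}''.
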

Proposition \ref{th1} shows that for any data length $N$, the RLS
estimator (\ref{rls}) can have smaller MSEg (\ref{mseg}) and MSEy
(\ref{msey}) than the LS estimator \dref{ls} with a sufficiently
small regularization ``in any direction"  and this merit motivates
to further explore the potential of the RLS estimator (\ref{rls}) by
careful design of the kernel matrix $P$.

\section{Design of Kernel Matrix and Hyperparameter Estimation}
\label{sec3}

The regularization method has two core issues: kernel matrix design,
namely parameterization of the kernel matrix by a parameter vector,
called hyperparameter, and the hyperparameter estimation.

\subsection{Parametrization of Kernel Matrix}
For efficient regularization, the kernel matrix $P$ has to be chosen
carefully. It is typically done by postulating a parameterized family of matrices
\begin{align}
\label{eq:12}
P(\eta), \quad \eta \in \Omega\subset \mathbb R^p,
\end{align}
where $\eta$ is called the \emph{hyperparameter} and the feasible
set $\Omega$ of $\eta$ is assumed to be compact. The choice of
parameterization is a trade-off of the same kind as the choice of
model class in identification: On one hand it should be a large and
flexible class to allow as much benefits from regularization as
possible. On the other hand, a large set requires larger dimensions
of $\eta$, and the estimation of these comes with their own
penalties (much in the spirit of the Akaike's criterion). Since $P$
is the prior covariance of the true impulse response, the prior
knowledge of the underlying system to be identified, e.g.,
exponential stability and smoothness, should be embedded in the
parameterized matrix $P(\eta)$.

A popular way to achieve this goal is through a parameterized
positive semidefinite kernel function. So far, several kernels have
been invented, such as the stable spline (SS) kernel
\citep{Pillonetto2010}, the diagonal correlated (DC) kernel and the
tuned-correlated (TC) kernel \citep{Chen2012}, which are defined as
follows:
\begingroup
\allowdisplaybreaks
\begin{align}
\nonumber
&{\rm SS}:~~P_{kj}(\eta) = c\left(\frac{\alpha^{k+j+\max(k,j)}}2 -\frac{\alpha^{3\max(k,j)}}6 \right),\\
&~~~~~~~\eta = [c,\alpha]\in\Omega=\{c\geq 0,0\leq \alpha \leq 1\};\\
\nonumber
&{\rm DC}:~~P_{kj}(\eta) = c\alpha^{(k+j)/2}\rho^{|j-k|},\\
&~~\eta = [c,\alpha,\rho]\in \Omega=\{c\geq 0,0\leq \alpha \leq 1, |\rho|\leq 1\};\\
\nonumber
&{\rm TC}:~~P_{kj}(\eta) = c\alpha^{\max(k,j)},\\
&~~~~~~~\eta = [c,\alpha]\in\Omega=\{c\geq 0,0\leq \alpha \leq 1\}.\label{tc}
\end{align}
\endgroup

\subsection{Hyperparameter Estimation}
\label{sec:hyper_estimator}
Once a parameterized family of the kernel matrix $P(\eta)$ has been
chosen, the task is to estimate, or ``tune'', the hyperparameter
$\eta$  based on the data.

Several methods are suggested in the literature, see e.g., Section
14 of \cite{Pillonetto2014}, including the empirical Bayes (EB) and
SURE methods.
The EB method uses the Bayesian interpretation in Remark \ref{rmk1}.
Under the assumption (\ref{assumption}), it follows that $Y$ is
Gaussian with mean zero and covariance matrix $\Phi^TP(\eta)\Phi +
\sigma^2 I_N$. As a result, it is possible to estimate the
hyperparameter $\eta$ by maximizing the (marginal) likelihood of
$Y$, i.e.,
\begingroup
\allowdisplaybreaks
\begin{align}
{\rm EB:}\ & {\widehat\eta}_{\rm EB} = \argmin_{\eta \in \Omega} \mathscr{F}_{\rm EB}(P(\eta)),\label{thml}\\
& \mathscr{F}_{\rm EB}(P) = Y^TQ^{-1} Y + \log\det (Q).\label{mlcf}
\end{align} where $Q$ is defined in \eqref{eq:Q} and $\det(\cdot)$
denotes the determinant of a square matrix.
\endgroup
The SURE method first constructs a Stein's unbiased risk estimator
(SURE) of the MSE and then estimates the hyperparameter by
minimizing the constructed estimator. Two variants of the SURE
method were considered in \cite{Pillonetto2015}, which construct
the SUREs for ${\rm MSEg}(P)$ in (\ref{msege}) and ${\rm MSEy}(P)$
in (\ref{mseye}), and are referred to as SUREg and SUREy,
respectively:
\begin{align}
\nonumber
 \mathscr{F}_{\rm Sg}(P) &= \|\widehat{\theta}^{\rm LS} - \widehat{\theta}^{\rm R}(P) \|^2+\sigma^2{\rm Tr}\big(2R^{-1}\!-(\Phi^T\Phi)^{-1}\big)\\
\nonumber &=\sigma^4Y^TQ^{-T}\Phi (\Phi^T\Phi)^{-2}\Phi^T Q^{-1} Y\\
&\qquad+\sigma^2{\rm Tr}\big(2R^{-1}\!-(\Phi^T\Phi)^{-1}\big)\label{sgcf}\\
\mathscr{F}_{\rm Sy}(P)\! &=\! \|Y\!\!-\Phi\widehat{\theta}^{\rm
R}(P)\|^2
\!+\!2\sigma^2{\rm Tr}\big(\Phi P\Phi^TQ^{-1}\big) \nonumber\\
&=\!\sigma^4Y^TQ^{-T} Q^{-1} Y\!+\!2\sigma^2{\rm Tr}\big(\Phi
P\Phi^TQ^{-1}\big)\!\!\!\!\label{sycf}\\ R &= \Phi^T  \Phi +
\sigma^2 P^{-1}.\label{eq:R}
\end{align}
Then the hyperparameter $\eta$ is estimated by minimizing the SUREg (\ref{sgcf}) and SUREy (\ref{sycf}): \begin{align}
{\rm SUREg:}\ &{\widehat\eta}_{\rm Sg} = \argmin_{\eta \in \Omega} \mathscr{F}_{\rm Sg}(P(\eta)),\label{thsg}\\
{\rm SUREy:}\ &{\widehat\eta}_{\rm Sy} = \argmin_{\eta \in \Omega}
\mathscr{F}_{\rm Sy}(P(\eta)). \label{thsy}
\end{align}
In the following sections, we will study the properties of the above
three estimators EB, SUREg and SUREy. To set reference for these
estimators, we introduce their corresponding Oracle counterparts
that depend on the true impulse response $\theta_0$:
\begingroup
\allowdisplaybreaks
\begin{align}
&{\rm MSEg:}\ {\widehat\eta}_{\rm MSEg} = \argmin_{\eta \in \Omega} E [\mathscr{F}_{\rm Sg}(P(\eta)]\nonumber \\&\qquad\qquad\qquad\ =\argmin_{\eta \in \Omega} {\rm MSEg}(P(\eta)),\label{okmseg}\\
&{\rm MSEy:}\ {\widehat\eta}_{\rm MSEy}  = \argmin_{\eta \in \Omega} E[\mathscr{F}_{\rm Sy}(P(\eta))]\nonumber\\&\qquad\qquad\qquad\ = \argmin_{\eta \in \Omega} {\rm MSEy}(P(\eta)),\label{okmsey}\\
&{\rm EEB:}\ {\widehat\eta}_{\rm EEB} = \argmin_{\eta \in \Omega} E [\mathscr{F}_{\rm EB}(P(\eta))]\nonumber\\
&\qquad\qquad\qquad\ = \argmin_{\eta \in \Omega} {\rm EEB}(P(\eta)), \label{okmseml}\\
&{\rm EEB}(P)=\theta_0^T \Phi^T Q^{-1} \Phi \theta_0 \!+
\!\sigma^2{\rm Tr}( Q^{-1}  ) + \log\det (Q),\label{eebe}
\end{align}
\endgroup
where ${\rm MSEg}(P)$ and ${\rm MSEy}(P)$ are defined in
(\ref{msege}) and (\ref{mseye}), respectively.

The hyperparameter estimators (\ref{okmseg}) and (\ref{okmsey}) give
the optimal hyperparameter estimates in the corresponding MSE sense
and thus provide reference when evaluating the performance of
hyperparameter estimators.

\begin{rem}
Among these hyperparameter estimators, only SUREg \eqref{thsg}
depends on $(\Phi^T\Phi)^{-1}$. When $(\Phi^T\Phi)^{-1}$ is
ill-conditioned, SUREg \eqref{thsg} should be avoided for
hyperparameter estimation. One may also note that
$(\Phi^T\Phi)^{-1}$ in the second term is independent of $P$ and
thus can actually be removed in the calculation.
\end{rem}

\begin{rem} It is interesting to note that
the first terms of $\mathscr{F}_{\rm Sg}(P),\mathscr{F}_{\rm
Sy}(P),$ and $\mathscr{F}_{\rm EB}(P)$ given in \dref{sgcf},
\dref{sycf}, and \dref{mlcf} contain the same factors $Y$ and
$Q^{-1}$. Moreover, similar to \dref{rgy}, $\mathscr{F}_{\rm Sg}(P)$
and $\mathscr{F}_{\rm Sy}(P)$ are related with each other through
\begin{align}
\nonumber \mathscr{F}_{\rm Sy}(P) &= {\rm Tr} \big\{\big[
(\widehat{\theta}^{\rm LS} \!- \!\widehat{\theta}^{\rm R}(P))
(\widehat{\theta}^{\rm LS} \!- \! \widehat{\theta}^{\rm R}(P))^T\\
\nonumber &\hspace{2em}+
\sigma^2(2R^{-1}\!- \!(\Phi^T\Phi)^{-1} ) \big]\Phi^T\Phi\big\}\\
& ~~~~+\!\underbrace{Y^T\Phi (\Phi^T\Phi)^{-1}\Phi^T  Y \!-\! Y^TY
-n\sigma^2.}_{\mbox{independent of the kernel matrix }
P}\label{rsgy}
\end{align}
\end{rem}
In what follows, we will investigate the properties of the
hyperparameter estimators EB, SUREg, and SUREy and their
corresponding Oracle estimators EEB, MSEg and MSEy. Before
proceeding to the details, we make, without loss of generality, the
following assumption.

\begin{assum}\label{ass:1}
The optimal hyperparameter estimates ${\widehat\eta}_{\rm Sg}$,
${\widehat\eta}_{\rm Sy}$, ${\widehat\eta}_{\rm EB}$,
${\widehat\eta}_{\rm MSEg}$, ${\widehat\eta}_{\rm MSEy}$ and
${\widehat\eta}_{\rm EEB}$ are interior points of $\Omega$.
\end{assum}

\begin{rem}
To justify Assumption \ref{ass:1}, we take the DC kernel as an
example. For the case where either $c=0$ or $\alpha = 0$,
$P(\eta)=0$ and thus (\ref{rls}) is trivially $0$. For the case
where $\alpha = 1$, this violates the stability of the system. For
the case where $|\rho| = 1$, the coefficients of the impulse
response is perfectly positive or negative correlated, but this is
impossible for a stable system. In fact, more formal justification
regarding this assumption can be found on \cite[p.
115]{Pillonetto2015}, which shows that the measure of the set
containing all optimal estimates lying on the boundary of $\Omega$
is zero and thus can be neglected when making almost sure
convergence statement.
\end{rem}

\section{Properties of Hyperparameter Estimators: Finite Data Case}
\label{sec4}

In this section, focusing on the finite data case we first give the
first order optimality conditions of the hyperparameter estimators
and then we consider two special cases for which closed-form
expressions of the optimal hyperparameter estimates are available.

\subsection{First Order Optimality Conditions}
The optimal hyperparameter estimates $\widehat{\eta}_{\rm Sg}$,
$\widehat{\eta}_{\rm Sy}$, and $\widehat{\eta}_{\rm EB}$ in
\dref{thsg}, \dref{thsy}, and \dref{thml} should satisfy the first
order optimality conditions if they are interior points of $\Omega$.
For convenience, we let $\mathscr{C}$ to denote one of the following
estimation criteria $\mathscr{F}_{\rm Sg}$, $\mathscr{F}_{\rm Sy}$,
$\mathscr{F}_{\rm EB}$, MSEg, MSEy or EEB. Then the corresponding
optimal hyperparameter estimate is a root of the system of
equations:
\begin{align}
\frac{\partial \mathscr{C}(P(\eta))}{\partial \eta} =0.\label{scsg}
\end{align}
By the chain rule of compound functions, we have
\begingroup
\allowdisplaybreaks
\begin{align}
{\rm Tr}\left(\frac{\partial \mathscr{C}(P)}{\partial P}
\Big(\frac{\partial P(\eta)}{\partial \eta_i}\Big)^T\right)
=0,~1\leq i\leq p.\label{sksg}
\end{align}
where
the symmetry of $P$ is not considered, that is, the elements of $P$ are treated independently.
\endgroup
Clearly, the term $\frac{\partial \mathscr{C}(P)}{\partial P}$ is
irrespective of the parameterization of $P$ and its explicit
expressions for the estimation criteria \dref{sgcf}, \dref{sycf},
and \dref{mlcf} are available.
\begin{prop}
    \label{thm3}
    The first order partial derivatives of \dref{sgcf}, \dref{sycf}, and \dref{mlcf} with respect to $P$
    are, respectively,
    \begingroup
    \allowdisplaybreaks
    \begin{align}
    \nonumber
   &\frac{\partial \mathscr{F}_{\rm Sg}(P)}{\partial P}
   \!=\!-2\sigma^4
   \Phi^TQ^{-T} \Phi(\Phi^T\Phi)^{-2}\Phi^TQ^{-1}YY^TQ^{-T}\Phi\\
   &\hspace{7em}+ 2\sigma^4H^{-T}\overline{H}^{-T} \label{dsg}\\
      \nonumber
    &\frac{\partial \mathscr{F}_{\rm Sy}(P)}{\partial P}
    \!=\!-2\sigma^4 \Phi^T Q^{-T} Q^{-1}YY^T Q^{-T} \Phi\\
    &\hspace{7em}+2\sigma^4\Phi^T Q^{-T} Q^{-T}\Phi \label{dsy}\\
        &\frac{\partial \mathscr{F}_{\rm EB}(P)}{\partial P}
        =-\Phi^TQ^{-T}YY^TQ^{-T}\Phi
        +\Phi^TQ^{-T} \Phi, \label{dml}\\
& H = P\Phi^T\Phi + \sigma^2I_n,\ \overline{H}=\Phi^T\Phi P +
\sigma^2I_n.\label{eq:H}
    \end{align}
    \endgroup
\end{prop}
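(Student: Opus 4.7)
The plan is to verify the three expressions by direct matrix differentiation, using the Fr{\'e}chet differential identities $dQ = \Phi\,dP\,\Phi^T$, $dQ^{-1} = -Q^{-1}\,dQ\,Q^{-1}$, $dP^{-1} = -P^{-1}\,dP\,P^{-1}$, and $d\log\det Q = \text{Tr}(Q^{-1}dQ)$. For each criterion $\mathscr{C}$ I would compute the scalar differential and bring it to the canonical form $d\mathscr{C} = \text{Tr}(M\,dP)$, from which the gradient read off as $\partial\mathscr{C}/\partial P = M^T$, since the elements of $P$ are treated as independent (so $dP$ is an arbitrary matrix perturbation).

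For the EB criterion I would split $\mathscr{F}_{\rm EB}(P) = Y^TQ^{-1}Y + \log\det Q$. The first term gives $d(Y^TQ^{-1}Y) = -Y^TQ^{-1}\Phi\,dP\,\Phi^TQ^{-1}Y = -\text{Tr}\bigl(dP\cdot\Phi^TQ^{-1}YY^TQ^{-1}\Phi\bigr)$, and the second gives $d\log\det Q = \text{Tr}(dP\cdot\Phi^TQ^{-1}\Phi)$; transposing yields (\ref{dml}). For $\mathscr{F}_{\rm Sy}$ I would handle the quadratic part $f_1 = \sigma^4 Y^TQ^{-T}Q^{-1}Y$ by carefully differentiating \emph{both} $Q^{-1}$ and $Q^{-T}$ (the latter using $dQ^T = \Phi\,dP^T\,\Phi^T$ and the identity $\text{Tr}(B\,dP^T) = \text{Tr}(B^T dP)$), producing two symmetric contributions that combine to $-2\sigma^4\Phi^TQ^{-T}Q^{-1}YY^TQ^{-T}\Phi$. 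For the trace part $2\sigma^2\text{Tr}(\Phi P\Phi^TQ^{-1})$, I would differentiate the explicit $P$ and the $Q^{-1}$ separately and then use the algebraic identity $\Phi P\Phi^T = Q-\sigma^2 I_N$ (from (\ref{eq:Q})) to telescope the two terms into $2\sigma^4\text{Tr}(\Phi^TQ^{-2}\Phi\,dP)$, which gives the second term of (\ref{dsy}) after transposing.

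For $\mathscr{F}_{\rm Sg}$, the first (data) term is structurally identical to the $\mathscr{F}_{\rm Sy}$ case but with the sandwiched constant matrix $A = \Phi(\Phi^T\Phi)^{-2}\Phi^T$ in the middle; the same differentiation and trace manipulations, using $A^T = A$, yield the first term of (\ref{dsg}). For the $2\sigma^2\text{Tr}(R^{-1})$ term, I would differentiate $R = \Phi^T\Phi + \sigma^2 P^{-1}$ to obtain $dR = -\sigma^2 P^{-1}dP\,P^{-1}$, hence $dR^{-1} = \sigma^2 R^{-1}P^{-1}dP\,P^{-1}R^{-1}$, and so $d\bigl(2\sigma^2\text{Tr}(R^{-1})\bigr) = 2\sigma^4\text{Tr}\bigl(P^{-1}R^{-1}R^{-1}P^{-1}dP\bigr)$. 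The key algebraic observation is that, from (\ref{eq:H}), $H = PR$ and $\overline{H} = RP$, so $H^{-1} = R^{-1}P^{-1}$ and $\overline{H}^{-1} = P^{-1}R^{-1}$; substituting gives $d\bigl(2\sigma^2\text{Tr}(R^{-1})\bigr) = 2\sigma^4\text{Tr}(\overline{H}^{-1}H^{-1}dP)$, which upon transposing yields exactly $2\sigma^4 H^{-T}\overline{H}^{-T}$, completing (\ref{dsg}). The constant $-\sigma^2\text{Tr}\bigl((\Phi^T\Phi)^{-1}\bigr)$ does not contribute.

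The main obstacle I anticipate is purely the careful bookkeeping of transposes, since the proposition treats $P$ as a general (not necessarily symmetric) matrix; this means $Q$ must be manipulated as a generic square matrix and the $dP$ versus $dP^T$ contributions have to be accumulated correctly before identifying the gradient matrix. Everything else is routine trace algebra, and the substitution $H = PR,\ \overline{H} = RP$ is the single nontrivial algebraic simplification needed to match the stated form.
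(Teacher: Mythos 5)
Your proposal is correct and follows essentially the same route as the paper: both split each criterion into its data term and its trace term, differentiate directly (you via Fr\'echet differentials and trace identification, the paper via element-wise chain-rule sums with the $J_{ij}$ basis and the Matrix Cookbook formulas), and rely on the same key simplifications, namely $\Phi P\Phi^T = Q-\sigma^2 I_N$ for the SUREy trace term and $H=PR$, $\overline H=RP$ for the ${\rm Tr}(R^{-1})$ term. The two formalisms are equivalent, and your bookkeeping of the $dP$ versus $dP^T$ contributions correctly reproduces the factor of $2$ in the quadratic terms.
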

Similarly, the partial derivatives of ${\rm MSEg}(P)$, ${\rm
MSEy}(P)$, and ${\rm EEB}(P)$ with respect to $P$ are also
available.
\begin{prop}
    \label{thmb1}
    The first order partial derivatives of \eqref{msege}, \eqref{mseye}, and \eqref{eebe} with respect to $P$ are,
    respectively,
    \begingroup
    \allowdisplaybreaks
    \begin{align}
    \nonumber
    \frac{{\partial {\rm MSEg}(P) }}{\partial P}
    &=-2\sigma^4H^{-T}H^{-1}
    \theta_0 \theta_0^T
    \Phi^T
    Q^{-T}\Phi\\
    &~~~~+2\sigma^4
    H^{-T}H^{-1}P\Phi^T
    Q^{-T}\Phi \label{dmsege}\\
        \nonumber
    \frac{{\partial {\rm MSEy}(P) }}{\partial P}
    &=-2\sigma^4\Phi^T Q^{-T} Q^{-1}\Phi
    \theta_0 \theta_0^T
    \Phi^T Q^{-T}\Phi \\
    &~~~~+2\sigma^4
    \Phi^T Q^{-T} Q^{-1}
    \Phi P \Phi^T
    Q^{-T}\Phi\label{dmseye}\\
    \nonumber
    \frac{\partial {\rm EEB}(P)}{\partial P}
    &=-\Phi^T Q^{-T} \Phi\theta_0\theta_0^T\Phi^T Q^{-T} \Phi\\
    &~~~~+\Phi^T Q^{-T} \Phi P^T\Phi^T Q^{-T} \Phi.\label{dmsemle}
    \end{align} where $H$ is defined in \eqref{eq:H}.
    \endgroup
\end{prop}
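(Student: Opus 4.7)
The plan is to compute the matrix differential $d\mathscr{C}(P)$ for each $\mathscr{C}\in\{\mathrm{MSEg},\mathrm{MSEy},\mathrm{EEB}\}$, express it in the form $\mathrm{Tr}(K\,dP)$, and read off $\partial \mathscr{C}/\partial P=K^T$. The primitive differentials I would use are $dQ=\Phi(dP)\Phi^T$, $d(Q^{-1})=-Q^{-1}\Phi(dP)\Phi^T Q^{-1}$, $dH=(dP)\Phi^T\Phi$, $d(H^{-1})=-H^{-1}(dP)\Phi^T\Phi H^{-1}$, and $d\log\det(Q)=\mathrm{Tr}(Q^{-1}\Phi(dP)\Phi^T)$. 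Two algebraic bridges that I would verify up front then tie everything together: $Q\Phi=\Phi H$ (which gives $Q^{-1}\Phi=\Phi H^{-1}$, or equivalently $\Phi^T Q^{-T}\Phi=H^{-T}\Phi^T\Phi$), and $P\Phi^T Q^{-1}=H^{-1}P\Phi^T$ (obtained from $(P\Phi^T\Phi+\sigma^2 I)P\Phi^T=P\Phi^T(\Phi P\Phi^T+\sigma^2 I)$). Together these yield the workhorse simplifications $(P\Phi^T Q^{-1}\Phi-I)\theta_0=-\sigma^2 H^{-1}\theta_0$ and, for $A=P\Phi^T Q^{-1}$, $dA=\sigma^2 H^{-1}(dP)\Phi^T Q^{-1}$.

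For $\mathrm{EEB}$ the calculation is the cleanest, since $P$ enters only through $Q$. Differentiating the three terms of (\ref{eebe}) and rearranging cyclically produces $-\mathrm{Tr}(\Phi^T Q^{-1}\Phi\theta_0\theta_0^T\Phi^T Q^{-1}\Phi\,dP)$ from the quadratic part and $\mathrm{Tr}(\Phi^T Q^{-1}(I-\sigma^2 Q^{-1})\Phi\,dP)$ from the combined $\sigma^2\mathrm{Tr}(Q^{-1})$ and $\log\det(Q)$ contributions. The key rewrite $I-\sigma^2 Q^{-1}=(Q-\sigma^2 I)Q^{-1}=\Phi P\Phi^T Q^{-1}$ collapses the second into $\mathrm{Tr}(\Phi^T Q^{-1}\Phi P\Phi^T Q^{-1}\Phi\,dP)$, and transposing delivers (\ref{dmsemle}).

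For $\mathrm{MSEg}$, the workhorse identity first recasts (\ref{msege}) as $\sigma^4\theta_0^T H^{-T}H^{-1}\theta_0+\sigma^2\mathrm{Tr}(AA^T)$. Writing the first term as $\sigma^4\|H^{-1}\theta_0\|^2$ and using the scalar identity $d(\|w\|^2)=2w^T dw$ with $w=H^{-1}\theta_0$ gives $-2\sigma^4\mathrm{Tr}(\Phi^T\Phi H^{-1}\theta_0\theta_0^T H^{-T}H^{-1}\,dP)$; the second term, via $d\mathrm{Tr}(AA^T)=2\mathrm{Tr}(A^T\,dA)$ combined with the compact form of $dA$, gives $2\sigma^4\mathrm{Tr}(\Phi^T Q^{-1}Q^{-T}\Phi P^T H^{-1}\,dP)$. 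Transposing and invoking the bridges to rewrite $H^{-T}\Phi^T\Phi$ as $\Phi^T Q^{-T}\Phi$ and $H^{-T}(P\Phi^T Q^{-1})Q^{-T}\Phi$ as $H^{-T}H^{-1}P\Phi^T Q^{-T}\Phi$ produces (\ref{dmsege}). $\mathrm{MSEy}$ is analogous: $\Phi H^{-1}=Q^{-1}\Phi$ recasts the ``prediction error'' term as $\sigma^4\theta_0^T\Phi^T Q^{-T}Q^{-1}\Phi\theta_0$, and the same differentiation of $\mathrm{Tr}(\Phi AA^T\Phi^T)$, together with the commutation $\Phi P\Phi^T Q^{-1}=Q^{-1}\Phi P\Phi^T$ (from $Q-\sigma^2 I=\Phi P\Phi^T$), produces (\ref{dmseye}).

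The main obstacle is careful trace bookkeeping when $H$ and $Q$ are non-symmetric. A naive term-by-term expansion of $d(H^{-T}H^{-1})$ or of $d(AA^T)$ produces pieces in both $dP$ and $dP^T$ that must be combined via $\mathrm{Tr}(C\,dP^T)=\mathrm{Tr}(C^T\,dP)$, and because $H^{-1}H^{-T}$ and $H^{-T}H^{-1}$ are distinct matrices the order of transposes has to be tracked carefully to end up with the correct single coefficient of $dP$. Applying the scalar-identity shortcuts $2w^T dw$ and $2\mathrm{Tr}(A^T\,dA)$ sidesteps the $dP^T$ contributions cleanly, and the two bridges then perform the final cosmetic work of bringing the non-symmetric intermediates into the compact forms of (\ref{dmsege})--(\ref{dmsemle}).
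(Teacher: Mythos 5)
Your proposal is correct and follows essentially the same route as the paper's proof: both first simplify the bias terms via $P\Phi^TQ^{-1}\Phi-I=-\sigma^2H^{-1}$ (equivalently $\widehat{\theta}^{\rm R}-\theta_0=-\sigma^2H^{-1}\theta_0+R^{-1}\Phi^TV$) and then differentiate the resulting quadratic and trace terms using the same algebraic identities ($Q^{-1}\Phi=\Phi H^{-1}$, $I_N-\sigma^2Q^{-1}=\Phi P\Phi^TQ^{-1}$, and $P\Phi^TQ^{-1}=R^{-1}\Phi^T=H^{-1}P\Phi^T$). The only difference is presentational — you organize the computation with matrix differentials and $\mathrm{Tr}(K\,dP)$, while the paper uses element-wise chain rules with $J_{ij}$ and cookbook formulas — and your shortcuts $d\|w\|^2=2w^T dw$ and $d\,\mathrm{Tr}(AA^T)=2\,\mathrm{Tr}(A^T dA)$ are valid here and correctly handle the $dP$ versus $dP^T$ bookkeeping.
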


In order to better expose the relation among the partial derivatives
derived in Propositions \ref{thm3} and \ref{thmb1}, we define
\begin{align}\label{eq:S}
S = P + \sigma^2(\Phi^T  \Phi) ^{-1}.
\end{align}
With the use of (\ref{eq:S}) and the identities \dref{id6}--\dref{id5} in the appendix, we rewrite the partial derivatives
derived in Propositions \ref{thm3} and \ref{thmb1} as follows.
\begin{cor}
    \label{cor4}
    The partial derivatives derived in Propositions \ref{thm3} and \ref{thmb1}
    can be rewritten as follows:
    \begingroup
    \allowdisplaybreaks
    \begin{align}
    &\frac{{\partial {\rm MSEg}(P) }}{\partial P}
    =2\sigma^4
    S^{-T}
    (\Phi^T \Phi )^{-2}
    S^{-1}
    (P -  \theta_0\theta_0^T )
    S^{-T} \label{dmsegec}\\
    &\frac{\partial \mathscr{F}_{\rm Sg}(P)}{\partial P}
    \!=\!2\sigma^4S^{-T}(\Phi^T\Phi)^{-2}S^{-1}
    \big(S - \widehat{\theta}^{\rm LS}(\widehat{\theta}^{\rm LS})^T\big)
    S^{-T}\label{dsgc}\\
    &\frac{{\partial {\rm MSEy}(P) }}{\partial P}
    =2\sigma^4
    S^{-T}
    (\Phi^T \Phi )^{-1}
    S^{-1}
    (P -  \theta_0\theta_0^T)
    S^{-T}\!\!\!\! \label{dmseyec}  \\
    &\frac{\partial \mathscr{F}_{\rm Sy}(P)}{\partial P}
    \!=\!2\sigma^4S^{-T}(\Phi^T\Phi)^{-1}S^{-1}
    \big(S - \widehat{\theta}^{\rm LS}(\widehat{\theta}^{\rm LS})^T\big)
    S^{-T}\label{dsyc}  \\
    &\frac{\partial {\rm EEB}(P)}{\partial P}
    =S^{-T}
    (P^T- \theta_0\theta_0^T)
    S^{-T}\label{dmsemlec}
\\
    &\frac{\partial \mathscr{F}_{\rm EB}(P)}{\partial P}
    =S^{-T}\big(S^T - \widehat{\theta}^{\rm LS}(\widehat{\theta}^{\rm LS})^T\big)
    S^{-T}. \label{dmlc}
    \end{align}
    \endgroup
\end{cor}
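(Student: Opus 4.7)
The plan is to reduce the six derivative expressions in Propositions~\ref{thm3} and~\ref{thmb1} to the claimed common form by the single substitution $S=P+\sigma^2(\Phi^T\Phi)^{-1}$ together with the matrix identities \dref{id6}--\dref{id5} in the appendix. Before attacking any of the six expressions I would assemble a small dictionary built from $S$. From the definitions of $H$ and $\overline H$ in \eqref{eq:H} one immediately has the factorizations $H=S\Phi^T\Phi$ and $\overline H=\Phi^T\Phi\,S$, so that $H^{-1}=(\Phi^T\Phi)^{-1}S^{-1}$ and $\overline H^{-1}=S^{-1}(\Phi^T\Phi)^{-1}$. Combined with the push-through identity $Q^{-1}\Phi=\Phi H^{-1}$ (verified by left-multiplying by $Q$ and expanding \eqref{eq:Q}), this yields the clean relations
\[\Phi^T Q^{-1}\Phi=S^{-1},\quad \Phi^T Q^{-T}\Phi=S^{-T},\quad H^{-T}\overline H^{-T}=S^{-T}(\Phi^T\Phi)^{-2}S^{-T},\]
and, since $\widehat\theta^{\rm LS}=(\Phi^T\Phi)^{-1}\Phi^T Y$, the further relations $\Phi^T Q^{-1}Y=S^{-1}\widehat\theta^{\rm LS}$ and $Y^T Q^{-T}\Phi=(\widehat\theta^{\rm LS})^T S^{-T}$.

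With this dictionary in hand, each of the six formulas \eqref{dsg}--\eqref{dmsemle} collapses by direct substitution. In \eqref{dml} the first summand becomes $-S^{-T}\widehat\theta^{\rm LS}(\widehat\theta^{\rm LS})^T S^{-T}$ and the second becomes $S^{-T}=S^{-T}S^T S^{-T}$, reproducing \eqref{dmlc}. The same recipe transforms \eqref{dsg} into \eqref{dsgc} (the $H^{-T}\overline H^{-T}$ identity handles the ``variance'' term) and \eqref{dsy} into \eqref{dsyc} (the push-through identity collapses the middle $Q^{-T}Q^{-1}$ factor). The three oracle counterparts \eqref{dmsege}, \eqref{dmseye}, \eqref{dmsemle} reduce identically, only with $\widehat\theta^{\rm LS}(\widehat\theta^{\rm LS})^T$ replaced by $\theta_0\theta_0^T$ and with $P$, rather than $S$, occupying the leading ``linear'' slot; this last shift is the visible difference between, e.g., \eqref{dmsegec} and \eqref{dsgc}, and it is consistent with the identity $E[\widehat\theta^{\rm LS}(\widehat\theta^{\rm LS})^T]=\theta_0\theta_0^T+\sigma^2(\Phi^T\Phi)^{-1}$, which gives $S-E[\widehat\theta^{\rm LS}(\widehat\theta^{\rm LS})^T]=P-\theta_0\theta_0^T$.

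The argument is thus almost entirely bookkeeping, and I do not expect a genuine mathematical obstacle. The one point demanding discipline is that the derivatives in Propositions~\ref{thm3} and~\ref{thmb1} treat the entries of $P$ as independent, so $P$, and hence $Q$, $H$, $\overline H$ and $S$, are not assumed symmetric; the transposes appearing in \eqref{dmsegec}--\eqref{dmlc} are therefore genuine and must be tracked literally rather than silently dropped. Once that discipline is imposed, each of the six identities falls out by the substitutions listed above.
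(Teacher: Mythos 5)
Your proposal is correct and follows exactly the route the paper indicates (the paper omits the proof as straightforward, pointing to the substitution $S=P+\sigma^2(\Phi^T\Phi)^{-1}$ and the identities \dref{id6}--\dref{id5}): the push-through relations $H=S\Phi^T\Phi$, $\overline H=\Phi^T\Phi S$, $\Phi^TQ^{-1}\Phi=S^{-1}$, $\Phi^TQ^{-1}Y=S^{-1}\widehat\theta^{\rm LS}$ and $\Phi^TQ^{-T}Q^{-1}\Phi=S^{-T}(\Phi^T\Phi)^{-1}S^{-1}$ are precisely the appendix identities, and each of the six expressions collapses by direct substitution as you describe. Your caution about tracking transposes (since $P$ is treated entrywise and not assumed symmetric) is well placed and consistent with the paper's own remarks.
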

It follows from Corollary \ref{cor4} that the difference between the
partial derivatives of $\mathscr{F}_{\rm Sg}(P),\mathscr{F}_{\rm
Sy}(P),\mathscr{F}_{\rm EB}(P)$ and that of their Oracle
counterparts is that the factor $S - \widehat{\theta}^{\rm
LS}(\widehat{\theta}^{\rm LS})^T$ is replaced by $P-
\theta_0\theta_0^T$.
Moreover, the difference between the partial derivative of
$\mathscr{F}_{\rm Sg}(P)$ and that of $ \mathscr{F}_{\rm Sy}(P)$ is
that there is one extra factor $(\Phi^T\Phi)^{-1}$. The difference
between the first order derivative of $\mathscr{F}_{\rm Sy}(P)$ and
that of $ \mathscr{F}_{\rm EB}(P)$ is that there is one extra factor
$2\sigma^4(\Phi^T\Phi)^{-1}S^{-1}=2\sigma^4H^{-1}$. The above
relations extend to the partial derivatives of their Oracle
counterparts.
\begin{rem} It is important to note from Propositions \ref{thm3} and \ref{thmb1}
that only the first term of $\frac{\partial \mathscr{F}_{\rm
Sg}(P)}{\partial P}$ depends on the possibly ill-conditioned
$(\Phi^T\Phi)^{-1}$. With the use of $S$ in \eqref{eq:S}, all
partial derivatives of the hyperparameter estimators seemingly
depend on the possibly ill-conditioned term $(\Phi^T\Phi)^{-1}$.
However, it should be stressed that the partial derivatives derived
in Corollary \ref{cor4} are not intended for numerical calculation
but for theoretical analysis and for better exposition of the
relation among the partial derivatives derived in Propositions
\ref{thm3} and \ref{thmb1}.
\end{rem}

\begin{rem}
The kernel matrix $P$ is in general assumed to be symmetric. In this
case, we have $S^T = S$ and thus the partial derivatives derived in
Corollary \ref{cor4} can be simplified accordingly.
\end{rem}

Setting $\frac{{\partial {\rm MSEg}(P) }}{\partial P}=0$,
$\frac{{\partial {\rm MSEg}(P) }}{\partial P}=0$,  and
$\frac{\partial {\rm EEB}(P)}{\partial P}=0$ in Corollary \ref{cor4}
leads to the next proposition.
\begin{prop} \label{prop5}
The optimal kernel matrix that minimizes ${\rm MSEg}(P)$,  ${\rm
MSEy}(P)$, and ${\rm EEB}(P)$ without structure constraints on $P$
is \begin{align}
    P=\theta_0\theta_0^T. \label{ok}
    \end{align}
\end{prop}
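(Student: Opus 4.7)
The plan is to read the stationarity condition for each of the three Oracle criteria directly from Corollary \ref{cor4}. Every gradient listed there has the clean ``sandwich'' form $A\,(P-\theta_0\theta_0^T)\,B$ (or $A\,(P^T-\theta_0\theta_0^T)\,B$ in the EEB case), where $A$ and $B$ are matrix factors built from $S$, $(\Phi^T\Phi)^{-1}$ and $\sigma^2$. Setting the sandwich to zero and cancelling the invertible outer factors will immediately pin down the unique stationary point to be $P=\theta_0\theta_0^T$; a short additional argument then promotes this stationary point to the global minimizer.

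Concretely, for the MSEg criterion the identity \eqref{dmsegec} reduces ${\partial\,{\rm MSEg}(P)}/{\partial P}=0$ to
\[
S^{-T}(\Phi^T\Phi)^{-2}S^{-1}\,(P-\theta_0\theta_0^T)\,S^{-T}=0.
\]
The standing invertibility of $\Phi^T\Phi$ (needed even to define the LS estimator in \eqref{ls}) makes $\sigma^2(\Phi^T\Phi)^{-1}$ positive definite, so $S=P+\sigma^2(\Phi^T\Phi)^{-1}$, as the sum of a PSD and a PD matrix, is invertible. Both $S^{-T}(\Phi^T\Phi)^{-2}S^{-1}$ and $S^{-T}$ are therefore nonsingular, and left-/right-cancellation forces $P=\theta_0\theta_0^T$. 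The same reasoning applied to \eqref{dmseyec} handles MSEy, and applied to \eqref{dmsemlec} yields $P^T=\theta_0\theta_0^T$ in the EEB case, which gives the same conclusion since $\theta_0\theta_0^T$ is symmetric.

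It remains to verify that this unique stationary point is a global minimizer rather than a saddle or maximum. For ${\rm MSEg}(P)$ and ${\rm MSEy}(P)$ the criteria are nonnegative smooth functions of the entries of $P$ which, along any ray $P=tM$, tend to the corresponding LS values as $t\to\infty$ and to $\|\theta_0\|^2$ or $\|\Phi\theta_0\|^2+N\sigma^2$ as $t\to 0$; uniqueness of the interior stationary point, combined with these bounded limiting values (each of which strictly exceeds the value of the criterion at $\theta_0\theta_0^T$, as a direct substitution using the Sherman--Morrison formula on $Q^{-1}$ confirms), forces the minimizer to be $P=\theta_0\theta_0^T$. For ${\rm EEB}(P)$ the cleanest route is the Bayesian interpretation in Remark \ref{rmk1}: up to a $P$-independent additive constant, ${\rm EEB}(P)$ equals twice the expected negative log-likelihood of $Y$ under the marginal model $\mathscr{N}(0,Q)$, which by a standard moment-matching argument is minimized exactly when $Q$ matches the true uncentered second moment $\Phi\theta_0\theta_0^T\Phi^T+\sigma^2 I_N$ of $Y$, i.e., $P=\theta_0\theta_0^T$. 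The first-order algebra is essentially mechanical once Corollary \ref{cor4} is in hand; the main effort, if any, lies in this last ``it is a minimum'' verification, and the potential obstacle is that a formal Hessian analysis for EEB couples the $\log\det$ and quadratic pieces in a way that is less transparent than for MSEg and MSEy, which is why routing through the Bayesian interpretation is preferable.
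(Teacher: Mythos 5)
Your first-order argument is exactly the route the paper takes: the text immediately preceding the proposition says the result follows by setting the derivatives of Corollary \ref{cor4} to zero, and the formal proof is then omitted as straightforward, so reading $P=\theta_0\theta_0^T$ off the sandwich form $A(P-\theta_0\theta_0^T)B$ after cancelling the invertible factors $S^{-T}$, $(\Phi^T\Phi)^{-k}$, $S^{-1}$ is precisely what is intended, and your justification that $S=P+\sigma^2(\Phi^T\Phi)^{-1}$ is nonsingular for any positive semidefinite $P$ is the right observation. Where you go beyond the paper is the global-minimality verification, and there the two halves are of unequal quality. The EEB half is clean: writing ${\rm EEB}(P)={\rm Tr}(Q^{-1}\Sigma_0)+\log\det Q$ with $\Sigma_0=\Phi\theta_0\theta_0^T\Phi^T+\sigma^2I_N$ and invoking the standard fact that $Q\mapsto{\rm Tr}(Q^{-1}\Sigma_0)+\log\det Q$ is uniquely minimized at $Q=\Sigma_0$ does give a rigorous global statement, since $\Sigma_0$ is attained within the feasible family at $P=\theta_0\theta_0^T$ and full column rank of $\Phi$ makes that $P$ unique. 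The MSEg/MSEy half is shakier: ``unique stationary point plus finite limits along rays'' does not by itself rule out the infimum being attained on the boundary of the positive semidefinite cone, where the ambient gradient need not vanish --- note that $\theta_0\theta_0^T$ is itself rank one and hence a boundary point, so the phrase ``interior stationary point'' is not accurate. The cleaner and essentially free route, which the paper points to in the sentence right after the proposition, is the result of \cite{Chen2012} that $P=\theta_0\theta_0^T$ minimizes the MSE matrix $E(\widehat{\theta}^{\rm R}-\theta_0)(\widehat{\theta}^{\rm R}-\theta_0)^T$ in the matrix (Loewner) sense over all positive semidefinite $P$; global optimality for ${\rm MSEg}$ and ${\rm MSEy}$ then follows at once by taking traces against $I_n$ and $\Phi^T\Phi$ respectively, via \dref{rgy}.
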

It was found in \cite{Chen2012} that \eqref{ok} minimizes the MSE
matrix $E(\widehat{\theta}^{\rm R}-\theta_0)(\widehat{\theta}^{\rm
R}-\theta_0)^T$ in the matrix sense. Here we further find that
\eqref{ok} is optimal for ${\rm MSEg}(P)$, ${\rm MSEy}(P)$ and ${\rm
EEB}(P)$, and for any data length $N$.

\begin{rem}
It seems that $S - \widehat{\theta}^{\rm LS}(\widehat{\theta}^{\rm
LS})^T=0$, i.e., $P=\widehat{\theta}^{\rm LS}(\widehat{\theta}^{\rm
LS})^T-\sigma^2(\Phi^T  \Phi)^{-1}$ is a possible candidate for the
optimal matrix minimizing SUREg($P$), SUREy($P$), and EB($P$).
However, this is not true, since this kernel matrix would make
$S=\widehat{\theta}^{\rm LS}(\widehat{\theta}^{\rm LS})^T$ singular
and SUREg($P$), SUREy($P$), and EB($P$) take the value of $-\infty$.
\end{rem}
In general, there is no explicit expression of these hyperparameter estimators.
However, there exist some
specific cases, for which it is possible to derive the explicit
solution based on Corollary \ref{cor4}. In the following, we
consider two special cases.

\subsection{Ridge Regression with $\Phi^T\Phi = NI_n$}
We let $P(\eta)=\eta I_n$ with $\eta\geq 0$ and assume $\Phi^T\Phi =
NI_n$. Then we have the following result.
\begin{prop}
    \label{prop6}
    Consider $P(\eta)=\eta I_n$ with $\eta\geq 0$.
    Further assume that $\Phi^T\Phi = NI_n$.
    Then we have \begin{align}
\widehat{\eta}_{\rm Sg}=\widehat{\eta}_{\rm Sy}=\widehat{\eta}_{\rm
EB}= \max\Big(0,\frac{(\widehat{\theta}^{\rm LS})^T
        \widehat{\theta}^{\rm LS} }{n}
    - \frac{\sigma^2}{N }\Big).\label{rro}
    \end{align}
    Moreover,  \begin{align}
{\widehat\eta}_{\rm MSEg}={\widehat\eta}_{\rm
MSEy}={\widehat\eta}_{\rm EEB}= \theta_0^T\theta_0/n.
        \end{align}
\end{prop}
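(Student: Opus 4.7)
The plan is to exploit the fact that when $\Phi^T\Phi = N I_n$ and $P(\eta) = \eta I_n$, every matrix appearing in Corollary \ref{cor4} becomes a scalar multiple of $I_n$, so the first order conditions collapse to a single scalar equation whose root is easily identified.

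First, since $\partial P(\eta)/\partial \eta = I_n$, the chain rule \dref{sksg} reduces the stationarity condition to $\mathrm{Tr}(\partial \mathscr{C}(P)/\partial P) = 0$ for each criterion $\mathscr{C}$. Under the assumption $\Phi^T\Phi = N I_n$, the matrix defined in \eqref{eq:S} becomes
\begin{align*}
S \;=\; \eta I_n + \tfrac{\sigma^2}{N} I_n \;=\; \bigl(\eta + \tfrac{\sigma^2}{N}\bigr) I_n,
\end{align*}
which commutes with everything and is symmetric, so the $S^{-T}$ and $(\Phi^T\Phi)^{-1}$ factors in \dref{dsgc}, \dref{dsyc}, and \dref{dmlc} all pull out as strictly positive scalars. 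What remains inside the trace is the common factor $S - \widehat\theta^{\rm LS}(\widehat\theta^{\rm LS})^T$, so for each of SUREg, SUREy, and EB the first order condition reduces to
\begin{align*}
\mathrm{Tr}\bigl((\eta + \tfrac{\sigma^2}{N}) I_n - \widehat\theta^{\rm LS}(\widehat\theta^{\rm LS})^T\bigr) \;=\; n\bigl(\eta + \tfrac{\sigma^2}{N}\bigr) - \|\widehat\theta^{\rm LS}\|^2 \;=\; 0,
\end{align*}
whose unique root is $\eta^\star = \|\widehat\theta^{\rm LS}\|^2/n - \sigma^2/N$. By the same scalar reduction applied to \dref{dmsegec}, \dref{dmseyec}, and \dref{dmsemlec}, the Oracle conditions collapse to $\mathrm{Tr}(\eta I_n - \theta_0\theta_0^T) = n\eta - \theta_0^T\theta_0 = 0$, giving $\widehat\eta_{\rm MSEg} = \widehat\eta_{\rm MSEy} = \widehat\eta_{\rm EEB} = \theta_0^T\theta_0/n$.

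The remaining issue, which is the one genuinely interesting step, is the boundary case in which $\eta^\star < 0$, since the feasible set is $\Omega = \{\eta \geq 0\}$ and Assumption \ref{ass:1} is invoked only ``without loss of generality''. The key observation is that, because the positive scalar prefactors multiplying $S - \widehat\theta^{\rm LS}(\widehat\theta^{\rm LS})^T$ remain positive for every $\eta \geq 0$, the sign of $\partial \mathscr{C}/\partial \eta$ coincides with the sign of the linear function $n(\eta + \sigma^2/N) - \|\widehat\theta^{\rm LS}\|^2$. Hence each of the three criteria is strictly decreasing on $[0,\eta^\star)$ and strictly increasing on $(\eta^\star,\infty)$ when $\eta^\star > 0$, and strictly increasing on all of $[0,\infty)$ when $\eta^\star \leq 0$. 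This confirms that the minimizer over $\Omega$ is $\max(0,\eta^\star)$, yielding \dref{rro}, and applies verbatim to the Oracle case (where $\theta_0^T\theta_0/n \geq 0$ automatically).

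The main obstacle is essentially bookkeeping: one must verify that the scalar prefactors extracted from \dref{dsgc}, \dref{dsyc}, and \dref{dmlc} really are strictly positive on the interior of $\Omega$ so that dividing them out is legitimate, and that the reduction to the trace of $S - \widehat\theta^{\rm LS}(\widehat\theta^{\rm LS})^T$ does not introduce spurious roots. Once the identities $S = (\eta + \sigma^2/N)I_n$ and $(\Phi^T\Phi)^{-k} = N^{-k} I_n$ are substituted, this bookkeeping is immediate, so the whole proposition follows essentially by inspection of Corollary \ref{cor4}.
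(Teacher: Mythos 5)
Your proposal is correct and follows exactly the route the paper intends: the paper omits the proof of Proposition \ref{prop6} as ``straightforward,'' stating only that the explicit solutions are derived from Corollary \ref{cor4}, and your scalar reduction of \dref{dsgc}, \dref{dsyc}, \dref{dmlc} and their Oracle counterparts under $S=(\eta+\sigma^2/N)I_n$ is precisely that derivation. Your additional monotonicity argument justifying the $\max(0,\cdot)$ truncation when $\eta^\star<0$ is a welcome piece of rigor that the paper leaves implicit.
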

\begin{rem}
It is worth noting that the optimal hyperparameter
$\theta_0^T\theta_0/n$ holds for any $N$. Moreover, one has
\begin{align*}
&{\rm
MSEg}(\theta_0^T\theta_0/nI_n)=\frac{n\sigma^2}{N+n\sigma^2/(\theta_0^T\theta_0)}
<\frac{n\sigma^2}{N},
\end{align*}
where $n\sigma^2/N$ is equal to the MSEg of the LS estimator
(\ref{ls}). This means that  the ridge regression with
$P=\theta_0^T\theta_0/nI_n$ has a smaller MSEg than the LS estimator
(\ref{ls}) when $\Phi^T\Phi=NI_n$. Finally, \dref{rro} is a
consistent estimator of $\theta_0^T\theta_0/n$ if
$\widehat{\theta}^{\rm LS}\xra{}\theta_0$ as $N\xra{}\infty$.

\end{rem}

\subsection{Diagonal Kernel Matrix with $\Phi^T\Phi = NI_n$}

We let $P(\eta)$ be a diagonal kernel matrix (in this case we have
$p=n$.), i.e.,
\begin{align}
P(\eta)=\diag[\eta_1,\cdots,\eta_n]~\mbox{with}~\eta_i\geq 0,~1\leq
i\leq n\label{dk}.
\end{align} where $\eta_1,\cdots,\eta_n$ are the main diagonal elements of the diagonal matrix $\diag[\eta_1,\cdots,\eta_n]$.
Then under the assumption $\Phi^T\Phi = NI_n$, we have the following
result.

\begin{prop}
    \label{prop7}
Consider  $P(\eta)$ in \eqref{dk}. Further assume that $\Phi^T\Phi =
NI_n$. Then we have
    \begin{align}
\widehat{\eta}_{\rm Sg}=\widehat{\eta}_{\rm Sy}=\widehat{\eta}_{\rm
EB}\nonumber&= \left[\max\{0,\widehat{g}_1^2 \!-\!
\sigma^2/N\},\right.\\&\left.\cdots,\max\{0,\widehat{g}_n^2 \!-\!
\sigma^2/N\}\right]^T
\label{okml}
    \end{align}
    where $\widehat{g}_i$ is the $i$-th element of the LS estimate \eqref{ls}, $i=1,\dots,n$.
Moreover,
    \begin{align}
{\widehat\eta}_{\rm MSEg}
\!=\!{\widehat\eta}_{\rm
MSEy}
\!=\!{\widehat\eta}_{\rm EEB}
\!=\!
\left[(g_1^{0})^2,\cdots,(g_n^{0})^2\right]^T\!. \label{gok}
    \end{align}
\end{prop}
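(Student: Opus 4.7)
The proof plan hinges on the fact that, under the simplifying assumption $\Phi^T\Phi=NI_n$ together with the diagonal parameterization \eqref{dk}, all the matrices appearing in Corollary~\ref{cor4} become diagonal, so the criteria decouple across the coordinates $\eta_i$. Specifically, $(\Phi^T\Phi)^{-1}=I_n/N$, $S=P+(\sigma^2/N)I_n=\diag[s_1,\ldots,s_n]$ with $s_i=\eta_i+\sigma^2/N$, and $\widehat{\theta}^{\rm LS}=\Phi^T Y/N=[\widehat g_1,\ldots,\widehat g_n]^T$.

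First, I would apply the chain rule \eqref{sksg} using $\partial P/\partial \eta_i = e_ie_i^T$, which reduces each first-order condition to requiring the $(i,i)$-diagonal entry of $\partial\mathscr{C}(P)/\partial P$ to vanish. Using the diagonal identity $(S^{-1}MS^{-1})_{ii}=M_{ii}/s_i^2$, a short computation from \eqref{dsgc}, \eqref{dsyc}, and \eqref{dmlc} shows that the $(i,i)$-entry of each of $\partial\mathscr F_{\rm Sg}/\partial P$, $\partial\mathscr F_{\rm Sy}/\partial P$, and $\partial\mathscr F_{\rm EB}/\partial P$ is a strictly positive (data-dependent) multiple of
\begin{equation*}
s_i-\widehat g_i^2=\eta_i+\sigma^2/N-\widehat g_i^2,
\end{equation*}
so all three criteria yield the same scalar stationary equation $\eta_i=\widehat g_i^2-\sigma^2/N$. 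An entirely parallel reduction of \eqref{dmsegec}, \eqref{dmseyec}, and \eqref{dmsemlec} replaces the factor $s_i-\widehat g_i^2$ by $\eta_i-(g_i^0)^2$, producing the common oracle stationary equation $\eta_i=(g_i^0)^2$. Because $(g_i^0)^2\ge 0$, this directly delivers \eqref{gok}.

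To obtain \eqref{okml}, I still need to deal with the non-negativity constraint $\eta_i\ge 0$, since the unconstrained stationary point $\widehat g_i^2-\sigma^2/N$ can be negative. Here I would argue, separately for each of the three data-based criteria, that the scalar map $\eta_i\mapsto\mathscr C(P(\eta))$ is strictly convex on $\eta_i\ge 0$ (or at least monotone beyond the stationary point): the sign analysis of $(s_i-\widehat g_i^2)/s_i^\alpha$ for appropriate $\alpha\in\{2,3,4\}$ shows the derivative is negative for $s_i<\widehat g_i^2$ and positive for $s_i>\widehat g_i^2$, so the minimiser over $\eta_i\ge 0$ is $\widehat g_i^2-\sigma^2/N$ when that value is non-negative, and $0$ otherwise. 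Combining coordinates gives \eqref{okml}. (This is the one place where Assumption~\ref{ass:1} is genuinely relaxed, and the proof must acknowledge the boundary behaviour.)

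The main obstacle is the bookkeeping needed to verify that the three data-based matrices in Corollary~\ref{cor4} really do reduce to \emph{scalar multiples} of one another on each diagonal entry under $\Phi^T\Phi=NI_n$; once that is done, the off-diagonal stationarity conditions are automatically satisfied (they correspond to partial derivatives with respect to off-diagonal entries of $P$, which are constrained to zero and therefore play no role), and the three oracle cases follow by the same algebra with $\theta_0$ in place of $\widehat\theta^{\rm LS}$. The rest is a routine positivity/monotonicity argument in one variable.
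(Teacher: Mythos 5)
Your proposal is correct and follows exactly the route the paper intends: the paper omits this proof as ``straightforward,'' and the intended argument is precisely your reduction of the first-order conditions \eqref{sksg} via Corollary \ref{cor4} to the decoupled scalar equations $s_i=\widehat g_i^2$ (resp.\ $\eta_i=(g_i^0)^2$), together with the sign/monotonicity analysis on $\eta_i\ge 0$ that justifies the $\max\{0,\cdot\}$ truncation. The only nitpick is that the relevant powers of $s_i$ in the denominators are $3$, $3$, and $2$ for SUREg, SUREy, and EB respectively (not up to $4$), which is immaterial since $s_i=\eta_i+\sigma^2/N>0$ in all cases.
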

\begin{rem}
In the papers \citep{Aravkin2012,Aravkin2014}, the linear model
\dref{firls} but with a slightly different setting is considered,
where the parameter $\theta$ is partitioned into $m$ sub-vectors
$\theta=[\theta^{(1)^T},\cdots,\theta^{(m)^T}]^T$ and the dimension
of $\theta^{(i)}$ is $n_i$ so that $n=\sum_{i=1}^{m} n_i$. In
addition, the prior distribution of $\theta^{(i)}$ is set to be
$\mathscr{N}(0,\eta_iI_{n_i})$ and $\eta_i$ is an independent and
identically distributed exponential random variable with probability
density $p_\gamma(\eta_i) = \gamma\exp(-\gamma \eta_i)\chi(\eta_i)$
where $\gamma$ is a positive scalar and $\chi(t)=1$ for $t\geq 0$
and 0 otherwise. Under the setting given above, the solution
maximizing the marginal posterior density of $\eta$ given the data
and the optimal solution of the MSEg are derived in
\cite{Aravkin2012,Aravkin2014} when $\Phi^T\Phi = NI_n$. When
$n_i=1$ for $i=1,\cdots,m$ and $\gamma=0$, their solutions become
\dref{okml} and \dref{gok}, respectively. In contrast, we study here
the SUREg, SUREy, MSEy, and EEB estimators other than the EB and
MSEg estimators and find their solutions are the same under the
simplified setting, respectively. Clearly, $\max\{0,\widehat{g}_i^2
- \sigma^2/N\}$ is a  consistent estimator of $(g_i^{0})^2$,
$i=1,\dots,n$.
\end{rem}

\section{Properties of Hyperparameter Estimators: Infinite Data Case}
\label{sec5} In this section, we investigate the asymptotic
properties of these hyperparameter estimators. For this purpose, it
is useful to first consider the asymptotic property of the partial
derivatives derived in Corollary \ref{cor4}. Noting the finding of
Corollary \ref{cor4} under \eqref{dmlc} and that $S -
\widehat{\theta}^{\rm LS}(\widehat{\theta}^{\rm LS})^T$ converges to
$P - \theta_0\theta_0^T$ under proper conditions, we can derive the
following Proposition.
\begin{prop}
    \label{prop8} Consider the partial derivatives derived in
    Corollary \ref{cor4}. Assume that $P$ is nonsingular and $\Phi^T\Phi/N \xra{} \Sigma$ almost surely
 as $N\xra{}\infty$, where $\Sigma$ is positive definite.
    Then we have as $N\xra{}\infty$
    \begingroup
    \allowdisplaybreaks
    \begin{align}
    &N^2\frac{{\partial {\rm MSEg}(P) }}{\partial P}
    \!\xra{}\!2\sigma^4
    P^{-T}
    \Sigma^{-2}
    P^{-1}
    (P \!- \! \theta_0\theta_0^T )
    P^{-T}\! \!\label{dmsegel}\\
    &N^2\frac{\partial \mathscr{F}_{\rm Sg}(P)}{\partial P}
    \!\xra{}\!2\sigma^4
    P^{-T}
    \Sigma^{-2}
    P^{-1}
    (P\!- \!  \theta_0\theta_0^T )
    P^{-T} \label{dsgel}\\
    &N\frac{{\partial {\rm MSEy}(P) }}{\partial P}
    \!\xra{}\!2\sigma^4
    P^{-T}
    \Sigma^{-1}
    P^{-1}
    (P \!- \!  \theta_0\theta_0^T)
    P^{-T} \label{dmseyel}\\
    &N\frac{\partial \mathscr{F}_{\rm Sy}(P)}{\partial P}
    \!\xra{}\!2\sigma^4
    P^{-T}
    \Sigma^{-1}
    P^{-1}
    (P \!- \! \theta_0\theta_0^T)
    P^{-T} \label{dsyel}\\
    &\frac{\partial {\rm EEB}(P)}{\partial P}
    \!\xra{}\!P^{-T}
    (P^T- \theta_0\theta_0^T)
    P^{-T}\\
    &\frac{\partial \mathscr{F}_{\rm EB}(P)}{\partial P}
    \!\xra{}\!P^{-T}
    (P^T\!- \! \theta_0\theta_0^T)
    P^{-T}
    \end{align}
    \endgroup
    almost surely.
\end{prop}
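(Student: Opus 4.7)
The plan is to take limits in the six closed-form expressions given by Corollary \ref{cor4} using two basic convergence facts plus continuity of matrix inversion. The scaling by $N^2$ or $N$ appearing in the statement is dictated by how many factors of $(\Phi^T\Phi)^{-1}$ (which is $O(1/N)$) are present in each expression; once we account for this, every remaining factor has an almost-sure limit.

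First I would establish two preliminary limits. Since $P$ is nonsingular and $\Phi^T\Phi/N \to \Sigma$ almost surely with $\Sigma$ positive definite, the continuous mapping theorem applied to matrix inversion gives $N(\Phi^T\Phi)^{-1} \to \Sigma^{-1}$ a.s., and in particular $(\Phi^T\Phi)^{-1} \to 0$ a.s. Consequently $S = P + \sigma^2(\Phi^T\Phi)^{-1} \to P$ a.s., and $S^{-1} \to P^{-1}$, $S^{-T} \to P^{-T}$ a.s. Second, writing $\widehat{\theta}^{\rm LS} = \theta_0 + (\Phi^T\Phi)^{-1}\Phi^T V$, the persistent excitation hypothesis together with the white-noise property of $v(t)$ yields $\Phi^T V/N \to 0$ a.s.\ by a standard strong law of large numbers for martingale differences, so $\widehat{\theta}^{\rm LS} \to \theta_0$ a.s.\ and therefore $\widehat{\theta}^{\rm LS}(\widehat{\theta}^{\rm LS})^T \to \theta_0\theta_0^T$ a.s.

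With these two facts, the six limits follow by direct substitution. For \eqref{dmsegec} and \eqref{dsgc}, which carry the factor $(\Phi^T\Phi)^{-2}$, I multiply by $N^2$ and use $N^2(\Phi^T\Phi)^{-2} = (\Phi^T\Phi/N)^{-2} \to \Sigma^{-2}$ a.s.; together with $S^{-T} \to P^{-T}$, $S^{-1} \to P^{-1}$, and $S - \widehat{\theta}^{\rm LS}(\widehat{\theta}^{\rm LS})^T \to P - \theta_0\theta_0^T$ a.s., this delivers \eqref{dmsegel} and \eqref{dsgel}. For \eqref{dmseyec} and \eqref{dsyc}, which carry a single factor $(\Phi^T\Phi)^{-1}$, I multiply by $N$ and use $N(\Phi^T\Phi)^{-1} \to \Sigma^{-1}$ a.s.\ analogously. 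For \eqref{dmsemlec} and \eqref{dmlc}, no scaling is needed since they only depend on $(\Phi^T\Phi)^{-1}$ through $S$, and the limits collapse to $P^{-T}(P^T - \theta_0\theta_0^T)P^{-T}$.

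The main technical obstacle is actually rather mild here: it is the joint almost-sure convergence of all the random ingredients ($\Phi^T\Phi/N$, $(\Phi^T\Phi)^{-1}\Phi^T V$, and hence $\widehat{\theta}^{\rm LS}$) on a common event of probability one, so that the continuous mapping argument can be applied simultaneously. Once this is in place, every step is a mechanical application of the algebra of a.s.\ limits to continuous matrix operations, and the stated conclusions follow.
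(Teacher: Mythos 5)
Your proposal is correct and follows exactly the route the paper intends: the paper omits the proof of Proposition \ref{prop8} as straightforward, but the sentence preceding it indicates precisely this argument — substitute the expressions of Corollary \ref{cor4}, use $N(\Phi^T\Phi)^{-1}\xra{}\Sigma^{-1}$, $S^{-1}\xra{}P^{-1}$ and $\widehat{\theta}^{\rm LS}\xra{}\theta_0$ almost surely (the same facts the authors invoke without proof at the start of their proof of Proposition \ref{thm11}), and pass to the limit with the appropriate $N$ or $N^2$ scaling absorbing the factors of $(\Phi^T\Phi)^{-1}$.
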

Proposition \ref{prop8} shows that the three pairs,
$N^2\frac{{\partial {\rm MSEg}(P) }}{\partial P}$ and
$N^2\frac{\partial \mathscr{F}_{\rm Sg}(P)}{\partial P}$, and
$N\frac{{\partial {\rm MSEy}(P) }}{\partial P}$ and
$N\frac{\partial \mathscr{F}_{\rm Sy}(P)}{\partial P}$, and
$\frac{{\partial {\rm EEB}(P) }}{\partial P}$ and $\frac{\partial
\mathscr{F}_{\rm EB}(P)}{\partial P}$, have respectively the same
limit as $N$ goes to $\infty$. This observation motivates to explore
if this property also holds for the estimation criteria of these
hyperparameter estimators. The answer is affirmative and we have the
following result.
\begin{prop}
    \label{thm11} Consider the hyperparameter estimation criteria SUREg \dref{sgcf},  SUREy \dref{sycf}, and
    EB \dref{mlcf}, and their corresponding Oracle counterparts
    MSEg \eqref{msege}, MSEy \eqref{mseye}, and EEB \eqref{eebe}.
    Assume that $P$ is nonsingular and $\Phi^T\Phi/N \xra{} \Sigma$ almost surely
 as $N\xra{}\infty$, where $\Sigma$ is positive definite.
    Then we have as $N\xra{}\infty$
    \begingroup
    \allowdisplaybreaks
    \begin{align}
    &N^2({\rm MSEg}(P) - \sigma^2{\rm Tr}( (\Phi^T \Phi)^{-1}))
    \xra{}W_g(P,\Sigma,\theta_0)\!\label{cl1}\\
    &N^2(\mathscr{F}_{\rm Sg}(P) - \sigma^2{\rm Tr}( (\Phi^T \Phi)^{-1}))
    \xra{}W_g(P,\Sigma,\theta_0), \label{cl2}\\
    &N({\rm MSEy}(P) - (n+N)\sigma^2)\xra{}W_y(P,\Sigma,\theta_0) \label{cl3}\\
    \nonumber
    &N(\mathscr{F}_{\rm Sy}(P)
    + Y^T\Phi(\Phi^T\Phi)^{-1}\Phi^TY
    -Y^TY
    - 2n\sigma^2)\\
    &\hspace{4.1cm}\xra{}W_y(P,\Sigma,\theta_0), \label{cl4}\\
    \nonumber
    &{\rm EEB}(P)- (N-n) \\
    &\hspace{0.4cm} - (N\!-\!n)\log \sigma^2 \! - \! \log \det(\Phi^T\Phi)\!\xra{}\!W_{\rm B}(P,\theta_0),\label{cl5}\\
    \nonumber
    &\mathscr{F}_{\rm EB}(P)
    + Y^T\Phi(\Phi^T\Phi)^{-1}\Phi^TY/\sigma^2
    -Y^TY/\sigma^2\\
    &\hspace{0.4cm} - (N\!-\!n)\log \sigma^2 \! - \! \log \det(\Phi^T\Phi)\!\xra{}\!W_{\rm B}(P,\theta_0),\label{cl6}
    \end{align}
    \endgroup
    almost surely, where
    \begingroup
    \allowdisplaybreaks
 \begin{align}
W_g(P,\Sigma,\theta_0)&=
\sigma^4\theta_0^TP^{-T}\Sigma^{-2}P^{-1}\theta_0\nonumber\\
&~~~~
-2\sigma^4{\rm Tr}
\big( \Sigma^{-1}P^{-1}\Sigma^{-1}\big),  \label{msedc}\\
\nonumber
W_y(P,\Sigma,\theta_0)& =
\sigma^4\theta_0^TP^{-T}\Sigma^{-1}P^{-1}\theta_0\\
&~~~~-2\sigma^4{\rm Tr}
\big(
\Sigma^{-1}P^{-1}\big),\label{msedy}\\
W_{\rm B}(P,\theta_0)& =
\theta_0^TP^{-1}\theta_0 +\log \det(P). \label{ebdc}
\end{align}
    \endgroup
\end{prop}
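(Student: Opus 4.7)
The plan is to reduce every $N$-dimensional matrix appearing in the six criteria to $n$-dimensional matrices via the Woodbury identity $Q^{-1}=\sigma^{-2}(I_N-\Phi R^{-1}\Phi^T)$ and the companion identity $P\Phi^TQ^{-1}=R^{-1}\Phi^T$, where $R=\Phi^T\Phi+\sigma^2P^{-1}$ as in \eqref{eq:R}. The hypothesis $\Phi^T\Phi/N\to\Sigma$ a.s. with $\Sigma$ positive definite then yields $NR^{-1}\to\Sigma^{-1}$ and $N(\Phi^T\Phi)^{-1}\to\Sigma^{-1}$ a.s.; combined with the strong-law facts $V^TV/N\to\sigma^2$ and $\Phi^TV/N\to 0$ a.s., every term in the six criteria can then be tracked at its leading order.

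First I would dispatch the three Oracle criteria, which are deterministic. For ${\rm MSEg}(P)$, the bias reduces to $P\Phi^TQ^{-1}\Phi\theta_0-\theta_0=-\sigma^2R^{-1}P^{-1}\theta_0$, whose squared norm is of order $1/N^2$ and, after multiplication by $N^2$, yields the quadratic part of $W_g$. The variance ${\rm Tr}(R^{-1}\Phi^T\Phi R^{-T})$ is handled via the identity $R^{-T}-(\Phi^T\Phi)^{-1}=-\sigma^2(\Phi^T\Phi)^{-1}P^{-T}R^{-T}$ and the decomposition ${\rm Tr}(R^{-1}\Phi^T\Phi R^{-T})={\rm Tr}(R^{-T})-\sigma^2{\rm Tr}(R^{-1}P^{-1}R^{-T})$, so that subtracting $\sigma^2{\rm Tr}((\Phi^T\Phi)^{-1})$ as prescribed in \eqref{cl1} cancels the divergent piece and leaves the trace part of $W_g$. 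The analysis of ${\rm MSEy}(P)$ is analogous but with one fewer factor of $(\Phi^T\Phi)^{-1}$, yielding the weaker scaling $N$. For ${\rm EEB}(P)$, Sylvester's identity gives $\log\det Q=(N-n)\log\sigma^2+\log\det P+\log\det R$, Woodbury gives $\sigma^2{\rm Tr}(Q^{-1})=N-n+\sigma^2{\rm Tr}(R^{-1}P^{-1})$, and $\theta_0^T\Phi^TQ^{-1}\Phi\theta_0=\theta_0^TP^{-1}\theta_0-\sigma^2\theta_0^TP^{-1}R^{-1}P^{-1}\theta_0$; subtracting the offset in \eqref{cl5} leaves $\theta_0^TP^{-1}\theta_0+\log\det P$ plus the vanishing residual $\log\det(I+\sigma^2P^{-1}(\Phi^T\Phi)^{-1})$.

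For the three SURE criteria I substitute $Y=\Phi\theta_0+V$ and expand. For $\mathscr{F}_{\rm EB}(P)$ the Woodbury form gives $\mathscr{F}_{\rm EB}(P)=\sigma^{-2}Y^TY-\sigma^{-2}Y^T\Phi R^{-1}\Phi^TY+\log\det Q$; subtracting the offsets of \eqref{cl6} leaves $Y^T\Phi[(\Phi^T\Phi)^{-1}-R^{-1}]\Phi^TY/\sigma^2+\log\det R-\log\det(\Phi^T\Phi)+\log\det P$, and using $(\Phi^T\Phi)^{-1}-R^{-1}=\sigma^2(\Phi^T\Phi)^{-1}P^{-1}R^{-1}$ together with $\Phi^TY/N\to\Sigma\theta_0$ a.s. produces the limit $\theta_0^TP^{-1}\theta_0+\log\det P=W_B$. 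For $\mathscr{F}_{\rm Sg}$ and $\mathscr{F}_{\rm Sy}$, I would write $\Phi^TQ^{-1}Y=P^{-1}R^{-1}\Phi^T\Phi\theta_0+P^{-1}R^{-1}\Phi^TV$; the deterministic piece exactly reproduces the bias contribution of the Oracle MSE computed in the previous paragraph, so that it only remains to show that the mixed and purely stochastic terms are of order smaller than $1/N^2$ (respectively $1/N$).

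The main obstacle is this last step, since several cross-terms in the SURE expansions are of order only $1/N^{3/2}$ or $1/N$ and threaten to survive the $N^2$ scaling. The cleanest route is to write each SURE criterion as its Oracle counterpart plus an explicit residual and bound the residual by spectral-norm arguments, using $\|R^{-1}\|=O(1/N)$, $\|(\Phi^T\Phi)^{-1}\|=O(1/N)$ together with the almost-sure bound $\|\Phi^TV\|=O(\sqrt{N\log\log N})$ from the law of the iterated logarithm; this shows in particular that the worst stochastic term $\sigma^4V^T\Phi R^{-T}P^{-T}(\Phi^T\Phi)^{-2}P^{-1}R^{-1}\Phi^TV$ inside $\mathscr{F}_{\rm Sg}$ is almost surely $O(1/N^3)$ and hence vanishes after multiplication by $N^2$, with the analogous bounds for the remaining residuals delivering \eqref{cl2}, \eqref{cl4}, and \eqref{cl6}.
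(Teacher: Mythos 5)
Your proposal is correct, and the treatment of the three Oracle criteria and of the $\log\det$ and trace bookkeeping coincides with the paper's: both reduce everything to $n\times n$ matrices via $P\Phi^TQ^{-1}=R^{-1}\Phi^T$ and $\sigma^2Q^{-1}=I_N-\Phi R^{-1}\Phi^T$, use $NR^{-1}\to\Sigma^{-1}$ and $N(\Phi^T\Phi)^{-1}\to\Sigma^{-1}$, and let the prescribed offsets cancel the divergent pieces. Where you genuinely diverge is in the stochastic terms of the three data-dependent criteria: you substitute $Y=\Phi\theta_0+V$, expand, and control the mixed and quadratic noise terms by the law of the iterated logarithm bound on $\|\Phi^TV\|$. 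The paper instead uses the identity $\Phi^TQ^{-1}Y=S^{-1}\widehat{\theta}^{\rm LS}$ with $S=P+\sigma^2(\Phi^T\Phi)^{-1}$, so that, e.g., the first term of $\mathscr{F}_{\rm Sg}$ becomes $\sigma^4(\widehat{\theta}^{\rm LS})^TS^{-T}N^2(\Phi^T\Phi)^{-2}S^{-1}\widehat{\theta}^{\rm LS}$ after scaling; every factor then converges ($\widehat{\theta}^{\rm LS}\to\theta_0$, $S^{-1}\to P^{-1}$, $N^2(\Phi^T\Phi)^{-2}\to\Sigma^{-2}$) and the limit follows with no residual to bound. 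This also shows that the ``main obstacle'' you identify is not actually there: the entire $N^2$ (resp.\ $N$) normalization is absorbed by the inner factor $N^2(\Phi^T\Phi)^{-2}$ (resp.\ $N(\Phi^T\Phi)^{-1}$), so no cross-term ever threatens to survive, and the LIL --- which quietly asks more of the noise than the stated white-noise assumption --- is not needed; the consistency $\widehat{\theta}^{\rm LS}\to\theta_0$ suffices. Two cosmetic points: your claim that the deterministic piece of the SURE expansion ``exactly reproduces'' the Oracle bias term is true only in the limit (the finite-$N$ expressions $\theta_0^T\Phi^T\Phi R^{-T}P^{-T}(\Phi^T\Phi)^{-2}P^{-1}R^{-1}\Phi^T\Phi\theta_0$ and $\sigma^{-4}$ times the Oracle bias differ for finite $N$, though both tend to $\theta_0^TP^{-T}\Sigma^{-2}P^{-1}\theta_0$ after scaling), and your $O(1/N^3)$ bound on the worst quadratic noise term should carry a $\log\log N$ factor --- neither affects the conclusion.
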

\begin{rem}
For these hyperparameter estimation criteria,
$W_g(P,\Sigma,\theta_0)$, $W_y(P,\Sigma,\theta_0)$ and
$W_B(P,\theta_0)$ contain all information about the
asymptotic benefits of regularization: how it depends on any kernel
matrix $P$, any true impulse response vector $\theta_0$ and any
stationary properties of the input covariance matrix $\Sigma$.
\end{rem}
Proposition \ref{thm11} enable us to derive asymptotic properties of
these hyperparameters estimator for any parameterization $P(\eta)$
of the kernel matrix. Moreover, it also implies that the estimators
$\widehat{\eta}_{\rm Sg}$, $\widehat{\eta}_{\rm Sy}$, and
$\widehat{\eta}_{\rm EB}$ possibly share the same limits with their
corresponding Oracle counterparts $\widehat{\eta}_{\rm MSEg}$,
$\widehat{\eta}_{\rm MSEy}$, and $\widehat{\eta}_{{\rm EEB}}$,
respectively.

To state the result, we need an extra assumption. It is worth to
note that the limit functions $W_g(P(\eta),\Sigma,\theta_0)$,
$W_y(P(\eta),\Sigma,\theta_0)$ and $W_{\rm
    B}(P(\eta),\theta_0)$ may not have a unique global minimum, respectively.
In this case,
    the analysis of how minimizing elements of a sequence of
    functions $M_N(\eta)$
    converge to the minimizing element of the limit function $\lim
    M_N(\eta)$, i.e.,
    \begin{align}\label{eq:Meta}
    ``\lim  \arg\min M_N(\eta) = \arg\min \lim M_N(\eta)",
    \end{align} where $M_N(\eta)$ denotes any function on the left
    handside of ``$\rightarrow$'' in \eqref{cl1} to \eqref{cl6},
    follows the same idea as for prediction error identification
    methods, see, e.g. Lemma 8.2 and Theorem 8.2 in
    \cite{Ljung1999}. Accordingly, it is useful in this context to let ``$\arg\min$'' denote the set of
    minimizing arguments in case where $W_g(P(\eta),\Sigma,\theta_0)$, $W_y(P(\eta),\Sigma,\theta_0)$ and
$W_{\rm B}(P(\eta),\theta_0)$ do not have a unique global minimum,
respectively,:
    \begin{align}\label{eq:D}
    \arg\min_{\eta \in \Omega} M(\eta) = \big\{\eta|\eta \in
    \Omega, M(\eta)=\min_{\eta'\in \Omega }M(\eta')\big\},
    \end{align}
where $M(\eta)$ could be any one of $W_g(P(\eta),\Sigma,\theta_0)$,
$W_y(P(\eta),\Sigma,\theta_0)$ and $W_{\rm B}(P(\eta),\theta_0)$.

Now we define
\begin{align}
\label{osg} &\eta_g^*=\arg\min_{\eta \in
\Omega}W_g(P(\eta),\Sigma,\theta_0),\\
&\eta_{\rm y}^*
=\argmin_{\eta \in \Omega} W_y(P(\eta),\Sigma,\theta_0),\label{osy}\\
&\eta_{\rm B}^* =\argmin_{\eta \in \Omega} W_{\rm
B}(P(\eta),\theta_0).\label{oml}
\end{align}
and the assumption we need can be stated as follows.
\begin{assum}\label{ass:2}
    The sets $\eta_g^*,\eta_y^*$ and $\eta_B^*$ are discrete, i.e., made up of only isolated points, respectively.
\end{assum}

Then we have the following theorem.
\begin{thm}\label{thm10}
Assume that $P(\eta)$ is any parameterization of the kernel matrix
such that $P(\eta)$ is positive definite and moreover, $\Phi^T\Phi/N
\xra{} \Sigma$ almost surely  as $N\xra{}\infty$, where $\Sigma$ is
positive definite. Then we have as $N\xra{}\infty$
\begin{align}
&\widehat{\eta}_{\rm MSEg} \xra{}\eta_{\rm g}^*,~~
\widehat{\eta}_{\rm Sg} \xra{}\eta_{\rm g}^*, \label{ohpsg}\\
&\widehat{\eta}_{\rm MSEy} \xra{}\eta_{\rm y}^*,~~\widehat{\eta}_{\rm Sy} \xra{}\eta_{\rm y}^* \label{ohpsy},\\
&\widehat{\eta}_{{\rm EEB}}\xra{}\eta_{\rm
B}^*,~~\widehat{\eta}_{\rm EB} \xra{}\eta_{\rm B}^*, \label{ohpml}
\end{align}
almost surely.
Moreover, $\eta_{\rm g}^*$, $\eta_{\rm y}^*$, and $\eta_{\rm B}^*$ are a root of the system of equations, respectively, $i=1,\dots,p$:
    \begingroup
    \allowdisplaybreaks
    \begin{align*}
    &{\rm Tr}\Big( P(\eta)^{-1} \Sigma^{-2}
    P(\eta)^{-1}
    \big(P(\eta)  -
    \theta_0\theta_0^T\big) P(\eta)^{-1}
    \frac{\partial P(\eta)}{\partial \eta_i}\Big)\!=\!0,\\
    &{\rm Tr}\Big( P(\eta)^{-1} \Sigma^{-1}
    P(\eta)^{-1}
    \big(P(\eta)  -
    \theta_0\theta_0^T\big) P(\eta)^{-1}
    \frac{\partial P(\eta)}{\partial \eta_i}\Big)\!=\!0,\\
    &{\rm Tr}\Big( P(\eta)^{-1}
    \big(P(\eta)  -
    \theta_0\theta_0^T\big) P(\eta)^{-1}
    \frac{\partial P(\eta)}{\partial \eta_i}\Big)\!=\!0.
    \end{align*}
\endgroup
\end{thm}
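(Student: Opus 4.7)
The plan is to leverage Proposition \ref{thm11} to convert convergence of the estimation criteria into convergence of their minimizers, and then read off the three first-order conditions by differentiating the explicit limit functions $W_g$, $W_y$, $W_{\rm B}$.

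First, I would observe that every shift and multiplicative factor applied to the six criteria in \eqref{cl1}--\eqref{cl6} is independent of $\eta$: the terms $\sigma^2{\rm Tr}((\Phi^T\Phi)^{-1})$, $(n+N)\sigma^2$, $Y^T\Phi(\Phi^T\Phi)^{-1}\Phi^TY$, $Y^TY$, $\log\det(\Phi^T\Phi)$, and the scaling factors $N^2$ and $N$ depend only on the data and the sample size. Consequently the $\argmin$ over $\eta\in\Omega$ of each original criterion coincides with the $\argmin$ of the correspondingly shifted and scaled criterion. Thus all six optimization problems reduce, up to an $\eta$-independent transformation, to minimization of a function that converges almost surely, for each fixed $\eta$, to one of $W_g(P(\eta),\Sigma,\theta_0)$, $W_y(P(\eta),\Sigma,\theta_0)$ or $W_{\rm B}(P(\eta),\theta_0)$.

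Second, I would upgrade this pointwise almost-sure convergence to uniform almost-sure convergence on the compact set $\Omega$. Continuity of the parameterization $P(\eta)$ together with compactness of $\Omega$ and positive definiteness of $P(\eta)$ ensure that resolvent-type quantities such as $(\Phi P(\eta)\Phi^T+\sigma^2 I_N)^{-1}$ and $P(\eta)^{-1}$ are uniformly bounded in $\eta$; combining this with $\Phi^T\Phi/N\xra{}\Sigma$ and the strong law applied to the quadratic and linear forms in $Y$ that appear in the criteria yields uniform convergence by a standard equicontinuity argument. With uniform almost-sure convergence in place, I would invoke the classical $\argmin$-convergence result (the paper cites Lemma 8.2 and Theorem 8.2 of \cite{Ljung1999}) to conclude that every cluster point of the sequence of minimizers lies in the set of minimizers of the corresponding limit function. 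Assumption \ref{ass:2} asserts that $\eta_g^*$, $\eta_y^*$ and $\eta_{\rm B}^*$ consist of isolated points, which precludes drift between distinct minimizers and delivers \eqref{ohpsg}--\eqref{ohpml} almost surely.

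Finally, to derive the three displayed first-order conditions I would differentiate \eqref{msedc}--\eqref{ebdc} with respect to $\eta_i$ using the chain rule together with the identity $\partial P^{-1}/\partial\eta_i=-P^{-1}(\partial P/\partial\eta_i)P^{-1}$, then combine with cyclic invariance of the trace. The main obstacle will be the uniform-convergence step: Proposition \ref{thm11} is stated only pointwise and involves cancellation of large $\eta$-independent terms, so carefully controlling the $\eta$-dependence of the residuals---in particular showing that matrix-valued expressions like $(I+\sigma^{-2}\Phi^T\Phi P(\eta))^{-1}$ behave well uniformly in $\eta\in\Omega$---requires some care, though the compactness of $\Omega$, continuity of $P(\eta)$, and positive definiteness render this tractable.
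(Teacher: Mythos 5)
Your proposal is correct and follows essentially the same route as the paper: reduce each criterion to its $\eta$-independently shifted and scaled version, establish uniform almost-sure convergence to $W_g$, $W_y$, $W_{\rm B}$ on a compact set where $P(\eta)$ and $P(\eta)^{-1}$ are uniformly bounded, invoke Theorem 8.2 of \cite{Ljung1999}, and obtain the stationarity conditions by differentiating the limit functions with $\partial P^{-1}/\partial\eta_i=-P^{-1}(\partial P/\partial\eta_i)P^{-1}$. The only cosmetic difference is that the paper carries out the uniformity argument on a compact subset $\overline{\Omega}\subset\Omega$ containing the limit point (justified by Assumption \ref{ass:1}), via an explicit term-by-term decomposition of the residuals, rather than on all of $\Omega$.
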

The Oracle estimators $\widehat{\eta}_{\rm MSEg}$ and
$\widehat{\eta}_{\rm MSEg}$ are optimal for any data length $N$ in
the average sense if we are concerned with the ability to reproduce
the true impulse response and predict the future outputs of the
system respectively, while the SUREg $\widehat{\eta}_{\rm
Sg}$ and the SUREy $\widehat{\eta}_{\rm Sy}$ are not
optimal in general. Surprisingly, a nice property of
$\widehat{\eta}_{\rm Sg}$ and $\widehat{\eta}_{\rm Sy}$ is that they
converge to the best possible hyperparameter $\eta_{\rm g}^*$ and
$\eta_{\rm y}^*$, respectively, for any chosen parameterized kernel
matrix $P(\eta)$. It is so to speak that the two SURE methods are
``asymptotically consistent or asymptotically optimal''. This means
that when $N$ is sufficiently large, $\widehat{\eta}_{\rm Sg}$ and
$\widehat{\eta}_{\rm Sy}$ perform as well as $\widehat{\eta}_{\rm
MSEg}$ and $\widehat{\eta}_{\rm MSEy}$, respectively. It is also
worth noting that even with increasing number of data the EB
estimator $\widehat{\eta}_{\rm EB}$ has another preference than to
minimize MSEg and MSEy.
\begin{rem}
In contrast with $W_g(P,\Sigma,\theta_0)$ and
$W_y(P,\Sigma,\theta_0)$, a unique property of $W_{\rm
B}(P,\theta_0)$ is that it does not depend on the limit $\Sigma$ of
$\Phi^T\Phi/N$. This can to some extent explain why the EB estimator
is more robust  than the SUREg and SUREy, when
$\Phi^T\Phi$ is ill-conditioned. Interested readers can find
experimental evidence for this in \cite{Pillonetto2015}. However, in
contrast with the SUREg and SUREy, the EB estimator is
not asymptotically optimal.
    \end{rem}

    \begin{rem}
        The different expressions of the limit functions $W_g(P(\eta),\Sigma,\theta_0)$, $W_y(P(\eta),\Sigma,\theta_0)$, and $W_{\rm B}(P(\eta),\theta_0)$
        imply that the optimal hyperparameters $\eta_{\rm g}^*$, $\eta_{\rm y}^*$, and $\eta_{\rm B}^*$ may be different.
        To check this, we consider the ridge regression case, where $P=\eta I_n$ with $\eta>0$.
        In this case, \dref{osg}, \dref{osy} and \dref{oml} become
        \begingroup
        \allowdisplaybreaks
        \begin{align*}
        &\eta_{\rm g}^* =\arg\min_{\eta\geq 0}
        \frac{\sigma^4}{\eta^2}\theta_0^T \Sigma^{-2}\theta_0
        \!-\!\frac{2\sigma^4}{\eta}{\rm Tr}(\Sigma^{-2})
        =\frac{\theta_0^T \Sigma^{-2}\theta_0}{{\rm Tr}(\Sigma^{-2})},\\
        &\eta_{\rm y}^* =\arg\min_{\eta\geq 0}
        \frac{\sigma^4}{\eta^2}\theta_0^T \Sigma^{-1}\theta_0
        \!-\!\frac{2\sigma^4}{\eta}{\rm Tr}(\Sigma^{-1})
        =\frac{\theta_0^T \Sigma^{-1}\theta_0}{{\rm
        Tr}(\Sigma^{-1})},\\
        &\eta_{\rm B}^* =\arg\min_{\eta\geq 0} \theta_0^T\theta_0/\eta +\log \eta^n =  \theta_0^T\theta_0/n.
        \end{align*} which shows that $\eta_{\rm g}^*, \eta_{\rm y}^*$ and $\eta_{\rm
        B}^*$ can be different.
        \endgroup
        Clearly, when $\Sigma=dI_n$ with $d>0$, $\eta_{\rm g}^* = \eta_{\rm y}^* = \eta_{\rm B}^*$.
    \end{rem}

\begin{cor}
    \label{cor2}
    Assume that $\Phi^T\Phi/N \xra{} dI_n$ almost surely with $d>0$
    and $P(\eta)$ is any positive definite parameterization of the kernel matrix.
    Then we have
    \begingroup
    \allowdisplaybreaks
    \begin{align*}
    &\eta_{\rm g}^* = \eta_{\rm y}^*
    = \argmin_{\eta \in \Omega}
    \theta_0^TP(\eta)^{-2}\theta_0
    -2{\rm Tr} (P(\eta)^{-1}),\\
&\eta_{\rm B}^*\!=\!\argmin_{\eta \in \Omega}
\theta_0^TP(\eta)^{-1}\theta_0 +\log \det(P(\eta)).
    \end{align*}
    \endgroup
     and further $\eta_{\rm g}^*$ and  $\eta_{\rm B}^*$
    are roots of the following system of equations, respectively:
    \begingroup
    \allowdisplaybreaks
    \begin{align*}
&{\rm Tr} \Big(P(\eta)^{-2}\big(P(\eta) - \theta_0\theta_0^T \big)
P(\eta)^{-1} \frac{\partial P(\eta)}{\partial \eta_i}\Big)=0,
i=1,\dots,p,\\&
        {\rm Tr}
        \Big(P(\eta)^{-1}\big(P(\eta) - \theta_0\theta_0^T \big) P(\eta)^{-1}
        \frac{\partial P(\eta)}{\partial \eta_i}\Big)=0, i=1,\dots,p.
        \end{align*}
        \endgroup
In addition, for the diagonal kernel matrix \dref{dk}, we have
      \begingroup
      \allowdisplaybreaks
      \begin{align*}
      &\eta_{\rm g}^*= \eta_{\rm y}^*=\eta_{\rm B}^*=\left[(g_1^{0})^2,\cdots,(g_n^{0})^2\right]^T.
      \end{align*}
      \endgroup
\end{cor}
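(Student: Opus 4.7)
The plan is to derive Corollary \ref{cor2} as a direct specialization of Theorem \ref{thm10} and Proposition \ref{thm11} to the case $\Sigma = dI_n$. The argument breaks into three stages: simplifying the three limit functionals, reading off the first-order conditions, and solving them for the diagonal parametrization.

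First I would substitute $\Sigma = dI_n$ into the expressions for $W_g$, $W_y$, $W_B$ given in \eqref{msedc}--\eqref{ebdc}. Since $\Sigma^{-1} = d^{-1}I_n$ commutes with every matrix, $W_g(P(\eta),dI_n,\theta_0)$ collapses to $(\sigma^4/d^2)\bigl[\theta_0^T P(\eta)^{-2}\theta_0 - 2{\rm Tr}(P(\eta)^{-1})\bigr]$, and $W_y(P(\eta),dI_n,\theta_0)$ collapses to $(\sigma^4/d)\bigl[\theta_0^T P(\eta)^{-2}\theta_0 - 2{\rm Tr}(P(\eta)^{-1})\bigr]$. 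Because these two objectives differ only by the positive factor $d$, their argmin sets coincide, giving the first displayed equality $\eta_g^*=\eta_y^*$ together with the form claimed. The expression for $\eta_B^*$ needs no simplification at all, since $W_B(P,\theta_0)$ is independent of $\Sigma$.

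Next I would obtain the first-order equations. One route is to plug $\Sigma = dI_n$ directly into the three trace equations stated at the end of Theorem \ref{thm10}: the inner factors $\Sigma^{-2}=d^{-2}I_n$ and $\Sigma^{-1}=d^{-1}I_n$ commute through, collapse out as scalars, and leave the equations displayed in Corollary \ref{cor2}. An equivalent, and maybe cleaner, route is to differentiate the simplified objectives in (i) directly using $\partial P(\eta)^{-1}/\partial \eta_i = -P(\eta)^{-1}\bigl(\partial P(\eta)/\partial \eta_i\bigr)P(\eta)^{-1}$ and the product rule on $P(\eta)^{-2}$, then use symmetry of $P(\eta)$ to combine the two symmetric contributions from $\partial P(\eta)^{-2}/\partial \eta_i$ into the single trace form, and finally factor $P(\eta)-\theta_0\theta_0^T$ out.

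Finally, for the diagonal parametrization \eqref{dk}, $\partial P(\eta)/\partial \eta_i$ is the rank-one matrix $E_{ii}$ with a single $1$ in position $(i,i)$, so each trace equation evaluates to a single diagonal entry. With $P(\eta)$ diagonal, $P(\eta)^{-1}(P(\eta) - \theta_0\theta_0^T)P(\eta)^{-1} = P(\eta)^{-1} - P(\eta)^{-1}\theta_0\theta_0^T P(\eta)^{-1}$, whose $(i,i)$ entry is $1/\eta_i - (g_i^0)^2/\eta_i^2$, and similarly the $(i,i)$ entry appearing in the $\eta_g^*$ equation is $1/\eta_i^2 - (g_i^0)^2/\eta_i^3$. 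Setting either to zero yields $\eta_i = (g_i^0)^2$ for each $i$, which is the claimed common minimizer.

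The most delicate point will be making the reduction from ``common critical point'' to ``common global minimizer'' rigorous when $\Omega$ allows $\eta_i = 0$. I would handle this by noting that the stated objectives blow up as any $\eta_i \downarrow 0$ on the diagonal (since $P(\eta)^{-1}$ diverges), so any minimizer lies in the interior where Assumption \ref{ass:1} applies and the first-order condition is necessary; combined with the strict convexity/uniqueness argument on each coordinate (the scalar map $\eta_i \mapsto (g_i^0)^2/\eta_i + \log \eta_i$ for $\eta_B^*$, and $\eta_i \mapsto (g_i^0)^2/\eta_i^2 - 2/\eta_i$ for $\eta_g^*$, both have unique minimizers at $\eta_i = (g_i^0)^2$), this pins down the solution uniquely, which also verifies Assumption \ref{ass:2} in this case and hence justifies invoking Theorem \ref{thm10}.
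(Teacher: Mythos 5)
Your proof is correct and is precisely the straightforward specialization of Proposition \ref{thm11} and Theorem \ref{thm10} to $\Sigma=dI_n$ that the paper intends; the paper explicitly omits the proof of Corollary \ref{cor2} as ``straightforward,'' so there is no alternative route to compare against. All the computations check out: $W_g$ and $W_y$ reduce to positive scalar multiples ($\sigma^4/d^2$ and $\sigma^4/d$) of $\theta_0^TP(\eta)^{-2}\theta_0-2{\rm Tr}(P(\eta)^{-1})$ and hence share the same argmin set, the trace conditions collapse as you describe, and for the diagonal kernel the coordinatewise equations give $\eta_i=(g_i^0)^2$ (your closing remark on interiority/uniqueness implicitly assumes $g_i^0\neq 0$, consistent with Assumption \ref{ass:1}).
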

 In Theorem \ref{thm10}, we have considered the
convergence of those hyperparameter estimators.
In fact, we can
further derive their corresponding convergence rate. To this end, we
let $\xi_N=o_p(a_N)$ denote that the sequence
$\{\xi_N/a_N\}$ for nonzero sequence $\{a_N\}$ converges in
probability to zero, i.e., $\forall \epsilon>0,
P(|\xi_N/a_N|>\epsilon) \rightarrow0$ as
$N\rightarrow\infty$, while $\xi_N=O_p(a_N)$ denote that
$\{\xi_N/a_N\}$ is bounded in probability, i.e., $\forall
\epsilon>0,\exists L>0$ such that $P(|\xi_N/a_N|>L)<\epsilon,~\forall
N$. Then we have the following theorem.

\begin{figure*}[t]
    \centering
    \hspace*{-1em}
    \mbox{\includegraphics[scale=0.45]{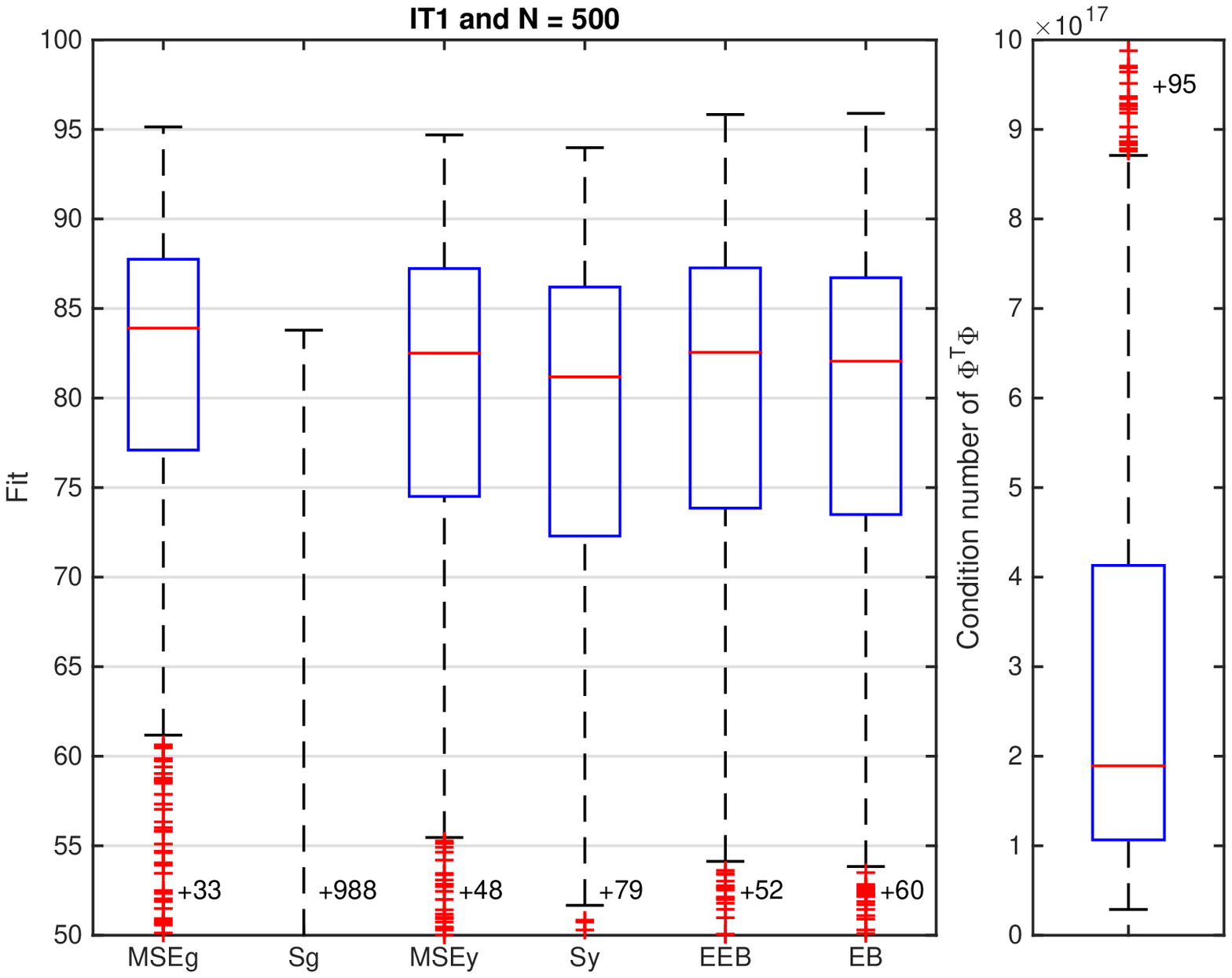}
    \includegraphics[scale=0.45]{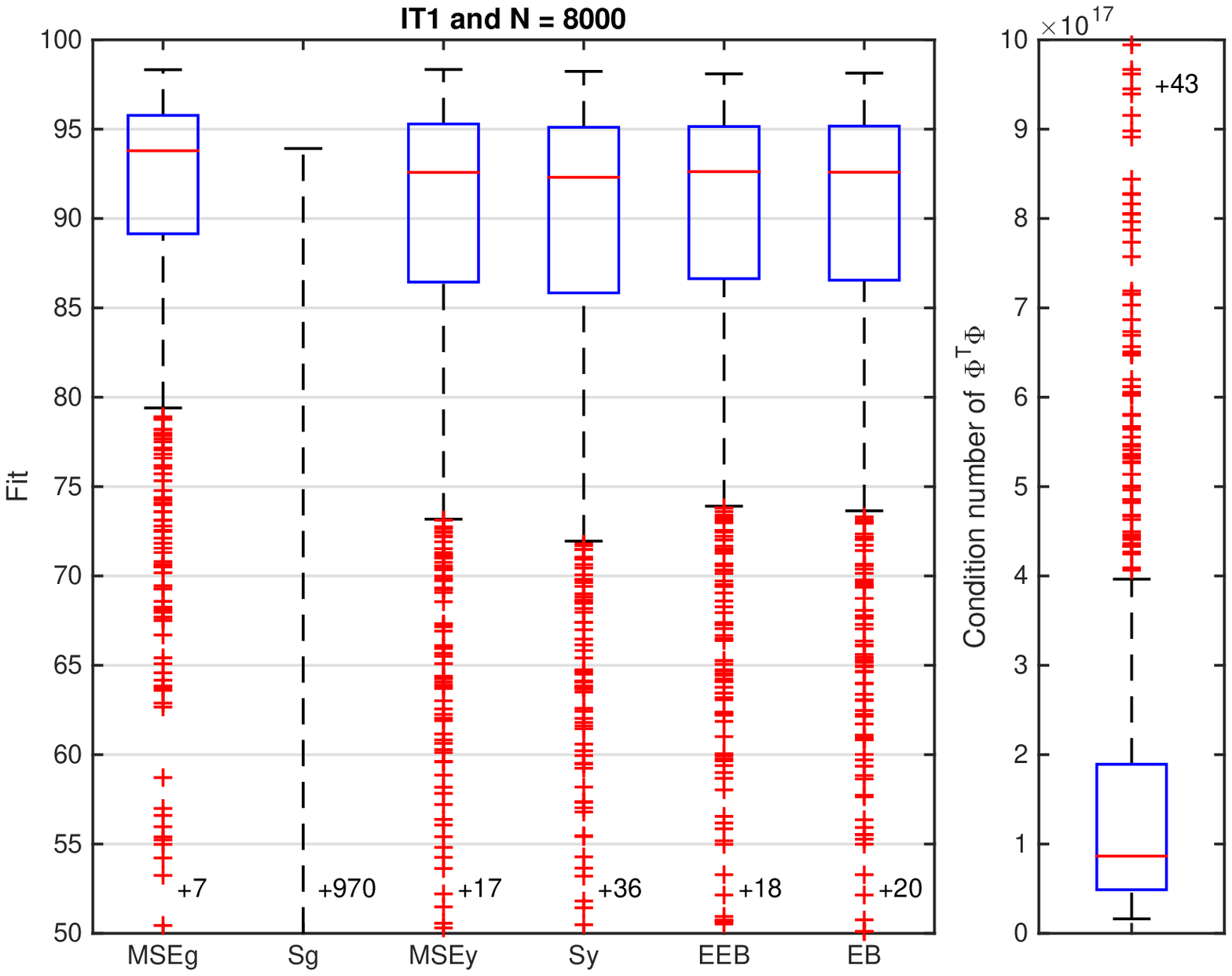}}
    \vspace{-3ex}
    \caption{Boxplot of the 1000 fits for the bandlimited white Gaussian noise input with the normalized band $[0,0.6]$ and boxplot of the condition numbers of the matrix
    $\Phi^T\Phi$: data lengths $N=500$ (left) and $N=8000$ (right).}
    \label{f1}
\end{figure*}

\begin{figure*}[t]
    \centering
    \hspace*{-1em}
    \mbox{\includegraphics[scale=0.45]{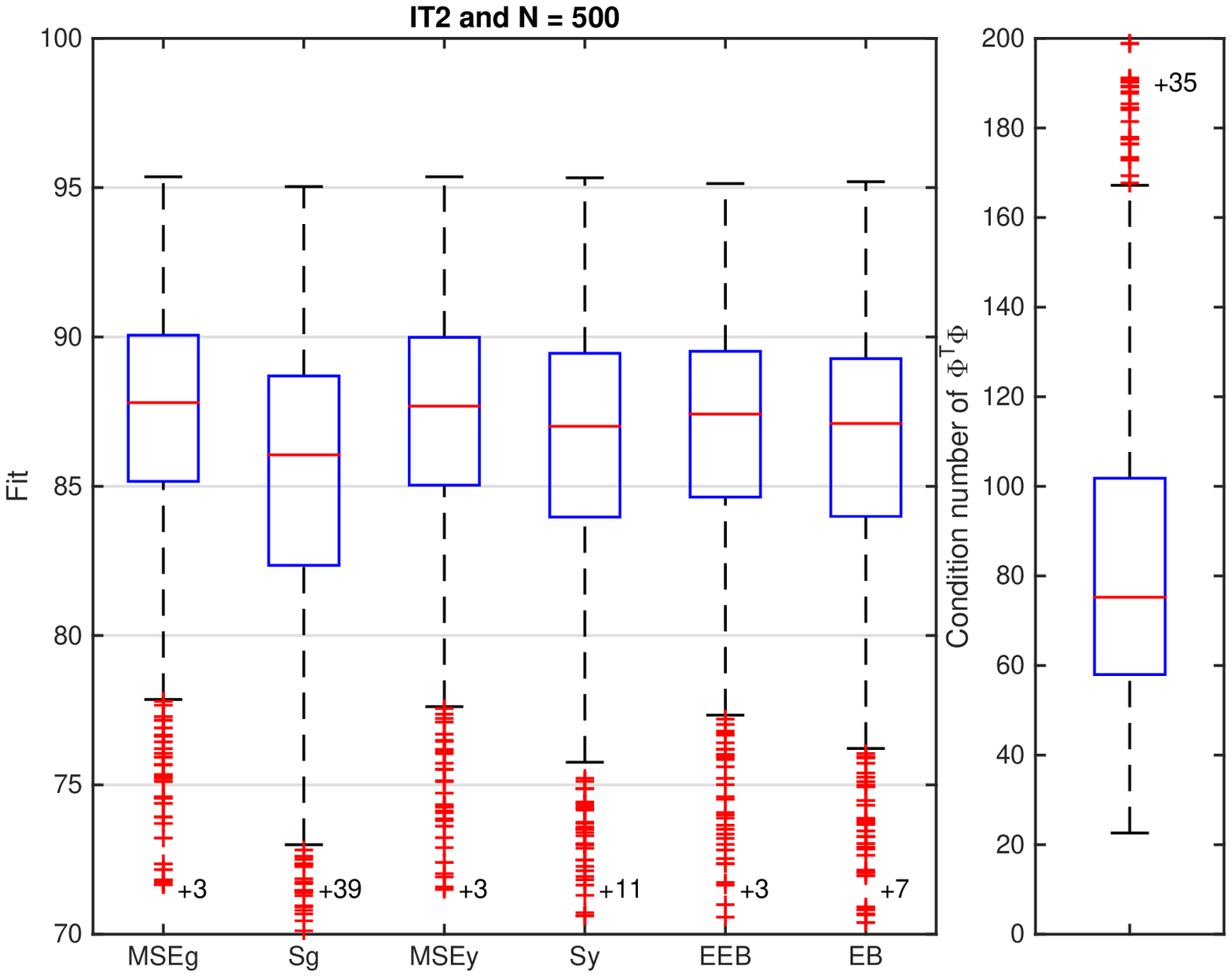}
    \includegraphics[scale=0.45]{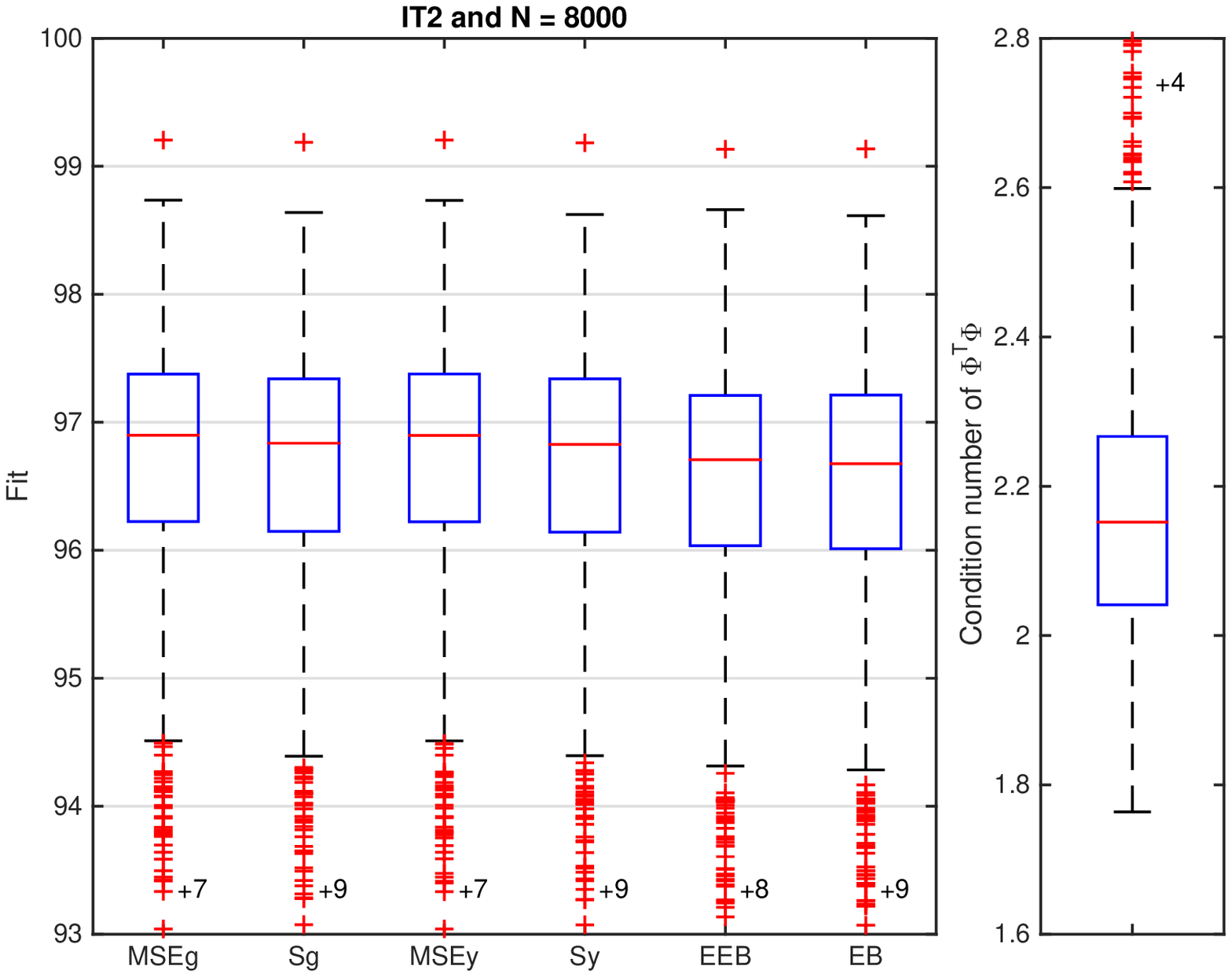}}
    \vspace{-3ex}
    \caption{Boxplot of the 1000 fits for the bandlimited white Gaussian noise input with the normalized band $[0,1]$ and boxplot of the condition numbers of the matrix
    $\Phi^T\Phi$: data lengths $N=500$ (left) and $N=8000$ (right).}    \label{f2}
\end{figure*}
\begin{thm}\label{thm12}
    Assume that $\|\Phi^T\Phi/N - \Sigma\|=O_p(\delta_N)$, where $\|\cdot\|$ denotes the Frobenius norm for a square matrix, $\delta_N\xra{} 0$ as $N\xra{}\infty$
    and $P(\eta)$ is any positive definite parameterization of the kernel matrix.
    Then we have
    \begingroup
    \allowdisplaybreaks
    \begin{align}
    &\|\widehat{\eta}_{\rm MSEg} -\eta_{\rm g}^*\|=O_p(\varpi_N),~
    \|\widehat{\eta}_{\rm Sg} - \eta_{\rm g}^*\|=O_p(\mu_N),\!\! \label{ohpsgr}\\
    &\|\widehat{\eta}_{\rm MSEy} -\eta_{\rm y}^*\|=O_p(\varpi_N),~\|\widehat{\eta}_{\rm Sy} - \eta_{\rm y}^*\|=O_p(\mu_N),\!\! \label{ohpsyr}\\
    &\|\widehat{\eta}_{{\rm EEB}} -\eta_{\rm B}^*\|=O_p(1/N),~~\|\widehat{\eta}_{\rm EB} -\eta_{\rm B}^*\|=O_p(1/\sqrt{N}),\!\!\!
    \label{ohpmlr}\\
    &\varpi_N\!=\!\max\big(O_p(\delta_N),O_p(1/N)\big),\label{eq:sequence1}\\
    &\mu_N\!=\!\max\big(O_p(\delta_N),O_p(1/\sqrt{N})\big)\label{eq:sequence2}.
    \end{align}
    \endgroup
    \end{thm}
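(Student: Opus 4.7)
The plan is to derive each rate by applying a mean value expansion to the first order optimality conditions. Let $\mathscr{C}_N(\eta)$ denote any one of the six criteria (rescaled and offset by $\eta$-independent terms so as to share a finite limit with its Oracle counterpart, as in Proposition \ref{thm11}), let $\widehat\eta$ denote its minimizer, let $W(\eta)$ denote the corresponding limit function among $W_g$, $W_y$, $W_B$, and let $\eta^\ast$ be the minimizer of $W$. Since Assumption \ref{ass:2} guarantees that $\eta^\ast$ is isolated and $P(\eta)$ is smooth, $\nabla^2 W(\eta^\ast)$ is positive definite. Since $\widehat\eta\to\eta^\ast$ almost surely by Theorem \ref{thm10}, with probability tending to one $\nabla^2\mathscr{C}_N$ is invertible on a neighborhood of $\eta^\ast$ containing $\widehat\eta$. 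A Taylor expansion of $\nabla\mathscr{C}_N$ around $\eta^\ast$, together with $\nabla\mathscr{C}_N(\widehat\eta)=0$ and $\nabla W(\eta^\ast)=0$, gives
\begin{equation*}
\widehat\eta-\eta^\ast = [\nabla^2\mathscr{C}_N(\widetilde\eta)]^{-1}\bigl(\nabla W(\eta^\ast)-\nabla\mathscr{C}_N(\eta^\ast)\bigr)
\end{equation*}
for some $\widetilde\eta$ between $\widehat\eta$ and $\eta^\ast$. Since $\nabla^2\mathscr{C}_N(\widetilde\eta)\to\nabla^2 W(\eta^\ast)$ almost surely, the inverse factor is $O_p(1)$, so the rate of $\widehat\eta-\eta^\ast$ equals the rate at which $\nabla\mathscr{C}_N(\eta^\ast)\to\nabla W(\eta^\ast)$.

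The next step is to control this gradient deviation via the chain rule \eqref{sksg} using the compact expressions in Corollary \ref{cor4}. I would write $\Phi^T\Phi/N=\Sigma+\Delta_N$ with $\|\Delta_N\|=O_p(\delta_N)$, recall $S=P+(\sigma^2/N)(\Phi^T\Phi/N)^{-1}$ so that $S-P=O_p(1/N)$, and note $\widehat\theta^{\mathrm{LS}}-\theta_0=O_p(1/\sqrt{N})$, which yields $\widehat\theta^{\mathrm{LS}}(\widehat\theta^{\mathrm{LS}})^T-\theta_0\theta_0^T=O_p(1/\sqrt{N})$. With these, one expands each of the three outer factors in every matrix derivative of Corollary \ref{cor4} around its limit from Proposition \ref{prop8}.

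For MSEg and MSEy, the deviation between the finite-$N$ gradient and its limit decomposes into terms of order $O_p(\delta_N)$ from replacing $(\Phi^T\Phi/N)^{-k}$ by $\Sigma^{-k}$ ($k=1,2$) and terms of order $O_p(1/N)$ from replacing $S$ by $P$, producing the rate $\varpi_N$ in \eqref{eq:sequence1}. For SUREg and SUREy the same expansion applies, but the factor $P-\theta_0\theta_0^T$ in \eqref{dmsegec} and \eqref{dmseyec} is replaced by $S-\widehat\theta^{\mathrm{LS}}(\widehat\theta^{\mathrm{LS}})^T$ in \eqref{dsgc} and \eqref{dsyc}, introducing an additional $O_p(1/\sqrt{N})$ fluctuation and hence the rate $\mu_N$ in \eqref{eq:sequence2}. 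For EEB, the derivative \eqref{dmsemlec} involves only $S^{-T}$ and $P$ (no explicit $(\Phi^T\Phi/N)^{-1}$ factor), so the $O_p(\delta_N)$ contribution is \emph{absent} and only the $O_p(1/N)$ remainder from $S-P$ survives; for EB, \eqref{dmlc} has the same structure with $P-\theta_0\theta_0^T$ replaced by $S-\widehat\theta^{\mathrm{LS}}(\widehat\theta^{\mathrm{LS}})^T$, so the dominant contribution becomes the $O_p(1/\sqrt{N})$ stochastic term, exactly matching \eqref{ohpmlr}.

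The main obstacle will be the careful bookkeeping of these $O_p$-rates inside the matrix products of Corollary \ref{cor4}: one must verify that the cross terms produced by expanding each of the three factors simultaneously do not exceed the claimed order, and that the inner trace against $\partial P(\eta)/\partial\eta_i$ from the chain rule \eqref{sksg} preserves the rates uniformly on a compact neighborhood of $\eta^\ast$. A secondary technical point is to confirm that $\nabla^2 W(\eta^\ast)$ is in fact non-degenerate so that the inverse Hessian is $O_p(1)$; Assumption \ref{ass:2} rules out accumulation points of minimizers which, together with $C^2$-smoothness of $P(\eta)$, precludes a zero eigenvalue at the interior isolated minimizer $\eta^\ast$ via the same reasoning employed in the convergence arguments of Lemma 8.2 and Theorem 8.2 of \cite{Ljung1999}.
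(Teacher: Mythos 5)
Your proposal follows essentially the same route as the paper's proof: a mean value expansion of the gradient of the (rescaled) criterion around $\eta^\ast$, an $O_p(1)$ inverse Hessian obtained from Assumption \ref{ass:2} and the uniform convergence of the Hessians, and a term-by-term accounting of the gradient deviation at $\eta^\ast$ using $\|\Phi^T\Phi/N-\Sigma\|=O_p(\delta_N)$, $S-P=O_p(1/N)$ and $\widehat{\theta}^{\rm LS}-\theta_0=O_p(1/\sqrt{N})$, with the same identification of which factors contribute $O_p(\delta_N)$, $O_p(1/N)$ and $O_p(1/\sqrt{N})$ in each of the six cases. The only cosmetic difference is that you organize the bookkeeping through the compact matrix derivatives of Corollary \ref{cor4} and the chain rule, while the paper works directly with the explicit $\eta$-gradients of the rescaled criteria; the substance is the same.
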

Theorem \ref{thm12} shows that the convergence rate of
$\widehat{\eta}_{{\rm EEB}}$ and $\widehat{\eta}_{\rm EB}$ to
$\eta_{\rm B}^*$
    depends only on the fact $\Phi^T\Phi/N \xra{} \Sigma$ as $N\xra{}\infty$ ($\Phi^T\Phi=O_p(N)$) but not on the rate $\|\Phi^T\Phi/N - \Sigma\|=O_p(\delta_N)$.
    Moreover, we have
\begin{itemize}
\item the convergence rate of
    $\widehat{\eta}_{{\rm EEB}}$ to $\eta_{\rm B}^*$ is faster than
    that of $\widehat{\eta}_{\rm MSEg}$ to $\eta_{\rm g}^*$ and that
    of $\widehat{\eta}_{\rm MSEy}$ to $\eta_{\rm y}^*$.

\item the
    convergence rate of $\widehat{\eta}_{\rm EB}$ to $\eta_{\rm
    B}^*$ is faster than that of
    $\widehat{\eta}_{\rm Sg}$ to  $\eta_{\rm g}^*$ and that
    of $\widehat{\eta}_{\rm Sy}$ to $\eta_{\rm y}^*$.

\item the convergence rate of $\widehat{\eta}_{\rm
MSEg}$, $\widehat{\eta}_{\rm MSEy}$ and $\widehat{\eta}_{{\rm EEB}}$
to $\eta_{\rm g}^*$, $\eta_{\rm y}^*$ and $\eta_{\rm B}^*$,
respectively, is faster than that of $\widehat{\eta}_{\rm Sg}$,
$\widehat{\eta}_{\rm Sy}$ and $\widehat{\eta}_{{\rm EB}}$ to
$\eta_{\rm g}^*$, $\eta_{\rm y}^*$ and $\eta_{\rm B}^*$,
respectively.
\end{itemize}
Theorem \ref{thm12} has the following corollary.
\begin{cor}\label{cor5}
    Assume that $\|\Phi^T\Phi/N - \Sigma\|=O_p(\delta_N)$, where  $\delta_N\xra{} 0$ as $N\xra{}\infty$
    and $P(\eta)$ is any positive definite parameterization of the kernel matrix.
    Then
    \begingroup
    \allowdisplaybreaks
    \begin{align}
    &\|\widehat{\eta}_{\rm MSEg} -\widehat{\eta}_{\rm Sg}\|=O_p(\mu_N),\\
    &\|\widehat{\eta}_{\rm MSEy} -\widehat{\eta}_{\rm Sy} \|=O_p(\mu_N), \\
    &\|\widehat{\eta}_{{\rm EEB}} -\widehat{\eta}_{\rm EB}
    \|=O_p(1/\sqrt{N}),
    \end{align} where $\mu_N$ is defined in \eqref{eq:sequence2}.
    \end{cor}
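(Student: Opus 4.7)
The plan is to derive Corollary \ref{cor5} directly from Theorem \ref{thm12} by a triangle inequality together with a comparison of the rate sequences $\varpi_N$ and $\mu_N$. Since each of the six hyperparameter estimators is shown in Theorem \ref{thm12} to converge to one of the common limits $\eta_{\rm g}^*$, $\eta_{\rm y}^*$ or $\eta_{\rm B}^*$, the distance between an estimator and its Oracle counterpart can be controlled by the sum of the two individual convergence rates to the shared limit.

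First I would write, for the MSEg/SUREg pair,
\begin{align*}
\|\widehat{\eta}_{\rm MSEg} - \widehat{\eta}_{\rm Sg}\|
\le \|\widehat{\eta}_{\rm MSEg} - \eta_{\rm g}^*\|
   + \|\widehat{\eta}_{\rm Sg} - \eta_{\rm g}^*\|
= O_p(\varpi_N) + O_p(\mu_N),
\end{align*}
using \eqref{ohpsgr} in Theorem \ref{thm12}. Then I would observe that, by the definitions \eqref{eq:sequence1} and \eqref{eq:sequence2}, $\varpi_N = \max(O_p(\delta_N),O_p(1/N))$ while $\mu_N = \max(O_p(\delta_N),O_p(1/\sqrt{N}))$, so $1/N \le 1/\sqrt{N}$ for $N\ge 1$ implies $\varpi_N = O_p(\mu_N)$. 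Hence the bound collapses to $O_p(\mu_N)$, which is exactly the first claim. The second claim, concerning the MSEy/SUREy pair, follows identically by invoking \eqref{ohpsyr} instead of \eqref{ohpsgr}.

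For the EEB/EB pair, the triangle inequality together with \eqref{ohpmlr} gives
\begin{align*}
\|\widehat{\eta}_{\rm EEB} - \widehat{\eta}_{\rm EB}\|
\le \|\widehat{\eta}_{\rm EEB} - \eta_{\rm B}^*\|
   + \|\widehat{\eta}_{\rm EB} - \eta_{\rm B}^*\|
= O_p(1/N) + O_p(1/\sqrt{N}) = O_p(1/\sqrt{N}),
\end{align*}
since $1/N = o(1/\sqrt{N})$. This is the third claim. Note also that, in contrast with the other two pairs, no $\delta_N$ term appears here: this is a direct reflection of the fact, already emphasized after Theorem \ref{thm12}, that the EB criterion's convergence is independent of the rate at which $\Phi^T\Phi/N$ approaches its limit.

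There is essentially no obstacle beyond recognizing the dominance relation $\varpi_N = O_p(\mu_N)$ and $O_p(1/N) = O_p(1/\sqrt{N})$; everything else is absorbed into the proof of Theorem \ref{thm12}. Thus Corollary \ref{cor5} is a near-immediate consequence of Theorem \ref{thm12}, and the proof is essentially a one-line triangle-inequality argument applied three times.
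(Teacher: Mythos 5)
Your argument is correct and is precisely the ``straightforward'' proof the paper has in mind: the authors explicitly omit the proof of Corollary \ref{cor5} as an immediate consequence of Theorem \ref{thm12}, and the intended route is exactly your triangle inequality through the common limit point together with the dominance relations $\varpi_N=O_p(\mu_N)$ and $1/N=O(1/\sqrt{N})$. Nothing further is needed.
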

    \endgroup
This corollary shows that the convergence rate of
$\|\widehat{\eta}_{{\rm EEB}} -\widehat{\eta}_{\rm EB} \|$ to zero
is faster than that of $\|\widehat{\eta}_{\rm MSEg}
-\widehat{\eta}_{\rm Sg}\|$ and $\|\widehat{\eta}_{\rm MSEy}
-\widehat{\eta}_{\rm Sy} \|$ to zero.

\section{Numerical Simulation}
\label{sec6} In this section, we illustrate the theoretical results
with numerical simulation.

\subsection{Test data-bank}
The method in \cite{Chen2012,Pillonetto2015} is used to generate
1000 30th order test systems. Then for each test system, we consider
four different test inputs:
\begin{itemize}
\item The first two test inputs are implemented by the MATLAB command {\tt idinput} choosing the bandlimited white Gaussian noise with
normalized bands $[0,0.6]$ and $[0,1]$, respectively, and denoted by
IT1 and IT2, respectively.

\item The third and fourth test inputs are the white Gaussian noise of unit variance filtered
by a second order rational transfer function $1/(1 - aq^{-1})^2$
with $a$ chosen to be $0.95$ and $0.05$, respectively, and denoted
by IT3 and IT4, respectively.

\end{itemize}

To generate the data set, we simulate each system with one of the
four test inputs to get the output, which is then corrupted by  an
additive white Gaussian noise. The signal-to-noise ratio  (SNR),
i.e., the ratio between the variance of the noise-free output and
the noise, is uniformly distributed over $[1,10]$, and is kept same
for the four test inputs.

Finally, in order to test the finite sample and asymptotic behavior
of the hyperparameter estimators, we consider data sets with
different data lengths $N=500$ and $8000$, respectively.

\subsection{Simulation Setup}

The performance of the RLS estimator \dref{rls} is evaluated by the
measure of fit \citep{Ljung2012} defined as follows :
\begin{align*} \mbox{\rm Fit} =100\times \left( 1 -
\frac{\|\widehat{\theta} -
\theta_0\|}{\|\theta_0-\bar{\theta}_0\|}\right),~~\bar{\theta}_0=\frac{1}{n}\sum_{k=1}^{n}g_k^0
\end{align*}
where $n$ is set to $200$. This fit is actually to evaluate the RLS
estimator in the MSEg sense.

The TC kernel \dref{tc} is considered and its hyperparameter
$\eta=[c,\alpha]^T$ is estimated by using the estimators SUREg
\dref{thsg}, SUREy \dref{thsy},  and EB \dref{thml}, respectively.
For reference, we also consider their corresponding Oracle
counterparts, i.e., the estimators MSEg \dref{okmseg}, MSEy
\dref{okmsey}, and EEB \dref{okmseml}, respectively. The notations
Sg, Sy, EB, MSEg, MSEy, and EEB are used to denote the corresponding
simulation results, respectively.

\subsection{Simulation results}
The average fits are given in Table \ref{tab1}.
\begin{table}[!ht]
    \centering
    \caption{Average fits for 1000 test systems and data sets.}
    \vspace{1.5ex}
    \begin{tabu} to 0.48
        \textwidth{X[1.3,c]X[0.7,c]X[0.7,c]X[0.7,c]X[0.6,c]X[0.6,c]X[1,c]}
        \hline
        &   MSEg    &   Sg  &     MSEy  &   Sy  &    EEB    &   EB      \\  \hline
    &&&IT1 \\ \hline
$N\!=\!500$     &   80.34   &   \mbox{-2.4E9}   &   78.07   &   53.83   &   77.98   &   77.26       \\
$N\!=\!8000$    &   90.63   &   \mbox{-8.6E8}   &   88.08   &
78.39   &   88.39   &   88.36       \\  \hline

    &&&IT2 \\ \hline
$N\!=\!500$     &   87.11   &   84.46   &    87.02  &   86.03   &   86.60   &   86.16       \\
$N\!=\!8000$    &   96.67   &   \mbox{96.60}    &    96.67  &
96.60   &   96.47   &   96.44       \\  \hline

    &&&IT3 \\ \hline
$N\!=\!500$     &   46.95   &   \mbox{-2220}    &    41.61  &   \mbox{-146.4}   &   39.47   &   39.03       \\
$N\!=\!8000$    &   57.67   &   \mbox{-176.8}   &    53.63  &
38.86   &   51.05   &   50.86       \\  \hline

    &&&IT4 \\ \hline
$N\!=\!500$     &   86.78   &   83.89   &    86.69  &   85.66   &   86.24   &   85.84       \\
$N\!=\!8000$    &   96.57   &   96.49   &    96.56  &   96.49   &
96.38   &   96.35       \\  \hline

    \end{tabu}
    \label{tab1}
\end{table}
The boxplots of the 1000 fits for IT1 and IT2 are displayed in Figs.
\ref{f1}--\ref{f2}, respectively. The boxplots for IT3 and IT4 are
skipped because of their similarity with IT1 and IT2.

\subsection{Findings}

Firstly, for all tested cases and in terms of average accuracy and
robustness, the Oracle estimators MSEg and MSEy (not implementable
in practice) are better than Sg and Sy, respectively, while EB is
just a little bit worse than but very close to its Oracle estimator
EEB.

Secondly, we consider the cases with input IT1, where $\Phi^T\Phi$
is very ill-conditioned for both $N=500$ and $N=8000$. In this case
and in terms of average accuracy and robustness, Sg performs badly
because it depends on $(\Phi^T\Phi)^{-1}$. Moreover, Sy is better
than Sg, but worse than EB.

Thirdly, we consider the case with input IT2 and $N=500$, where
$\Phi^T\Phi$ is much better conditioned than the cases with input
IT1. In this case and in terms of average accuracy and robustness,
Sg behaves much better in contrast with the cases with input IT1.
Moreover, EB and Sy are quite close though EB is a little bit
better, and they are all better than Sg.

Lastly, we consider the case with input IT2 and $N=8000$, where
$\Phi^T\Phi$ is very well-conditioned
and in terms of average accuracy and robustness, Sg behaves
much better in contrast with all the other cases, and performs as
well as Sy and better than EB. Moreover, Sg and Sy are very close to
the corresponding Oracle estimators MSEg and MSEy. These
observations coincide with the results found in Theorem \ref{thm10}
and Corollary \ref{cor2}. Namely, Sg and Sy are asymptotically
optimal but EB is not in the MSEg/MSEy senses and moreover, Sg and
Sy give the same optimal hyperparameter estimate as their Oracle
counterparts MSEg and MSEy, because the limit $\Sigma=I_n$ of
$\Phi^T\Phi/N$ as $N\rightarrow\infty$. It can also be seen from
Figs. \ref{f1} and \ref{f2} that the boxplots of EEB and EB is
closer than that of MSEg and Sg and that of MSEy and Sy. This
observation coincides with the result found in Corollary \ref{cor5},
that is, the convergence rate of $\|\widehat{\eta}_{{\rm EEB}}
-\widehat{\eta}_{\rm EB} \|$ to zero is faster than that of
$\|\widehat{\eta}_{\rm MSEg} -\widehat{\eta}_{\rm Sg}\|$ and
$\|\widehat{\eta}_{\rm MSEy} -\widehat{\eta}_{\rm Sy} \|$ to zero.

%

\section{Conclusions}
\label{sec7}

Kernel matrix design and hyperparamter estimation are two core
issues for the kernel based regularization methods. In contrast with
the former issue, there are few results reported for the latter
issue. In this paper, we focused on the latter issue and studied the
properties of several hyperparameter estimators including the
empirical Bayes (EB) estimator, two Stein's unbiased risk estimators
(SURE) and their corresponding Oracle counterparts, with an emphasis
on the asymptotic properties of these hyperparameter estimators. Our
major results are the following:
\begin{itemize}
\item The first order optimality conditions of these hyperparameter estimators are put in similar forms that better expose their relation and lead to several insights on these
    hyperparameter estimators.

  \item As the number of data
    goes to infinity, the two SUREs converge to the best hyperparameter
    minimizing the corresponding mean square error, respectively, while
    the more widely used EB estimator converges to another best
    hyperparameter minimizing the expectation of the EB estimation criterion. This indicates that
    the two SUREs are asymptotically optimal but the EB estimator is
    not.

  \item The convergence rate of two SUREs is slower than
    that of the EB estimator, and moreover, unlike the two SUREs, the EB
    estimator is independent of the convergence rate of $\Phi^T\Phi/N$ to
    its limit, where $\Phi$ is the regression matrix and $N$ is the
    number of data.
\end{itemize}
The results enhance our understanding about these hyperparameter
estimators and is one step forward towards the goal of building a
theory of the hyperparameter estimation for the kernel-based
regularization methods.

\appendix
\textbf{Appendix A}

Appendix A contains the proof of the results in the paper, for which
the technical lemmas are placed in Appendix B.
The proofs of Propositions \ref{prop1},  \ref{prop5}, \ref{prop6}, \ref{prop7},and \ref{prop8} and Corollaries \ref{cor4}, \ref{cor2}, and \ref{cor5}
are straightforward and thus omitted.

\renewcommand{\thesection}{A}

\subsection{Proof of Proposition \ref{th1}}

Under the setting $P^{-1}=\beta A/\sigma^2$, the MSEg
\dref{msege} of the RLS estimator \dref{rls} is a function of  $\beta$ for a given $A$:
\begin{align}
&\hspace{3em}{\rm MSEg}(\beta)={\rm Bias}(\beta)+{\rm Var}(\beta)~\mbox{where}\label{msegs}\\
\nonumber
&{\rm Bias}(\beta)=\beta^2\theta_0^{T}A^T(\Phi^T\Phi + \beta A)^{-1}(\Phi^T\Phi + \beta A)^{-1}A\theta_0,\\
\nonumber
&{\rm Var}(\beta)=\sigma^2{\rm Tr}\big((\Phi^T\Phi + \beta A)^{-1}\Phi^T\Phi(\Phi^T\Phi + \beta A)^{-1}\big).
\end{align}
Note that ${\rm MSEg}(0) = \sigma^2{\rm Tr}\big((\Phi^T\Phi)^{-1} \big)$ corresponds to the MSEg of the LS estimator \dref{ls}.
The derivatives of ${\rm Bias}(\beta)$ and ${\rm Var}(\beta)$ with respect to $\beta$ are as follows:
\begingroup
\allowdisplaybreaks
\begin{align}
\nonumber
\frac{\d {\rm Bias}(\beta)}{\d\beta}=&~2\beta\theta_0^TA^T(\Phi ^T\Phi  + \beta A)^{-1}(\Phi ^T\Phi  + \beta A)^{-1}A\theta_0\\
\nonumber
&-2\beta^2\theta_0^TA^T(\Phi ^T\Phi  + \beta A)^{-1}A(\Phi ^T\Phi  + \beta A)^{-1}\\
&~~~~\times(\Phi ^T\Phi  + \beta A)^{-1}A\theta_0 \label{db}\\
\nonumber
\frac{\d {\rm Var}(\beta)}{\d\beta}=&-2\sigma^2 {\rm Tr}
\big((\Phi ^T\Phi  + \beta A)^{-1}A(\Phi ^T\Phi  + \beta A)^{-1}\\
&~~~~\times\Phi ^T\Phi (\Phi ^T\Phi  + \beta A)^{-1}\big)
\end{align}
\endgroup
where the formula $\frac{\d C^{-1}(\beta)}{\d \beta}=-C^{-1}(\beta)\frac{\d C(\beta)}{\d \beta}C^{-1}(\beta)$
for an invertible matrix $C(\beta)$ is used.
Then we have
\begingroup
\allowdisplaybreaks
\begin{align}
\nonumber
&\frac{\d {\rm Bias}(\beta)}{\d\beta}\Big |_{\beta\xra{} 0^+}=0\\
\nonumber
&\frac{\d {\rm Var}(\beta)}{\d\beta}\Big |_{\beta\xra{} 0^+}=
-2\sigma^2 {\rm Tr}
\big((\Phi ^T\Phi )^{-1}A(\Phi ^T\Phi  )^{-1}\big)<0
\end{align}
\endgroup
where Lemma \ref{lm0} in Appendix B is used.
Therefore, we have
$
\frac{\d\rm MSEg(\beta)}{\d\beta}\Big |_{\beta\xra{} 0^+}<0.
$
This means that ${\rm MSEg}(\beta)< {\rm MSEg}(0)$ in some small right neighborhood of the origin $\beta =0$.

Under the assumption that $A$ is positive definite, denote
$$M(\beta)\eq E(\widehat{\theta}^{\rm R}-\theta_0)(\widehat{\theta}^{\rm R}-\theta_0)^T.$$
We first prove $M(0)-M(\beta)>0$ for $0<\beta<2\sigma^2/(\theta_0^T\!\!A\theta_0)$.
A straightforward calculation gives
\begin{align*}
&M(0)-M(\beta)\\
=&\sigma^2(\Phi^T\Phi )^{-1}
-\sigma^2(\Phi^T\Phi + \beta A)^{-1}\Phi^T\Phi(\Phi^T\Phi + \beta A)^{-1}\\
&-\beta^2(\Phi^T\Phi + \beta A)^{-1}A\theta_0\theta_0^TA(\Phi^T\Phi + \beta A)^{-1}\\
=&\beta(\Phi^T\Phi + \beta A)^{-1}\big(\sigma^2[2A+\beta A (\Phi^T\Phi)^{-1} A]
- \beta A\theta_0\theta_0^TA\big)\\
&~~~~\times(\Phi^T\Phi + \beta A)^{-1}.
\end{align*}
As a result, to prove $M(0)-M(\beta)>0$,
it suffices to show
\begin{align}
\sigma^2[2A+\beta A (\Phi^T\Phi)^{-1} A]
- \beta A\theta_0\theta_0^TA>0\label{pd}
\end{align}
which is true if $2\sigma^2I_n-\beta A^{1/2}\theta_0\theta_0^T A^{1/2}>0$
due to
\begin{align*}
\sigma^2&[2A+\beta A (\Phi^T\Phi)^{-1} A]
- \beta A\theta_0\theta_0^TA\\
&
>2\sigma^2A- \beta A\theta_0\theta_0^TA\\
&
=A^{1/2}(2\sigma^2I_n-\beta A^{1/2}\theta_0\theta_0^T A^{1/2})A^{1/2}>0.
\end{align*}
In addition, the eigenvalues of $A^{1/2}\theta_0\theta_0^T A^{1/2}$ are
$\theta_0^T\!A\theta_0$ and zero (with multiplicity $n\!\!-\!\!1$).
This shows
$2\sigma^2I_n-\beta A^{1/2}\theta_0\theta_0^T A^{1/2}>0$
for $0<\beta<2\sigma^2/(\theta_0^T\!A\theta_0)$.

Note that
${\rm MSEg}(\beta)={\rm Tr}(M(\beta))$.
One has proved that $M(0)-M(\beta)$ is positive definite if $0<\beta<2\sigma^2/(\theta_0^T\!A\theta_0)$,
so we have ${\rm MSEg}(0) -{\rm MSEg}(\beta)
\!=\!{\rm Tr}(M(0)-M(\beta))>0$.

The proof for the MSEy \dref{mseye}  is similar to that for the MSEg \dref{msege} by using the connection \dref{rgy}.
\begin{rem}\rm
        When $\beta\xra{}\infty$, from the MSEg \dref{msegs} we have
        \begingroup
        \allowdisplaybreaks
        \begin{enumerate}[1)]
            \item ${\rm Bias}(\beta)\xra{}\theta_0^T\theta_0$ and $\frac{{\rm d} {\rm Bias}(\beta)}{{\rm d} \beta}\xra{}0$,
            \item ${\rm Var}(\beta)\xra{}0$ and $\frac{{\rm d} {\rm Var}(\beta)}{{\rm d} \beta}\xra{}0$,
            \item ${\rm MSEg(\beta)}\xra{}\theta_0^T\theta_0$ and $\frac{\d {\rm MSEg(\beta)}}{\d\beta}\xra{}0$.
        \end{enumerate}
    \endgroup
\end{rem}

\subsection{Proof of Proposition \ref{thm3}}

We first prove \dref{dml}. Using the formulas \dref{md4} and
\dref{md2} derives that
\begingroup
\allowdisplaybreaks
\begin{align*}
\frac{\partial \mathscr{F}_{\rm EB}(P)}{\partial P}
=&\sum_{i,j} \big( - Q^{-T}YY^TQ^{-T}  + Q^{-T}\big)_{ij}
\frac{\partial  Q _{ij}}{\partial P}\\
=&-\Phi^TQ^{-T}YY^TQ^{-T}\Phi
+\Phi^TQ^{-T} \Phi.
\end{align*}
\endgroup
To prove \dref{dsg}, let us set
\begin{align*}
&\mathscr{F}_{\rm Sg_{1}}(P) = \sigma^4Y^TQ^{-T}\Phi (\Phi^T\Phi)^{-2}\Phi^T Q^{-1} Y\\
&\mathscr{F}_{\rm Sg_{2}}(P) = \sigma^2{\rm Tr}\big(2R^{-1}\!-(\Phi^T\Phi)^{-1}\big).
\end{align*}
By \dref{md1} and \dref{md3}, the derivative of $\mathscr{F}_{\rm S_{g1}}(P)$ is
\begingroup
\allowdisplaybreaks
\begin{align}
\nonumber
&\frac{\partial \mathscr{F}_{\rm Sg_{1}}(P)}{\partial P}
=\sigma^4\sum_{i,j} \big(2\Phi(\Phi^T\Phi)^{-2}\Phi^TQ^{-1}YY^T\big)_{ij}
\frac{\partial (Q^{-1} )_{ij}}{\partial P}\\
\nonumber
&=-2\sigma^4\sum_{i,j} (\Phi(\Phi^T\Phi)^{-2}\Phi^TQ^{-1}YY^T)_{ij}
\Phi^TQ^{-T}J_{ij}Q^{-T}\Phi\\
&=-2\sigma^4
\Phi^TQ^{-T} \Phi(\Phi^T\Phi)^{-2}\Phi^TQ^{-1}YY^TQ^{-T}\Phi. \label{dsg1}
\end{align}
\endgroup
and using \dref{id10} implies the derivative of $\mathscr{F}_{\rm S_{g2}}(P)$
\begingroup
\allowdisplaybreaks
\begin{align}
\nonumber
&\frac{\partial \mathscr{F}_{\rm Sg_{2}}(P)}{\partial P}
=2\sigma^2\sum_{i=1}^n
\frac{\partial (R^{-1})_{ii}}{\partial P}\\
=&~\!2\sigma^4
P^{-T}R^{-T}
R^{-T}P^{-T}
=2\sigma^4H^{-T}\overline{H}^{-T}.\label{dsg2}
\end{align}
\endgroup
Combining \dref{dsg1} with \dref{dsg2} derives \dref{dsg}.

Finally, let us prove \dref{dsy}.
Similarly, by using \dref{id9} we write \dref{sycf} as
\begin{align*}
\mathscr{F}_{\rm Sy}(P)
=&\sigma^4Y^TQ^{-T} Q^{-1} Y
 +(2\sigma^2 N-2\sigma^4{\rm Tr}( Q^{-1} ))\\
=& \mathscr{F}_{\rm Sy_{1}}(P) + \mathscr{F}_{\rm Sy_{2}}(P).
\end{align*}
By \dref{md1} and \dref{id7}, the derivative of $\mathscr{F}_{\rm S_{y1}}(P)$ is
\begingroup
\allowdisplaybreaks
\begin{align}
\nonumber
&\frac{\partial \mathscr{F}_{\rm S_{y1}}(P)}{\partial P}
=\sigma^4\sum_{i,j} \big(2Q^{-1}YY^T\big)_{ij}
\frac{\partial (Q^{-1} )_{ij}}{\partial P}\\
\nonumber
=&-2\sigma^4\sum_{i,j} \big(Q^{-1}YY^T\big)_{ij}
\Phi^T Q^{-T} J_{ij}
Q^{-T} \Phi\\
=&-2\sigma^4 \Phi^T Q^{-T} Q^{-1}YY^T Q^{-T} \Phi\label{dsy1}
\end{align}
\endgroup
and by using \dref{md5} the derivative of $\mathscr{F}_{\rm S_{y2}}(P)$ is
\begingroup
\allowdisplaybreaks
\begin{align}
\nonumber
\frac{\partial \mathscr{F}_{\rm S_{y2}}(P)}{\partial P}
&=-2\sigma^4\sum_{i=1}^n
\frac{\partial (Q^{-1} )_{ii}}{\partial P}\\
&=2\sigma^4\Phi^T Q^{-T} Q^{-T}\Phi.\label{dsy2}
\end{align}
\endgroup
The equations \dref{dsy1} and \dref{dsy2} implies \dref{dsy}.

\subsection{Proof of Proposition \ref{thmb1}:} It follows from
\dref{rls} that
\begingroup
\allowdisplaybreaks
\begin{align*}
\widehat{\theta}^{\rm R}\! -\!  \theta_0
&= R^{-1} \Phi^TY-\theta_0\\
&=-\sigma^2R^{-1}P^{-1}\theta_0
+R^{-1}\Phi^T V\\
&=-\sigma^2H^{-1}\theta_0
+R^{-1}\Phi^T V,
\end{align*}
\endgroup
which derives
\begin{align*}
{\rm MSEg}(P)
&=\sigma^4\theta_0^TH^{-T} H^{-1}\theta_0
+\sigma^2{\rm Tr}(R^{-1} \Phi^T \Phi R^{-T})\\
&={\rm MSEg1}(P) + {\rm MSEg2}(P).
\end{align*}
For the term ${\rm MSEg1}(P)$, using the formulas \dref{md1} and \dref{md3} gives
\begingroup
\allowdisplaybreaks
\begin{align}
\nonumber
&\frac{\partial {\rm MSEg1}(P)}{\partial P}
=\sigma^4\sum_{i,j}\big(2  H^{-1} \theta_0 \theta_0^T\big)_{ij}
\frac{\partial \big(H^{-1}\big)_{ij}}{\partial P}\\
\nonumber
=&\sigma^4\sum_{i,j}\big( 2H^{-1}  \theta_0 \theta_0^T\big)_{ij}
\big(-H^{-T}J_{ij}H^{-T}\Phi^T\Phi\big)\\
\nonumber
=&-2\sigma^4H^{-T}H^{-1}
\theta_0 \theta_0^T
H^{-T}\Phi^T\Phi\\
=&-2\sigma^4H^{-T}H^{-1}
\theta_0 \theta_0^T
\Phi^T
Q^{-T}\Phi.\label{dmseb1}
\end{align}
\endgroup
By using the formulas \dref{md6} and \dref{id10}, one derives
\begingroup
\allowdisplaybreaks
\begin{align}
\nonumber
&\frac{{\rm MSEg2}(P)}{\partial P}
=\sigma^2\sum_{i,j}
\big(2R^{-1}\Phi^T\Phi\big)_{ij}
 \frac{\partial\big( R^{-1} \big)_{ij}}{\partial P}\\
\nonumber
&=\sigma^2\sum_{i,j}
\big(2R^{-1}\Phi^T\Phi\big)_{ij}
(\sigma^2P^{-T} R^{-T} J_{ij} R^{-T}P^{-T})\\
\nonumber
&=2\sigma^4P^{-T} R^{-T}
R^{-1}\Phi^T\Phi R^{-T}P^{-T}\\
&=2\sigma^4
H^{-T}H^{-1}P\Phi^T
Q^{-T}\Phi.
\label{dmseb2}
\end{align}
\endgroup
Combining \dref{dmseb1} with \dref{dmseb2} implies the conclusion \dref{dmsege}.

In the following, we intend to prove \dref{dmseye}.
Let us set
\begin{align*}
{\rm MSE_{y1}}(P) &= \|\Phi P\Phi^TQ^{-1}\Phi \theta_0 - \Phi\theta_0\|^2 +N\sigma^2\\
&=\sigma^4\theta_0^T\Phi^{T}Q^{-T} Q^{-1}\Phi\theta_0 +N\sigma^2 \\
{\rm MSE_{y2}}(P)
&= \sigma^2{\rm Tr}\big(\Phi P \Phi^T Q^{-1} Q^{-T} \Phi P^T\Phi^T   \big)\\
&=\sigma^2{\rm Tr}\big((I_N \!-\! \sigma^2 Q^{-1})(I_N \!-\! \sigma^2 Q^{-T}) \big).
\end{align*}
By using \dref{md1} and \dref{id7},
one obtains
\begingroup
\allowdisplaybreaks
\begin{align}
\nonumber
&\frac{\partial {\rm MSE_{y1}}(P)}{\partial P}
=\sigma^4\sum_{i,j}\big(2  Q^{-1}\Phi \theta_0 \theta_0^T\Phi^T \big)_{ij}
\frac{\partial \big(Q^{-1}\big)_{ij}}{\partial P}\\
\nonumber
&\hspace{3em}=2\sigma^4 Q^{-1}\Phi \theta_0 \theta_0^T\Phi^T
(-\Phi^T Q^{-T} J_{ij} Q^{-T}\Phi)\\
&\hspace{3em}
=-2\sigma^4\Phi^T Q^{-T} Q^{-1}\Phi
\theta_0 \theta_0^T
 \Phi^T Q^{-T}\Phi.\label{dmsey1}
\end{align}
\endgroup
For the term ${\rm MSE_{y2}}(P)$,
using the formulas \dref{md6} and \dref{id7} derives
\begingroup
\allowdisplaybreaks
\begin{align}
\nonumber
&\frac{{\rm MSE_{y2}}(P)}{\partial P}
=\sigma^2\sum_{i,j}\big(2
\big(I_N - \sigma^2Q^{-1})\big)_{ij}
\frac{\partial \big(- \sigma^2Q^{-1} \big)_{ij}}{\partial P}\\
\nonumber
&=2\sigma^2\sum_{i,j}
\big(
I_N - \sigma^2Q^{-1}\big)_{ij}
(\sigma^2\Phi^TQ^{-T}J_{ij}
Q^{-T}\Phi)\\
\nonumber
&=2\sigma^4\Phi^TQ^{-T}
\big(I_N - \sigma^2Q^{-1}\big)
Q^{-T}\Phi\\
&=2\sigma^4
\Phi^T Q^{-T} Q^{-1}
\Phi P \Phi^T
Q^{-T}\Phi.\label{dmsey2}
\end{align}
\endgroup
Combining \dref{dmsey1} with \dref{dmsey2} implies the assertion \dref{dmseye}.

At last, we prove \dref{dmsemle}, which is derived by
\begin{align*}
\frac{\partial {\rm EEB}(P)}{\partial P}
&=\sum_{i,j}\big(- Q^{-T}\Phi \theta_0 \theta_0^T \Phi^T Q^{-T}
-\sigma^2Q^{-T}Q^{-T}\\
& \hspace{4.5em}
+ Q^{-T}\big)_{ij}
\frac{\partial Q_{ij}}{\partial P}\\
&\hspace{-4em}
=-\Phi^TQ^{-T}\Phi \theta_0 \theta_0^T \Phi^T Q^{-T} \Phi
+ \Phi^TQ^{-T}(I_N - \sigma^2 Q^{-T})  \Phi\\
&\hspace{-4em}
=-\Phi^TQ^{-T}\Phi \theta_0 \theta_0^T \Phi^T Q^{-T} \Phi
+\Phi^T Q^{-T} \Phi P^T\Phi^T Q^{-T} \Phi
\end{align*}
in terms of \dref{md4}, \dref{md2}, \dref{md5} and  \dref{id9}.

\subsection{Proof of Proposition \ref{thm11}}
Under the assumptions that $\Phi^T\Phi/N \xra{} \Sigma>0$ and the white noise $v(t)$,
we have $(\Phi^T\Phi)^{-1}=O_p(1/N)\xra{}0$, $S^{-1}\xra{}P^{-1}$, $NR^{-1}\xra{}\Sigma^{-1}$, $R^{-T}\Phi^T\Phi\xra{}I_n$, and $\widehat{\theta}^{\rm LS}\xra{}\theta_0$ almost surely as $N\xra{}\infty$.

Let us first prove \dref{cl1}. Using \eqref{eq:S},
we rewrite ${\rm MSEg}(P)$ in \dref{msege} as follows:
\begin{align*}
&{\rm MSEg}(P)=\sigma^4\theta_0^TS^{-T} (\Phi^T \Phi)^{-2}S^{-1}\theta_0\\
&\hspace{5.5em}+\sigma^2{\rm Tr}(R^{-1} \Phi^T \Phi R^{-T}).
\end{align*}
Noting ${\rm Tr}(\Sigma^{-1}P^{-1}\Sigma^{-1}) = {\rm
Tr}(\Sigma^{-1}P^{-T}\Sigma^{-1})$ and
\begingroup
\allowdisplaybreaks
\begin{align}
\nonumber
&N^2\big(R^{-1} \Phi^T \Phi R^{-T}
-(\Phi^T \Phi)^{-1}\big)\\
\nonumber
=&\!-\!\sigma^2\!N^2R^{-1}\big( P^{-1}\!+
\!P^{-T} \!\!+ \!\sigma^2 P^{-1} (\Phi^T \Phi)^{-1}\!P^{-T}
\big)R^{-T}\\
\xra{}&-\sigma^2\Sigma^{-1}(P^{-1}\!+\!P^{-T})\Sigma^{-1}\label{msegtr}
\end{align}
\endgroup
 yields that
\begingroup
\allowdisplaybreaks
\begin{align}
\nonumber
&N^2({\rm MSEg}(P) - \sigma^2{\rm Tr}( (\Phi^T \Phi)^{-1}))\\
\nonumber
=&~\sigma^4\theta_0^TS^{-T} (N^2(\Phi^T \Phi)^{-2})S^{-1}\theta_0\\
\nonumber &+\sigma^2N^2{\rm Tr}\big(R^{-1} \Phi^T \Phi R^{-T}
-(\Phi^T \Phi)^{-1}\big)\\
\nonumber \xra{}&\sigma^4\theta_0^TP^{-T}\Sigma^{-2}P^{-1}\theta_0
-2\sigma^4{\rm Tr} \big(
\Sigma^{-1}P^{-1}\Sigma^{-1}\big)\\
=& W_g(P,\Sigma,\theta_0).\label{msegwhole}
\end{align}
\endgroup
To prove \dref{cl2},
note that the first term of $\mathscr{F}_{\rm Sg}(P)$ can be rewritten as $\sigma^4(\widehat{\theta}^{\rm LS})^TS^{-T}(\Phi^T\Phi)^{-2}S^{-1}\widehat{\theta}^{\rm LS}$.
Thus one derives
\begin{align}
\nonumber
N^2(\mathscr{F}_{\rm Sg}(P)& - \sigma^2{\rm Tr}( (\Phi^T \Phi)^{-1}))\\
\nonumber
=&\sigma^4(\widehat{\theta}^{\rm LS})^TS^{-T}N^2(\Phi^T\Phi)^{-2}S^{-1}\widehat{\theta}^{\rm LS}\\
\nonumber
&+2\sigma^2N^2{\rm Tr}\big(R^{-1}\!-(\Phi^T\Phi)^{-1}\big)\\
\xra{}&W_g(P,\Sigma,\theta_0)\label{ga5}
\end{align}
where we use the limit
\begin{align*}
N^2(R^{-1} \!- (\Phi^T \Phi)^{-1})
&=-\sigma^2NR^{-1}P^{-1}N(\Phi^T \Phi)^{-1}\\
&\xra{}-\sigma^2\Sigma^{-1}P^{-1}\Sigma^{-1}.
\end{align*}
Similarly, we can rewrite ${\rm MSEy}(P)$ as
\begin{align}
\nonumber
{\rm MSEy}(P)
&=\sigma^4\theta_0^T S^{-T}(\Phi^T \Phi)^{-1}S^{-1}\theta_0 +N\sigma^2\\
&~~~~+{\rm Tr}\big(R^{-1}\Phi^T\Phi R^{-T}\Phi^T\Phi \big)
\end{align}
and hence the assertion \dref{cl3} is proved by
\begin{align}
\nonumber
N({\rm MSEy}&(P)\! -\! (n\!+\!N)\sigma^2)\\
\nonumber
=&\sigma^4\theta_0^T S^{-T}N(\Phi^T \Phi)^{-1}S^{-1}\theta_0\\
&+\sigma^2N{\rm Tr}\big( R^{-1}\Phi^T\Phi R^{-T}\Phi^T\Phi \!- I_n\big) \label{mmsey}\\
\nonumber
\xra{}&W_y(P,\Sigma,\theta_0)
\end{align}
where we use the formulas
\begin{align*}
&N(R^{-1}\Phi^T\Phi R^{-T}\Phi^T\Phi-I_n)\\
=&\!-\!\sigma^2NR^{-1}
\big[P^{-1} \!+\! P^{-T}
\!+\! \sigma^2P^{-1} (\Phi^T \Phi)^{-1}P^{-T}\big]R^{-T}\Phi^T\Phi
\\
\xra{}&-\sigma^2\Sigma^{-1}( P^{-1} + P^{-T})
\end{align*}
and ${\rm Tr}(\Sigma^{-1}P^{-1}) = {\rm Tr}(P^{-T}\Sigma^{-1}) = {\rm Tr}(\Sigma^{-1}P^{-T})$.

To prove \dref{cl4}, we need some identities.
A straightforward calculation shows that
\begin{align*}
Q^{T}(I_N - \Phi(\Phi^T \Phi)^{-1}\Phi^T )Q
=\sigma^4(I_N - \Phi(\Phi^T \Phi)^{-1}\Phi^T ).
\end{align*}
This means that
\begin{align*}
\sigma^4Q^{-T}(I_N - \Phi(\Phi^T \Phi)^{-1}\Phi^T )Q^{-1}
\!= I_N \!- \Phi(\Phi^T \Phi)^{-1}\Phi^T
\end{align*}
and hence we derive
\begin{align*}
\sigma^4Y^TQ^{-T}Q^{-1}Y
+Y^T\Phi(\Phi^T \Phi)^{-1}\Phi^TY
-Y^TY\\
=\sigma^4Y^TQ^{-T}\Phi(\Phi^T \Phi)^{-1}\Phi^TQ^{-1}Y.
\end{align*}
It follows from \dref{id6} and \dref{id9} that
\begin{align}
\nonumber
&N\big[\mathscr{F}_{\rm Sy}(P)
+ Y^T\Phi(\Phi^T\Phi)^{-1}\Phi^TY
-Y^TY
- 2n\sigma^2\big]\\
\nonumber
=&N\big[\sigma^4Y^TQ^{-T}\Phi(\Phi^T \Phi)^{-1}\Phi^TQ^{-1}Y
\!\!+\! 2\sigma^2 {\rm Tr}( R^{-1}\Phi^T \Phi \!- \!I_n)\big]\\
\nonumber
=&N\big[\sigma^4(\widehat{\theta}^{\rm LS})^TS^{-T}
\!(\Phi^T \Phi)^{-1}
S^{-1}\widehat{\theta}^{\rm LS}
\!\!+ \!2\sigma^2 {\rm Tr}( R^{-1}\Phi^T \Phi \!- \!I_n)\big]\\
\xra{}&W_y(P,\Sigma,\theta_0)\label{msy}
\end{align}
where we use the limit
\begin{align*}
N(R^{-1}\Phi^T \Phi- I_n)
=-\sigma^2NR^{-1}P^{-1}
\xra{}-\sigma^2\Sigma^{-1}P^{-1}.
\end{align*}
Similarly, we need two identities to prove \dref{cl5}.
Using the Sylvester's determinant identity
$\det(I_n + AB) = \det(I_N + BA)$
derives
\begin{align*}
\det(Q)=\sigma^{2(N-n)}\det(\Phi^T\Phi)\det(P+\sigma^2(\Phi^T\Phi)^{-1})
\end{align*}
which implies
\begin{align}
\nonumber
\log\det (Q)
-(N-n)&\log \sigma^2 - \log \det(\Phi^T\Phi)\\
=&\log \det(S)
\xra{}\log \det(P).\label{eqdet}
\end{align}
Starting with the identity
$I_N\! =\! \sigma^2 Q^{-1} + \Phi P \Phi^T Q^{-1}$
gives
\begin{align*}
\sigma^2{\rm Tr}( Q^{-1}) &= N - {\rm Tr}( \Phi P \Phi^T Q^{-1}) \\
&=N-{\rm Tr}( R^{-1}\Phi^T \Phi)\xra{}N-n.
\end{align*}
Therefore, the limit \dref{cl5} is proved by
\begin{align}
\nonumber
&{\rm EEB}(P) \!-\! (N\!-\!n)
 \!-\! (N\!-\!n)\log \sigma^2 \!\! - \! \log \det(\Phi^T\Phi)\\
 \nonumber
 &=\theta_0^TS^{-1} \theta_0
 \!+ \!\big(\sigma^2{\rm Tr}( Q^{-1} ) - (N-n)  \big) \\
 \nonumber
 &~~+ \log\det (Q)-(N-n) \log \sigma^2 - \log \det(\Phi^T\Phi)\\
 &\xra{}\theta_0^TP^{-1} \theta_0 + \log \det(P)
 =W_{\rm B}(P,\theta_0).\label{mseml}
\end{align}
At last, we finish the proof  by checking \dref{cl6}.
The identity $$Q(I_N - \Phi(\Phi^T \Phi)^{-1}\Phi^T )/\sigma^2
=I_N - \Phi(\Phi^T \Phi)^{-1}\Phi^T$$
implies that
\begin{align}
\nonumber
Y^TQ^{-1}Y +Y^T\Phi(\Phi^T \Phi)^{-1}\Phi^TY/\sigma^2
-Y^TY/\sigma^2\\
= Y^TQ^{-1}\Phi(\Phi^T \Phi)^{-1}\Phi^T Y. \label{eqyy}
\end{align}
It follows from \dref{eqdet}, \dref{eqyy}, and \dref{id6}   that
    \begin{align}
    \nonumber
    &\mathscr{F}_{\rm EB}(P)
    + Y^T\Phi(\Phi^T\Phi)^{-1}\Phi^TY/\sigma^2
    -Y^TY/\sigma^2\\
    \nonumber
    &\hspace{0.4cm} - (N\!-\!n)\log \sigma^2 \! - \! \log \det(\Phi^T\Phi)\\
    \nonumber
     =& Y^TQ^{-1}Y +Y^T\Phi(\Phi^T \Phi)^{-1}\Phi^TY/\sigma^2
    -Y^TY/\sigma^2\\
    \nonumber
    &+\log\det (Q)-(N-n) \log \sigma^2 - \log \det(\Phi^T\Phi)\\
    \nonumber
    =&Y^TQ^{-1}\Phi(\Phi^T \Phi)^{-1}\Phi^T Y
    +\log \det(S)\\
    \xra{}& W_{\rm B}(P,\theta_0).\label{mml}
    \end{align}

\subsection{Proof of Theorem \ref{thm10}}
Firstly, we prove $\widehat{\eta}_{\rm MSEg} \xra{}\eta_{\rm g}^*$
as $N\xra{}\infty$. Define
 \begin{align}
 \overline{{\rm MSE}}{\rm g}(P)
 \eq N^2\big({\rm MSEg}(P) - \sigma^2{\rm Tr}( (\Phi^T \Phi)^{-1})\big).\!\!\label{mmseg}
 \end{align}
Clearly, we have $\widehat{\eta}_{\rm MSEg}$ also minimizes
$\overline{{\rm MSE}}{\rm g}(P(\eta))$, i.e.,
\begin{align*}
\widehat{\eta}_{\rm MSEg} = \argmin_{\eta \in \Omega} \overline{{\rm
MSE}}{\rm g}(P(\eta)).
\end{align*}
Under Assumption \ref{ass:1}, there exists a compact set
\begin{align}\label{eq:barOmega}
\overline{\Omega}\subset \Omega\end{align}  containing $\eta_{\rm
g}^*$ such that $0<d_1\leq \|P(\eta)\|\leq d_2<\infty$ for all
$\eta\in \overline{\Omega}$. Then by Lemma \ref{ct} in Appendix B,
to prove $\widehat{\eta}_{\rm MSEg} \xra{}\eta_{\rm g}^*$ as
$N\xra{}\infty$, it suffices to show that $\overline{{\rm MSE}}{\rm
g}(P(\eta)) $ converges to $W_g(P(\eta),\Sigma,\theta_0)$ almost
surely and uniformly in $\overline{\Omega}$, as
$N\rightarrow\infty$.

It follows from \dref{msegwhole} and \dref{msegtr} that
\begingroup
\allowdisplaybreaks
 \begin{align}
 \nonumber
\overline{{\rm MSE}}{\rm g}&(P(\eta)) - W_g(P,\Sigma,\theta_0)\\
=&\sigma^4Z_1 + 2\sigma^4{\rm Tr}
\big(Z_2\big)
-\sigma^6{\rm Tr}\big(Z_3 \big),\\
 \nonumber
 Z_1=&~\theta_0^TS^{-T} (N^2(\Phi^T \Phi)^{-2})S^{-1}\theta_0
 -\theta_0^TP^{-T}\Sigma^{-2}P^{-1}\theta_0\\
 Z_2=&~\Sigma^{-1}P^{-1}\Sigma^{-1}
 -N^2R^{-1}P^{-1}R^{-T}\\
  \nonumber
 Z_3=&-N^2R^{-1}P^{-1} (\Phi^T \Phi)^{-1} P^{-T}R^{-T}.
 \end{align}
 \endgroup
For the term $Z_1$, we have
 \begin{align}
 \nonumber
Z_1 =\!~&\theta_0^T(S^{-T} - P^{-T}) (N^2(\Phi^T \Phi)^{-2})S^{-1}\theta_0\\
 \nonumber
 &+\theta_0^TP^{-T} (N^2(\Phi^T \Phi)^{-2} - \Sigma^{-2})S^{-1}\theta_0\\
 &+\theta_0^TP^{-T}\Sigma^{-2}(S^{-1}-P^{-1})\theta_0\label{ga1}
 \end{align}
where
\begin{align}
S^{-1}-P^{-1}
=-\sigma^2S^{-1}(\Phi^T\Phi)^{-1}P^{-1}.\label{ga2}
\end{align}
Note that $\Phi^T\Phi/N \xra{} \Sigma$ implies that
$\|N(\Phi^T\Phi)^{-1}\|=O_p(1)$. Then further noting that $d_1\leq
\|P(\eta)\|\leq d_2 $ and $\|S(\eta)^{-1}\| < \|(P(\eta))^{-1}\|\leq
1/d_1 $ for $\eta\in\overline{\Omega}$, we have $Z_1$ converges to
zero almost surely and uniformly in $\overline{\Omega}$. For the
term $Z_2$, we have
\begin{align}
\nonumber
 &\Sigma^{-1}P^{-1}\Sigma^{-1}
-N^2R^{-1}P^{-1}R^{-T}\\
\nonumber =&(\Sigma^{-1} \!-\!NR^{-1})P^{-1}\Sigma^{-1}
\!+\!NR^{-1}P^{-1}(\Sigma^{-1} \!-\! NR^{-T} ).
\end{align}
Noting $NR^{-1} \xra{} \Sigma^{-1}$ and
$\|NR^{-1}-\Sigma^{-1}\|=O_p(1)$ yields that $Z_2$ converges to zero
almost surely and uniformly in $\overline{\Omega}$. Finally, by
noting $(\Phi^T\Phi)^{-1}\xra{}0$ as $N\rightarrow\infty$ it is easy
to see that $Z_3$ also converges to zero almost surely and
uniformly. Making use of these facts shows that $\overline{{\rm
        MSE}}{\rm g}(P(\eta)) $ converges to $W_g(P(\eta),\Sigma,\theta_0)$
almost surely and uniformly in $\overline{\Omega}$ and hence, by
Lemma \ref{ct}, $\widehat{\eta}_{\rm MSEg} \xra{}\eta_{\rm g}^*$
as $N\rightarrow\infty$ almost surely.

Secondly, we prove that $\widehat{\eta}_{\rm Sg} \xra{}\eta_{\rm
    g}^*$ as $N\rightarrow\infty$ and the proof is similar to that of
$\widehat{\eta}_{\rm MSEg} \xra{}\eta_{\rm g}^*$ as
$N\rightarrow\infty$. Define
\begin{align*}
&\mathscr{\overline{F}}_{\rm Sg}(P(\eta))
\eq N^2\big(\mathscr{F}_{\rm Sg}(P(\eta)) - \sigma^2{\rm Tr}( (\Phi^T \Phi)^{-1})\big).
\end{align*}
Then, we have
\begin{align}
\widehat{\eta}_{\rm Sg}=\argmin_{\eta\in\Omega} \mathscr{\overline{F}}_{\rm Sg}(P(\eta)).
\end{align}
It follows from \dref{ga5} that
\begingroup
\allowdisplaybreaks
\begin{align}
\nonumber &\mathscr{\overline{F}}_{\rm Sg}(P(\eta))
-W_g(P,\Sigma,\theta_0)
=\sigma^4Z'_1+2\sigma^4{\rm Tr}\big(Z'_2 \big),\\
\nonumber
&Z'_1\!=\!(\widehat{\theta}^{\rm LS})^TS^{-T}\!N^2(\Phi^T\Phi)^{-2}S^{-1}\widehat{\theta}^{\rm LS}
\!-\!\theta_0^TP^{-T}\Sigma^{-2}P^{-1}\theta_0\\
\nonumber
&Z'_2\!=\!\Sigma^{-1}P^{-1}\Sigma^{-1}-NR^{-1}P^{-1}N(\Phi^T \Phi)^{-1}.
\end{align}
\endgroup
For the terms $Z'_1$ and $Z'_2$, we have
\begingroup
\allowdisplaybreaks
\begin{align}
\nonumber
Z'_1=&\big(\widehat{\theta}^{\rm LS}-\theta_0\big)^T
S^{-T}\!N^2(\Phi^T\Phi)^{-2}S^{-1}\widehat{\theta}^{\rm LS}\\
\nonumber
&+\theta_0^T\big(S^{-T} \!- P^{-T}\big)N^2(\Phi^T\Phi)^{-2}S^{-1}\widehat{\theta}^{\rm LS}\\
\nonumber
&+\theta_0^TP^{-T}\big(N^2(\Phi^T\Phi)^{-2} -\Sigma^{-2} \big)S^{-1}\widehat{\theta}^{\rm LS}\\
\nonumber
&+\theta_0^TP^{-T}\Sigma^{-2}\big(S^{-1}-P^{-1} \big)\widehat{\theta}^{\rm LS}\\
&+\theta_0^TP^{-T}\Sigma^{-2}P^{-1} \big( \widehat{\theta}^{\rm LS}- \theta_0\big)\label{z1}\\
\nonumber
Z'_2=&\big(\Sigma^{-1} - NR^{-1} \big)P^{-1}\Sigma^{-1}\\
&+NR^{-1} P^{-1}\big( \Sigma^{-1} - N(\Phi^T \Phi)^{-1} \big).\label{z2}
\end{align}
\endgroup
Then, noting that $\widehat{\theta}^{\rm
    LS}\xra{}\theta_0$, $S^{-1}\xra{}P^{-1}$,
$N(\Phi^T \Phi)^{-1}\xra{}\Sigma^{-1}$, $NR^{-1}\xra{}\Sigma^{-1}$
almost surely as $N\xra{}\infty$, and $ \|NR^{-1}\|=O_p(1)$,
$\|\widehat{\theta}^{\rm LS}\|=O_p(1)$, and $d_1\leq \|P(\eta)\|\leq
d_2 $, $\|S(\eta)^{-1}\| < \|(P(\eta))^{-1}\|\leq 1/d_1 $, for
$\eta\in\overline{\Omega}$, one can show that each term of \dref{z1}
and \dref{z2}, and thus both $Z'_1$ and $Z'_2$ converge to zero
almost surely and uniformly in $\overline{\Omega}$. Therefore,
$\mathscr{\overline{F}}_{\rm Sg}(P(\eta))$ converges to
$W_g(P,\Sigma,\theta_0)$ almost surely and uniformly in
$\overline{\Omega}$. It then follows from Lemma \ref{ct} that
$\widehat{\eta}_{\rm Sg} \xra{}\eta_{\rm g}^*$ almost surely as
$N\xra{}\infty$.

The proof of \dref{ohpsy} and \dref{ohpml} can be done similarly
and thus is omitted. The first order optimality conditions of
$\eta_g^*,\eta_y^*,$ and $\eta_{\rm B}^*$ can be derived in a
similar way as Proposition \ref{thmb1} and thus is omitted. This
completes the proof.

\subsection{Proof of Theorem \ref{thm12}}

We first prove that $\|\widehat{\eta}_{\rm MSEg} -\eta_{\rm
g}^*\|=O_p(\varpi_N)$.

Noting \dref{msegwhole}, the $i$-th elements of the gradient vectors
of $\overline{{\rm MSE}}{\rm g}(P(\eta))$ and $
W_g(P(\eta),\Sigma,\theta_0)$ with respect to $\eta$ are,
respectively, for $1\leq i \leq p$,
\begingroup
\allowdisplaybreaks
\begin{align}
\nonumber
&\frac{\partial \overline{{\rm MSE}}{\rm g}(P(\eta))}{\partial \eta_i}
\!=2\sigma^4N^2\theta_0^TS^{-T}( \Phi^T \Phi)^{-2}
\frac{\partial S^{-1}}{\partial \eta_i} \theta_0\\
\nonumber
&\hspace{7em}+2\sigma^2N^2{\rm Tr}\Big(\!  \frac{\partial R^{-1}}{\partial \eta_i}  \Phi^T \Phi R^{-T}\Big)\\
\nonumber
&\frac{\partial W_g(P(\eta),\Sigma,\theta_0)}{\partial \eta_i}
=2\sigma^4\theta_0^TP^{-T} \Sigma^{-2}
 \frac{\partial P^{-1}}{\partial \eta_i} \theta_0\\
&\hspace{8.5em}-2\sigma^4{\rm Tr}\Big( \Sigma^{-1} \frac{\partial P^{-1}}{\partial \eta_i}  \Sigma^{-1}\Big).\!\!\!\!
\label{gwg}
\end{align}
\endgroup
Using the identity $\frac{\partial R^{-1}}{\partial \eta_i}\! =\! -R^{-1}\frac{\partial R}{\partial \eta_i}R^{-1}\!=\!-\sigma^2R^{-1}\frac{\partial P^{-1}}{\partial \eta_i}R^{-1}$, we see their difference is
\begingroup
\allowdisplaybreaks
\begin{align}
\nonumber
&\frac{\partial \overline{{\rm MSE}}{\rm g}(P(\eta))}{\partial \eta_i}
\!-\!\frac{\partial W_g(P(\eta),\Sigma,\theta_0)}{\partial \eta_i}
=2\sigma^4 \big(\Upsilon_1 \!+\!{\rm Tr}(\Upsilon_2  )  \big),\\
\nonumber
&\mbox{where}~\Upsilon_1=
\theta_0^TS^{-T}\big(N^2( \Phi^T \Phi)^{-2}\big)
\frac{\partial S^{-1}}{\partial \eta_i} \theta_0\\
\nonumber
&\hspace{5.3em}-\theta_0^TP^{-T} \Sigma^{-2}
\frac{\partial P^{-1}}{\partial \eta_i} \theta_0,\\
\nonumber
&\Upsilon_2\!=\Sigma^{-1} \frac{\partial P^{-1}}{\partial \eta_i}  \Sigma^{-1}
-NR^{-1} \frac{\partial P^{-1}}{\partial \eta_i} R^{-1}  \Phi^T \Phi NR^{-T}.
\end{align}
\endgroup
Noting $\|N(\Phi^T\Phi)^{-1} - \Sigma^{-1}\|=O_p(\delta_N)$,
$\|S^{-1} - P^{-1}\| =O_p(1/N)$, $\big\|\frac{\partial
S^{-1}}{\partial \eta_i}
 - \frac{\partial P^{-1}}{\partial \eta_i}\big\|=O_p(1/N) $,
$\big\|R^{-1}  \Phi^T \Phi  \big\|=O_p(1/N)$, $\big\|NR^{-1} -
\Sigma^{-1}  \big\|=O_p(\delta_N)$, and $d_1\leq \|P(\eta)\|\leq d_2
$ and $\|S(\eta)^{-1}\| < \|(P(\eta))^{-1}\|\leq 1/d_1 $ for
$\eta\in\overline{\Omega}$ yields
\begin{align}
&|\Upsilon_1 | = O_p(\varpi_N),~~
|{\rm Tr}(\Upsilon_2  ) | = O_p(\varpi_N)
\end{align}
uniformly in $\overline{\Omega}$, where $\overline{\Omega}$ is
defined in \eqref{eq:barOmega}. Therefore, we have
\begin{align*}
\Big\|\frac{\partial \overline{{\rm MSE}}{\rm g}(P(\eta))}{\partial \eta}
-\frac{\partial W_g(P(\eta),\Sigma,\theta_0)}{\partial \eta}\Big\|
=O_p(\varpi_N)
\end{align*}
uniformly for any $\eta \in\overline{\Omega}$. Since
$\widehat{\eta}_{\rm MSEg}$ and  $\eta_{\rm g}^*$ minimize
$\overline{{\rm MSE}}{\rm g}(P)$ and $W_g(P,\Sigma,\theta_0)$,
respectively, we have
\begin{align*}
\frac{\partial \overline{{\rm MSE}}{\rm g}(P(\eta))}{\partial \eta}\Big|_{\eta = \widehat{\eta}_{\rm MSEg}}\!\! =0
~\mbox{and}~
\frac{\partial W_g(P(\eta),\Sigma,\theta_0)}{\partial \eta}\Big|_{\eta = \eta_{\rm g}^*} \!\!=0.
\end{align*}
It follows that
\begin{align*}
\frac{\partial \overline{{\rm MSE}}{\rm g}(P(\eta))}{\partial \eta}\Big|_{\eta = \eta_{\rm g}^*} =O_p(\varpi_N).
\end{align*}
In addition, by using \dref{gwg}, the $(i,j)$-element of the Hessian matrix of $W_g(P(\eta),\Sigma,\theta_0)$ is
\begingroup
\allowdisplaybreaks
\begin{align}
\nonumber
&\frac{\partial^2 W_g(P(\eta),\Sigma,\theta_0)}{\partial \eta_i \partial \eta_j}\\
\nonumber
=&2\sigma^4\theta_0^TP^{-T}
\Sigma^{-2}
\frac{\partial^2 P^{-1}}{\partial \eta_i\partial \eta_j}\theta_0
+2\sigma^4\theta_0^T\frac{\partial P^{-T}}{\partial \eta_j}
\Sigma^{-2}\frac{\partial P^{-1}}{\partial \eta_i} \theta_0\\
&-2\sigma^4{\rm Tr}\Big( \Sigma^{-1} \frac{\partial^2 P^{-1}}{\partial \eta_i\partial \eta_j} \Sigma^{-1}\Big).\label{hmseg}
\end{align}
\endgroup
The Hessian matrix $\frac{\partial^2 \overline{{\rm MSE}}{\rm
g}(P(\eta))}{\partial \eta \partial \eta^T}$ of $\overline{{\rm
MSE}}{\rm g}(P(\eta))$ is omitted here for simplicity. Then, it can
be shown that
\begin{align*}
\Big\|\frac{\partial^2 \overline{{\rm MSE}}{\rm g}(P(\eta))}{\partial \eta\partial \eta^T}
\!-\!\frac{\partial^2 W_g(P(\eta),\Sigma,\theta_0)}{\partial \eta\partial \eta^T}\Big\|
=o_p(1)
\end{align*}
uniformly for any $\eta \in\overline{\Omega}$. Applying the Taylor
expansion to $\frac{\partial \overline{{\rm MSE}}{\rm
g}(P(\eta))}{\partial \eta}$ yields
\begin{align*}
0=\frac{\partial \overline{{\rm MSE}}{\rm g}(P(\eta))}{\partial \eta}\Big|_{\eta = \widehat{\eta}_{\rm MSEg}}
=\frac{\partial \overline{{\rm MSE}}{\rm g}(P(\eta))}{\partial \eta}\Big|_{\eta = \eta_{\rm g}^*}\\
+\frac{\partial^2 \overline{{\rm MSE}}{\rm g}(P(\eta))}{\partial \eta \partial \eta^T}\Big|_{\eta = \bar{\eta}}
(\widehat{\eta}_{\rm MSEg} - \eta_{\rm g}^*),
\end{align*}
where $\bar{\eta}$ lies between $\widehat{\eta}_{\rm MSEg}$ and
$\eta_{\rm g}^*$.

Clearly, $$\frac{\partial^2 W_g(P(\eta),\Sigma,\theta_0)}{\partial
\eta\partial \eta^T}\Big|_{\eta = \eta_{\rm g}^*}=O_p(1).$$ Then
under Assumption \ref{ass:2}, we have $\frac{\partial^2
W_g(P(\eta),\Sigma,\theta_0)}{\partial \eta\partial
\eta^T}\Big|_{\eta = \eta_{\rm g}^*}$ is positive definite. For
sufficiently large $N$, $\bar{\eta}$ would be close to $\eta_{\rm
g}^*$. In this case, we also have $\frac{\partial^2
W_g(P(\eta),\Sigma,\theta_0)}{\partial \eta\partial
\eta^T}\Big|_{\eta = \bar \eta}$ is positive definite. Then it
follows that
\begin{align*}
&\widehat{\eta}_{\rm MSEg} - \eta_{\rm g}^*\\
 =& -\Big( \frac{\partial^2\overline{{\rm MSE}}{\rm g}(P(\eta))}{\partial \eta \partial \eta^T}\Big|_{\eta = \bar{\eta}} \Big)^{-1}
\frac{\partial \overline{{\rm MSE}}{\rm g}(P(\eta))}{\partial \eta}\Big|_{\eta = \eta_{\rm g}^*}\\
=&O_p(1)O_p(\varpi_N) =O_p(\varpi_N).
\end{align*}
Now, we prove $\|\widehat{\eta}_{\rm Sg} - \eta_{\rm
g}^*\|=O_p(\mu_N)$ and the proof is similar to that of
$\|\widehat{\eta}_{\rm MSEg} -\eta_{\rm g}^*\|=O_p(\varpi_N)$. By
\dref{ga5}, the $i$-th element of gradient vector of
$\mathscr{\overline{F}}_{\rm Sg}(P(\eta))$ is
\begin{align}
\nonumber
&\frac{\partial \mathscr{\overline{F}}_{\rm Sg}(P(\eta))}{\partial \eta_i}
\!=\!2\sigma^4
(\widehat{\theta}^{\rm LS})^T\!S^{-T}\!N^2(\Phi^T\Phi)^{-2}\frac{\partial S^{-1}}{\partial \eta_i}\widehat{\theta}^{\rm LS}\\
&\hspace{7em}+2\sigma^2N^2
{\rm Tr}
\Big(\frac{\partial R^{-1}}{\partial \eta_i}\Big). \label{gsg}
\end{align}
Using the identity $\frac{\partial R^{-1}}{\partial \eta_i}\! =\!-\sigma^2R^{-1}\frac{\partial P^{-1}}{\partial \eta_i}R^{-1}$, we see
\begingroup
\allowdisplaybreaks
\begin{align}
\nonumber
&\frac{\partial \mathscr{\overline{F}}_{\rm Sg}(P(\eta))}{\partial \eta_i}
-\frac{\partial W_g(P(\eta),\Sigma,\theta_0)}{\partial \eta_i}
=2\sigma^4 \Upsilon'_1+2\sigma^4 {\rm Tr}\big(\Upsilon'_2\big)\\
\nonumber
&\mbox{where}
~~\Upsilon'_1=(\widehat{\theta}^{\rm LS})^T\!S^{-T}\!N^2(\Phi^T\Phi)^{-2}\frac{\partial S^{-1}}{\partial \eta_i}\widehat{\theta}^{\rm LS}\\
&\hspace{6em}-\theta_0^TP^{-T} \Sigma^{-2}
\frac{\partial P^{-1}}{\partial \eta_i} \theta_0\\
\nonumber
&\hspace{3.4em}\Upsilon'_2=\Sigma^{-1} \frac{\partial P^{-1}}{\partial \eta_i}  \Sigma^{-1}
-NR^{-1}\frac{\partial P^{-1}}{\partial \eta_i}NR^{-1}.
\end{align}
\endgroup
Since $\Phi^T\Phi/N \xra{} \Sigma$ and $v(t)$ is a white noise, we
have $\|\widehat{\theta}^{\rm LS} - \theta_0\|=O_p(1/\sqrt{N})$.
Then noting that $\|N(\Phi^T\Phi)^{-1} -
\Sigma^{-1}\|=O_p(\delta_N)$,
    $\|S^{-1} - P^{-1}\| =O_p(1/N)$, $\big\|\frac{\partial S^{-1}}{\partial \eta_i}
    - \frac{\partial P^{-1}}{\partial \eta_i}\big\|=O_p(1/N) $,
    $\big\|NR^{-1} - \Sigma^{-1}  \big\|=O_p(\delta_N)$,
    and $ \|NR^{-1}\|=O_p(1)$, $\|\widehat{\theta}^{\rm LS}\|=O_p(1)$,
    and $d_1\leq \|P(\eta)\|\leq d_2 $ and $\|S(\eta)^{-1}\| < \|(P(\eta))^{-1}\|\leq 1/d_1 $ for
    $\eta\in\overline{\Omega}$, yields
\begin{align*}
&|\Upsilon'_1| \! = \!\max\big(O_p(1/\sqrt{N}),O_p(1/N), O_p(\delta_N)\big)\!=O_p(\mu_N),\\
&|{\rm Tr}\big(\Upsilon'_2\big)| \! = \!O_p(\delta_N),
\end{align*}
uniformly in $\overline{\Omega}$. It follows that
$$\Big\|\frac{\partial \mathscr{\overline{F}}_{\rm Sg}(P(\eta))}{\partial \eta}
-\frac{\partial W_g(P(\eta),\Sigma,\theta_0)}{\partial \eta}\Big\|=O_p(\mu_N)$$
uniformly for any $\eta \in\overline{\Omega}$.
This implies
\begin{align}
\frac{\partial \mathscr{\overline{F}}_{\rm Sg}(P(\eta))}{\partial \eta}\Big|_{\eta = \eta_{\rm g}^*} =O_p(\mu_N).
\end{align}
Similarly, one can obtain the Hessian matrix of
$\mathscr{\overline{F}}_{\rm Sg}(P(\eta))$ and can show that
\begin{align}
\Big\|\frac{\partial^2 \mathscr{\overline{F}}_{\rm Sg}(P(\eta))}{\partial \eta\partial \eta^T}
-\frac{\partial^2 W_g(P(\eta),\Sigma,\theta_0)}{\partial \eta\partial \eta^T}\Big\|
=o_p(1)\label{chsg}
\end{align}
uniformly for any $\eta \in\overline{\Omega}$. Applying the Taylor
expansion of $\frac{\partial \mathscr{\overline{F}}_{\rm
Sg}(P(\eta))}{\partial \eta}$ shows
\begin{align*}
0=\frac{\partial \mathscr{\overline{F}}_{\rm Sg}(P(\eta))}{\partial \eta}\Big|_{\eta = \widehat{\eta}_{\rm Sg}}
=\frac{\partial \mathscr{\overline{F}}_{\rm Sg}(P(\eta))}{\partial \eta}\Big|_{\eta = \eta_{\rm g}^*}\\
+\frac{\partial^2 \overline{{\rm MSE}}{\rm g}(P(\eta))}{\partial \eta \partial \eta^T}\Big|_{\eta = \widetilde{\eta}}
(\widehat{\eta}_{\rm MSEg} - \eta_{\rm g}^*),
\end{align*}
where $\widetilde{\eta}$ lies between $\widehat{\eta}_{\rm Sg}$ and
$\eta_{\rm g}^*$. For sufficiently large $N$, we have
\begin{align*}
&\widehat{\eta}_{\rm Sg} - \eta_{\rm g}^*\\
=& -\Big( \frac{\partial^2 \mathscr{\overline{F}}_{\rm Sg}(P(\eta))}{\partial \eta \partial \eta^T}\Big|_{\eta = \widetilde{\eta}} \Big)^{-1}
\frac{\partial \mathscr{\overline{F}}_{\rm Sg}(P(\eta))}{\partial \eta}\Big|_{\eta = \eta_{\rm g}^*}\\
=&O_p(1)O_p(\mu_N)
=O_p(\mu_N).
\end{align*}
The proof of \dref{ohpsyr} and \dref{ohpmlr} can be done in a
similar way and thus is omitted. This completes the proof.

\section*{Appendix B}

\renewcommand{\thesection}{B}

\setcounter{subsection}{0}

\setcounter{lem}{0}
\renewcommand{\thelem}{B\arabic{lem}}
\setcounter{rem}{0}
\renewcommand{\therem}{B\arabic{rem}}

This appendix contains the technical lemmas used in the proof in
Appendix A.

\subsection{Matrix Differentials and Related Identities}

This section introduces the differentiation of a function $f(X)$
where $X$ is a matrix. It is assumed that $X$ has no special
structure, i.e., that the elements of $X$ are independent. For
convenience and readability, the formulas used in the paper are
stated in the following lemmas.

\begin{lem} \citep{Petersen2012}
    \label{md}
    Assume that $b$ is a column vector, and $A,B$ and $X$ are matrices with compatible dimensions.
    Then we have
    \begingroup
    \allowdisplaybreaks
    \begin{align}
    &\frac{\partial b^T X^T A X b}{\partial X}=(A+A^T)Xb b^T \label{md1}\\
    &\frac{\partial b^T X^{-1} b}{\partial X}=-X^{-T}b b^TX^{-T} \label{md4}\\
    &\frac{\partial \log |\det (X)|}{\partial X}=X^{-T} \label{md2}\\
    &\frac{\partial (X^{-1})_{kl}}{\partial X_{ij}}=-(X^{-1})_{ki} (X^{-1})_{jl} \label{md3}\\
    &\frac{\partial {\rm Tr}(AX^{-1}B) }{\partial X}=-(X^{-1}BAX^{-1})^{T} \label{md5}\\
    &\frac{\partial {\rm Tr}(AX B  X^T  A^T) }{\partial X}=A^TAX(B+B^T). \label{md6}
    \end{align} where $(\cdot)_{ij}$ denotes the $(i,j)$th element of a matrix.
\endgroup
\end{lem}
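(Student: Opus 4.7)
The plan is to verify each identity by elementwise differentiation, taking formula \eqref{md3} as the cornerstone from which the other inverse-related identities follow. First I would differentiate the identity $XX^{-1}=I$ with respect to $X_{ij}$: the product rule gives $J_{ij}X^{-1}+X(\partial X^{-1}/\partial X_{ij})=0$, where $J_{ij}$ denotes the single-entry matrix with a $1$ in position $(i,j)$ and zeros elsewhere. Solving for $\partial X^{-1}/\partial X_{ij}$ yields $-X^{-1}J_{ij}X^{-1}$, whose $(k,l)$-entry is $-(X^{-1})_{ki}(X^{-1})_{jl}$, establishing \eqref{md3}.

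With \eqref{md3} in hand, \eqref{md4} follows by writing $b^{T}X^{-1}b=\sum_{k,l}b_{k}b_{l}(X^{-1})_{kl}$ and reading off the $(i,j)$-entry of the gradient as $-\sum_{k,l}b_{k}b_{l}(X^{-1})_{ki}(X^{-1})_{jl}=-(X^{-T}bb^{T}X^{-T})_{ij}$. The identity \eqref{md5} is obtained in the same manner, starting from $\mathrm{Tr}(AX^{-1}B)=\sum_{k,l}(BA)_{lk}(X^{-1})_{kl}$ and applying \eqref{md3}. For \eqref{md1} I would expand $b^{T}X^{T}AXb=\sum_{p,q,r,s}b_{p}X_{qp}A_{qr}X_{rs}b_{s}$ and apply the Leibniz rule to the two occurrences of $X$; the two resulting terms collect to $((A+A^{T})Xbb^{T})_{ij}$. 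The argument for \eqref{md6} is analogous: expand the trace into a multi-index sum, differentiate the two factors of $X$ separately, and recombine using the cyclicity of trace.

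Finally, \eqref{md2} is the classical Jacobi formula. Expanding $\det X$ along row $i$ by cofactors, $\det X=\sum_{k}X_{ik}C_{ik}$, where the cofactors $C_{ik}$ are independent of the entries of row $i$; hence $\partial\det X/\partial X_{ij}=C_{ij}=(\det X)(X^{-1})_{ji}$ by Cramer's rule, and $\partial\log|\det X|/\partial X_{ij}=(X^{-1})_{ji}$, which is the $(i,j)$-entry of $X^{-T}$.

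The computations are all routine; the only point requiring care is the convention that differentiation with respect to the $(i,j)$-entry produces the $(i,j)$-entry of the claimed matrix, so that transposes must be tracked carefully throughout. The hypothesis that $X$ has no special structure is essential, since it is what allows the entries to be varied independently. As all of these identities are already compiled in \cite{Petersen2012}, one may alternatively simply cite that reference in lieu of the derivations sketched above.
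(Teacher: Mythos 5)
Your derivations are correct; all six identities check out under the elementwise convention stated, and the reduction of \eqref{md4} and \eqref{md5} to \eqref{md3} via differentiating $XX^{-1}=I$ is the standard route. The paper itself supplies no proof of this lemma — it is quoted directly from \cite{Petersen2012} — so your closing remark that one may simply cite that reference is exactly what the authors do.
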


\begin{lem}\label{lm0}
    Suppose that both $A$ and $B$ are positive  semidefinite.
    If ${\rm Tr}(AB)= 0$, then $AB=0$.
\end{lem}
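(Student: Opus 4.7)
The plan is to use positive semidefinite square roots to turn the trace into a squared Frobenius norm. Since $A$ is positive semidefinite, it admits a unique symmetric positive semidefinite square root $A^{1/2}$, and likewise for $B^{1/2}$. The key identity I will exploit is
\[
{\rm Tr}(AB) = {\rm Tr}(A^{1/2}A^{1/2}B^{1/2}B^{1/2}) = {\rm Tr}(A^{1/2}B^{1/2}B^{1/2}A^{1/2})
\]
after a cyclic permutation. Writing $C = A^{1/2}B^{1/2}$ and using the symmetry of $A^{1/2}$ and $B^{1/2}$, the right-hand side equals ${\rm Tr}(CC^T) = \|C\|_F^2$, the squared Frobenius norm of $C$.

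From the hypothesis ${\rm Tr}(AB) = 0$ I then conclude $\|C\|_F = 0$, hence $C = A^{1/2}B^{1/2} = 0$. Finally I recover $AB = 0$ by factoring through the square roots:
\[
AB = A^{1/2}(A^{1/2}B^{1/2})B^{1/2} = A^{1/2} \cdot 0 \cdot B^{1/2} = 0.
\]

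There is no real obstacle here; the only subtlety is making sure to invoke the symmetric PSD square root (so that $(A^{1/2})^T = A^{1/2}$ and likewise for $B^{1/2}$), which is what lets ${\rm Tr}(A^{1/2}B^{1/2}B^{1/2}A^{1/2})$ be recognized as a squared Frobenius norm. Everything else is routine manipulation with the cyclic property of the trace.
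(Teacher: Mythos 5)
Your proof is correct and follows essentially the same route as the paper's: both write ${\rm Tr}(AB)$ as the squared Frobenius norm of $A^{1/2}B^{1/2}$ via cyclic permutation, conclude $A^{1/2}B^{1/2}=0$, and recover $AB=0$ by pre- and post-multiplying with the square roots. No differences worth noting.
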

\begin{proof}
    Let us denote the symmetric square root factorization of $A$ by $A^{\frac12}$.
    Thus the trace property implies
    \begin{align*}
    {\rm Tr}(AB)=&{\rm Tr}(A^{\frac12}A^{\frac12}B^{\frac12}B^{\frac12})\\
    =&{\rm Tr}(A^{\frac12}B^{\frac12}B^{\frac12}A^{\frac12})
    =\|A^{\frac12}B^{\frac12}\|^2=0.
    \end{align*}
    This derives that $A^{\frac12}B^{\frac12}=0$.
    Pre-multiplying by $A^{\frac12}$ and post-multiplying by $B^{\frac12}$
    entails $AB=0$.
\end{proof}

\begin{lem}
We have the following identities:
\begingroup
\allowdisplaybreaks
\begin{align}
&\sum\nolimits_{ij}(A)_{ij}J_{ij} = A \label{id1},\\
& Y - \Phi \widehat{\theta}^{\rm R} = \sigma^2Q^{-1}Y\label{id3},\\
& \widehat{\theta}^{\rm LS} - \widehat{\theta}^{\rm R}
=\sigma^2(\Phi^T\Phi)^{-1}\Phi^TQ^{-1}Y\label{id4},\\
& A(I_N+BA)^{-1}=(I_n+AB)^{-1}A \label{id2},\\
&\Phi^TQ^{-1} \Phi =S^{-1},~~
\Phi^TQ^{-1}Y = S^{-1}\widehat{\theta}^{\rm LS}, \label{id6}\\
&\Phi^TQ^{-T}Q^{-1} \Phi =S^{-T} (\Phi^T\Phi)^{-1}S^{-1},\\
&\Phi^TQ^{-T}Q^{-1}Y = S^{-T} (\Phi^T\Phi)^{-1}S^{-1}\widehat{\theta}^{\rm LS}, \label{id5}\\
&I_N \!- \!\sigma^2 Q^{-1} \!=\! \Phi P \Phi^T Q^{-1} \!=\! Q^{-1}\Phi P \Phi^T\!\!=\!\Phi R^{-1}\Phi^T, \label{id9}\\
&\frac{\partial (Q^{-1})_{ij}}{\partial P} = -\Phi^T Q^{-T} J_{ij} Q^{-T}\Phi \label{id7},\\
&\frac{\partial (R^{-1})_{ij}}{\partial P} = \sigma^2P^{-T} R^{-T}
J_{ij} R^{-T}P^{-T}, \label{id10}
\end{align} \endgroup
where $J_{ij}$ is a matrix whose $(i,j)$-element is one and zero for all other elements.
\end{lem}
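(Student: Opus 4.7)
The plan is to verify each identity in turn using only elementary matrix algebra, the push-through identity, and standard matrix-differential calculus. No deep machinery is required; the work is pure bookkeeping, and I would group identities that share a common tool.

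First I would dispose of the trivial/definitional ones. Identity (id1) follows immediately because $J_{ij}$ has a single nonzero entry, a $1$ at position $(i,j)$, so the sum reproduces $A$ entry by entry. For (id3), I plug $\widehat\theta^{\rm R} = P\Phi^T Q^{-1}Y$ into $Y-\Phi\widehat\theta^{\rm R}$ and use $\Phi P\Phi^T = Q-\sigma^2 I_N$ to collapse the expression to $\sigma^2 Q^{-1}Y$. Identity (id4) is then the result of premultiplying (id3) by $(\Phi^T\Phi)^{-1}\Phi^T$ and recognizing $\widehat\theta^{\rm LS}=(\Phi^T\Phi)^{-1}\Phi^T Y$. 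The push-through (id2) drops out of the one-line identity $A(I_N + BA) = A + ABA = (I_n + AB)A$ after multiplying on the right by $(I_N + BA)^{-1}$ and on the left by $(I_n + AB)^{-1}$.

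Next I would establish the $Q^{-1}$/$S$ identities. Applying the push-through (id2) to $\Phi^T Q^{-1} = \sigma^{-2}\Phi^T(I_N + \Phi(P/\sigma^2)\Phi^T)^{-1}$ yields $\Phi^T Q^{-1} = \overline H^{-1}\Phi^T$, where $\overline H$ is as defined in \eqref{eq:H}. Since the definition \eqref{eq:S} gives $\overline H = \Phi^T\Phi\,S$, I obtain $\Phi^T Q^{-1}\Phi = \overline H^{-1}\Phi^T\Phi = S^{-1}$ and $\Phi^T Q^{-1}Y = S^{-1}\widehat\theta^{\rm LS}$, proving (id6). Transposing $\Phi^T Q^{-1} = \overline H^{-1}\Phi^T$ gives $Q^{-1}\Phi = \Phi H^{-1}$ (since $\overline H^T = H$); multiplying the two gives $\Phi^T Q^{-T}Q^{-1}\Phi = \overline H^{-1}(\Phi^T\Phi)H^{-1}$, which I would identify with $S^{-T}(\Phi^T\Phi)^{-1}S^{-1}$ using the dual representation $S^{-1} = (\Phi^T\Phi)H^{-1}$, and similarly for the vector version (id5). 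The first three equalities in the chain (id9) are immediate from $Q = \sigma^2 I_N + \Phi P\Phi^T$ by right- or left-multiplication by $Q^{-1}$; the fourth equality, $\Phi P\Phi^T Q^{-1} = \Phi R^{-1}\Phi^T$, follows from the subsidiary fact $P\Phi^T Q^{-1} = R^{-1}\Phi^T$, which I would verify through $RP\Phi^T = (\Phi^T\Phi + \sigma^2 P^{-1})P\Phi^T = \Phi^T(\Phi P\Phi^T + \sigma^2 I_N) = \Phi^T Q$.

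The two derivative identities (id7) and (id10) are handled by matrix differential calculus. Treating the entries of $P$ as independent variables and using $d(M^{-1}) = -M^{-1}(dM)M^{-1}$ with $dQ = \Phi(dP)\Phi^T$, the differential of the scalar $(Q^{-1})_{ij}$ is $d(Q^{-1})_{ij} = -\sum_{k,l}(Q^{-1}\Phi)_{ik}(\Phi^T Q^{-1})_{lj}\,dP_{kl}$, so the $(k,l)$-entry of the matrix $\partial(Q^{-1})_{ij}/\partial P$ equals $-(Q^{-1}\Phi)_{ik}(\Phi^T Q^{-1})_{lj}$, which by direct expansion of the triple product matches the $(k,l)$-entry of $-\Phi^T Q^{-T}J_{ij}Q^{-T}\Phi$. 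Identity (id10) is analogous, this time with $dR = -\sigma^2 P^{-1}(dP)P^{-1}$, producing the extra flanking factor $\sigma^2 P^{-T}\cdots P^{-T}$ around the $R$-terms. The only mild obstacle I anticipate is keeping the index and transpose conventions straight in these last two identities; after that, the identification is automatic.
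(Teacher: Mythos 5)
Your proof is correct and takes essentially the same route as the paper: the paper dismisses \dref{id1}--\dref{id9} as straightforward calculation and only details the index computation for \dref{id7} via the chain rule and \dref{md3}, which your differential identity $d(M^{-1})=-M^{-1}(dM)M^{-1}$ reproduces exactly (likewise for \dref{id10}). One minor bookkeeping caveat: in your derivation of \dref{id5} the steps ``$\overline{H}^{T}=H$'' and the identification of $\overline{H}^{-1}(\Phi^T\Phi)H^{-1}$ with $S^{-T}(\Phi^T\Phi)^{-1}S^{-1}$ implicitly assume $P=P^T$; the transpose-robust version uses $Q^{-1}\Phi=\Phi H^{-1}$ (hence $\Phi^TQ^{-T}=H^{-T}\Phi^T$) to get $\Phi^TQ^{-T}Q^{-1}\Phi=H^{-T}(\Phi^T\Phi)H^{-1}$, which equals the stated right-hand side via $H=S\,\Phi^T\Phi$ without any symmetry assumption.
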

\begin{proof}
    The identities \dref{id1}--\dref{id9} can be verified by a straightforward calculation.
    Using \dref{md3} gives
    \begingroup
    \allowdisplaybreaks
    \begin{align*}
    \frac{\partial (Q^{-1})_{ij}}{\partial P_{st}}
    =&\sum_{a,b}    \frac{\partial (Q^{-1})_{ij}}{\partial Q_{ab}}
    \frac{\partial Q_{ab}}{\partial P_{st}}\\
    =&-\sum_{a,b}(Q^{-1})_{ia}(Q^{-1})_{bj}\Phi_{as}(\Phi^T)_{tb}\\
    =&-\sum_{a,b}(\Phi^T)_{sa}(Q^{-T})_{ai}(Q^{-T})_{jb}\Phi_{bt}\\
    =&-(\Phi^T Q^{-T})_{si}(Q^{-T}\Phi )_{jt},
    \end{align*}
    \endgroup
    which implies \dref{id7}.
    While \dref{id10} can be proved in a similar way.
\end{proof}

\subsection{Convergence Result for Extremum Estimators}

\begin{lem}\citep[Theorem 8.2]{Ljung1999}
    \label{ct}
Assume that
\begin{enumerate}[1)]
    \item $M(\eta)$ is a deterministic function that is continuous in $\eta \in \Omega$ and  minimized at the set
    \begin{align*}
    D\!=\!\argmin_{\eta \in \Omega} M(\eta) \!=\! \big\{\eta|\eta \in
    \Omega, M(\eta)\!=\!\min_{\eta'\in \Omega }M(\eta')\big\}
    \end{align*}
    where $\Omega$ is a compact subset of $\mathbb{R}^p$.

    \item A sequence of functions $\{M_N(\eta)\}$ converges to $M(\eta)$ almost surely and uniformly in $\Omega$ as $N$ goes to $\infty$.

\end{enumerate}
    Then $\widehat{\eta}_N = \arg\min_{\eta\in \Omega} M_N(\eta)$
    converges to $D$ almost surely, namely,
    \begin{align*}
    \inf_{\bar{\eta}\in D}\|\widehat{\eta}_N-\bar{\eta}\|\xra{}0,~~\mbox{as}~N\xra{}\infty.
    \end{align*}
\end{lem}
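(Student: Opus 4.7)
The plan is to follow the classical M-estimator consistency argument: use uniform convergence to transfer the minimization from $M_N$ to $M$, then exploit compactness of $\Omega$ together with continuity of $M$ to conclude via a contradiction argument. Since the statement is almost sure, I would first fix a sample point $\omega$ in the probability-one event on which $\|M_N(\cdot)-M(\cdot)\|_\infty\to 0$ on $\Omega$, and argue deterministically along this fixed realization. Note that $D$ is non-empty (a continuous function on a compact set attains its minimum) and closed (it equals $M^{-1}(\{M^*\})\cap\Omega$ with $M^*:=\min_{\eta\in\Omega}M(\eta)$), so the set-distance $d(\eta,D):=\inf_{\bar\eta\in D}\|\eta-\bar\eta\|$ is well-defined and continuous.

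First I would show $M_N(\widehat{\eta}_N)\to M^*$. Pick any $\bar\eta\in D$; the minimizing property of $\widehat{\eta}_N$ gives
\[
M_N(\widehat{\eta}_N)\le M_N(\bar\eta)\to M(\bar\eta)=M^*,
\]
while uniform convergence supplies the lower bound
\[
M_N(\widehat{\eta}_N)\ge M(\widehat{\eta}_N)-\|M_N-M\|_\infty\ge M^*-\|M_N-M\|_\infty\to M^*.
\]
Combining these yields $M_N(\widehat{\eta}_N)\to M^*$, and since $|M(\widehat{\eta}_N)-M_N(\widehat{\eta}_N)|\le\|M_N-M\|_\infty\to 0$, also $M(\widehat{\eta}_N)\to M^*$.

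Next I would establish $d(\widehat{\eta}_N,D)\to 0$ by contradiction. If it fails, there exist $\epsilon>0$ and a subsequence $\{\widehat{\eta}_{N_k}\}$ with $d(\widehat{\eta}_{N_k},D)\ge\epsilon$. Compactness of $\Omega$ lets me extract a further subsequence $\widehat{\eta}_{N_{k_j}}\to\eta^\dagger\in\Omega$. Continuity of $d(\cdot,D)$ gives $d(\eta^\dagger,D)\ge\epsilon$, so $\eta^\dagger\notin D$ and consequently $M(\eta^\dagger)>M^*$. But continuity of $M$ together with $M(\widehat{\eta}_{N_{k_j}})\to M^*$ forces $M(\eta^\dagger)=M^*$, a contradiction.

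The main obstacle is essentially bookkeeping rather than depth: one must handle the set-valued $\arg\min$ cleanly (since uniqueness is not assumed, $\widehat{\eta}_N$ itself need not converge to any single point), and make sure that the whole contradiction argument is executed on the single probability-one event produced by the uniform-convergence hypothesis, independent of which contradicting subsequence one might extract. Compactness of $\Omega$ is used in two crucial places — guaranteeing the existence of a minimizer of $M$ (so $D\ne\emptyset$) and extracting a convergent subsequence of $\widehat{\eta}_{N_k}$ — while closedness of $D$ is precisely what allows the strict separation $d(\widehat{\eta}_{N_k},D)\ge\epsilon$ to survive the passage to the limit.
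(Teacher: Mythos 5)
Your proof is correct. The paper itself gives no proof of this lemma; it is imported verbatim as Theorem 8.2 of Ljung (1999), and your argument — fix a realization in the probability-one event of uniform convergence, show $M_N(\widehat{\eta}_N)\to M^*$ by sandwiching, then rule out $d(\widehat{\eta}_N,D)\ge\epsilon$ along a subsequence via compactness and continuity of $M$ — is exactly the standard consistency argument for extremum estimators that underlies the cited result, so there is nothing to reconcile. (One cosmetic remark: in the final step, the fact that $\eta^\dagger\notin D$ already follows from $d(\eta^\dagger,D)\ge\epsilon>0$ via continuity of the distance function; closedness of $D$ is needed only to guarantee that $D=M^{-1}(\{M^*\})\cap\Omega$ is attained and the infimum defining $d(\cdot,D)$ behaves as expected, which you have.)
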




\end{document}